\def\BibTeX{{\rm B\kern-.05em{\sc i\kern-.025em b}\kern-.08em
    T\kern-.1667em\lower.7ex\hbox{E}\kern-.125emX}}
\newcommand{\pnt}{p}
\newcommand{\Qinterval}{I}
\newcommand{\interval}[2]{[#1, #2]}
\newcommand{\wtv}{\textbf{u}}
\renewcommand{\Re}{\mathbb{R}}
\newcommand{\cardin}[1]{| #1 | }
\newcommand{\floor}[1]{\left \lfloor #1 \right \rfloor}
\newcommand{\ceiling}[1]{\left \lceil #1 \right \rceil}
\newcommand{\remove}[1]{}
\newcommand{\Expec}[1]{\mathbf{E}\left [ #1 \right ]}
\newcommand{\Prob}[1]{\mathbf{Pr}\left[\,#1\,\right]}
 \newenvironment{proofSk}{\trivlist\item[]\emph{Proof (Sketch)}.}%
                   {\unskip\nobreak\hskip 1em plus 1fil\nobreak%
                           \rule{2mm}{2mm}
                           \parfillskip=0pt%
                           \endtrivlist}
\newtheorem{lemma}{Lemma}
\newtheorem{example}{Example}[section]
\newcommand{\narrow}[1]{\protect\scalebox{0.7}[1.0]{\ensuremath{\textsf{#1}}}}
\renewcommand{\emptyset}{\ensuremath{\varnothing}}
\DeclareMathOperator{\polylog}{polylog}
\newcommand{\card}[1]{\ensuremath{\lvert#1\rvert}}
\newcommand{\Times}{\ensuremath{\mathbb{T}}}
\newcommand{\kskyband}[2]{
	\ensuremath{\mathfrak{S}^{#1}\big({#2}\big)}
}
\def\mparagraph#1{\par\medskip\noindent\textbf{#1.}\quad}
\newcommand{\topk}[2]{\pi_{#1}(#2)}
\newcommand{\score}{f_\wtv}
\newcommand{\DurTop}{\ensuremath{\narrow{DurTop}}}
\begin{document}

\title{Durable Top-K Instant-Stamped Temporal Records with User-Specified Scoring Functions
\\(Technical Report Version)}

\author{%
\IEEEauthorblockN{Junyang Gao\textsuperscript{\textsection}}
\IEEEauthorblockA{Google Inc. 
}
\and
\IEEEauthorblockN{Stavros Sintos\textsuperscript{\textsection}}
\IEEEauthorblockA{University of Chicago 
}
\and
\IEEEauthorblockN{Pankaj K. Agarwal}
\IEEEauthorblockA{Duke University 
}
\and
\IEEEauthorblockN{Jun Yang}
\IEEEauthorblockA{Duke University
}
}

\maketitle
\begingroup\renewcommand\thefootnote{\textsection}
\footnotetext{Most of the work was conducted when authors were at Duke University.}
\endgroup

\begin{abstract}
A way of finding interesting or exceptional records from instant-stamped temporal data is to consider their ``durability,'' or, intuitively speaking, how well they compare with other records that arrived earlier or later, and how long they retain their supremacy.
For example, people are naturally fascinated by claims with long durability, such as: \emph{``On January 22, 2006, Kobe Bryant dropped 81 points against Toronto Raptors. Since then, this scoring record has yet to be broken.''}
In general, given a sequence of instant-stamped records, suppose that we can rank them by a user-specified scoring function $f$, which may consider multiple attributes of a record to compute a single score for ranking.
This paper studies \emph{durable top-$k$ queries}, which find records whose scores were within top-$k$ among those records within a ``durability window'' of given length, e.g., a 10-year window starting/ending at the timestamp of the record.
The parameter $k$, the length of the durability window, and parameters of the scoring function (which capture user preference) can all be given at the query time.
We illustrate why this problem formulation yields more meaningful answers in some practical situations than other similar types of queries considered previously.  
We propose new algorithms for solving this problem, and provide a comprehensive theoretical analysis on the complexities of the problem itself and of our algorithms. 
Our algorithms vastly outperform various baselines (by up to two orders of magnitude on real and synthetic datasets).
\end{abstract}

\section{Introduction}\label{sec:intro}
Instant-stamped temporal data consists of a sequence of records, each timestamped by a time instant which we call the arrival time, and ordered by the arrival time.
Such data is ubiquitous in a rich variety of domains; i.e., sports statistics, weather measurement, network traffic logs and e-commerce transactions.
A way of finding interesting or unusual records from such data is to consider their ``durability,'' or, intuitively speaking, how well they compare with other records (i.e., records that arrive earlier or later) and how long they retain the supremacy.
For example, consider the performance record: ``On January 22, 2006, Kobe Bryant scored 81 points against Toronto Raptors.''
While impressive by itself, this statement can be boosted by adding some temporal context:
``At that time, this record was the top-1 scoring performance \emph{in the past 45 years of NBA history}.''
Naturally, the further back we can extend the ``durability'' (while the record still remains top), the more convincing the statement becomes.
We can extend durability forward in time as well: ``Since 2006, Kobe's 81 points scoring performance has yet to be broken as of today.''
The notion of durability is widely used in media and marketing, because people are naturally attracted by those events that ``stood the test of time.''
Such analysis of durability is a useful part of the toolbox for anybody who works with historical data,
and can be particularly helpful 
to journalists and marketers in identifying newsworthy facts and communicating their impressiveness to the public.
Because temporal data can accumulate to very large sizes (especially for granular data such as weather or network statistics), and because users often want to find durable records with respect to different ranking criteria quickly, we need to answer durable top-$k$ queries efficiently.

In this paper, we consider \emph{durable top-$k$ queries} for finding instant-stamped records that stand out in comparison to others within a surrounding time window.
In general, each record may have multiple attributes (besides the timestamp) whose values are relevant to ranking these records.
We assume that there is a user-specified scoring function $f$ that takes a record as input, potentially considers its multiple attributes, and computes a single numeric score used for ranking.
Intuitively, a durable top-$k$ query returns, given a time duration $\tau$, records that are within top $k$ during a $\tau$-length time window anchored relative to the arrival time of the record.
How the window should be positioned relative to the arrival time depends on the application; our solution only stipulates that the relative positioning is done consistently across all records.
In practice, we observe most statements in media involving durability either \emph{ends} the window at the arrival time of the record (i.e., \emph{looking back} into the past) or \emph{begins} the window at the arrival time of the record (i.e., \emph{looking ahead} into the future).
Generally speaking, each record returned by our durable top-$k$ corresponds to a statement about the record that highlights the durability of its supremacy.

\begin{figure}[t]
    \centering
    \subfloat[Rebound highlights]{\includegraphics[width=0.25\textwidth]{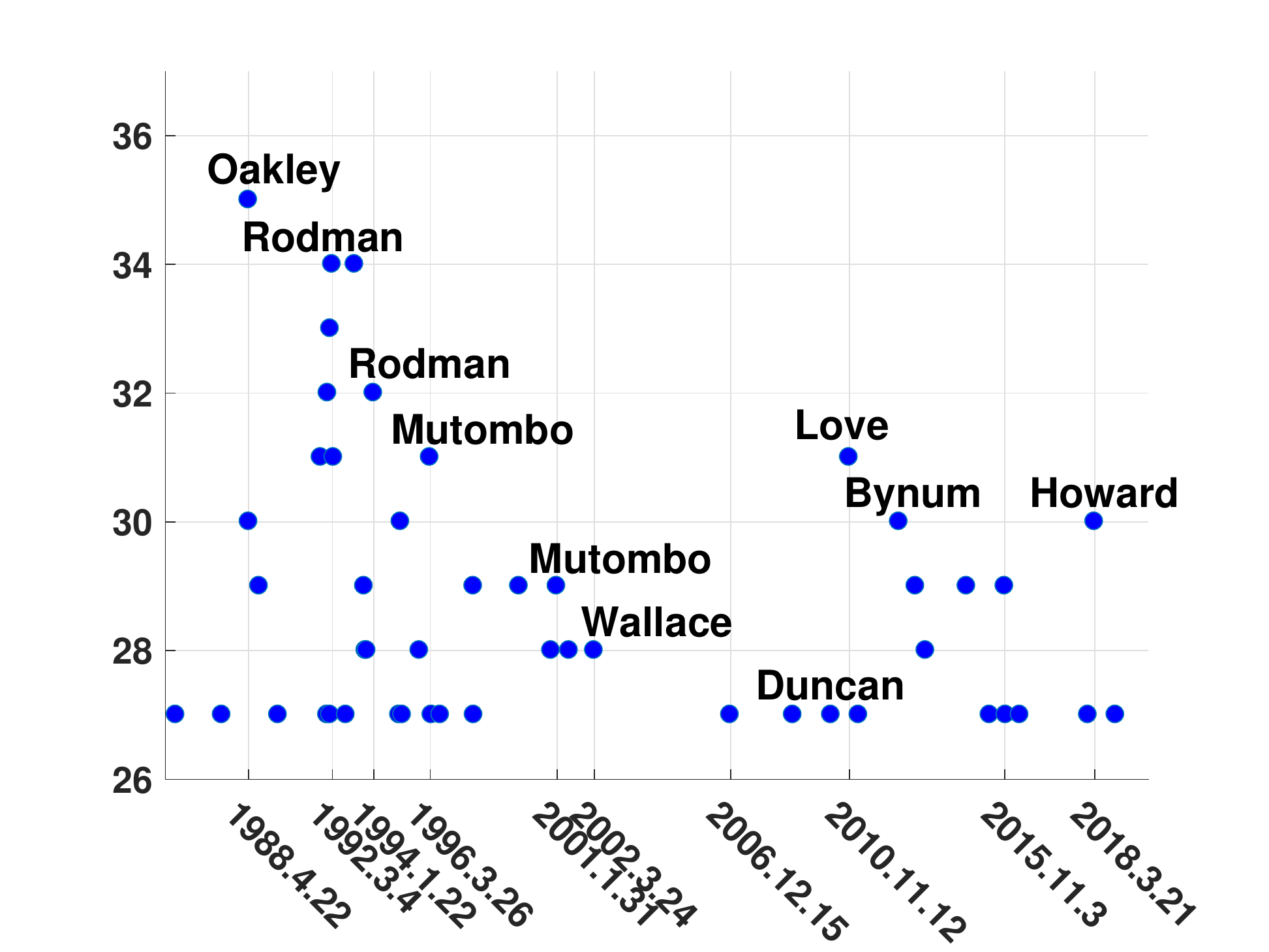}}
    \subfloat[durable top-$k$ query]{\includegraphics[width=0.25\textwidth]{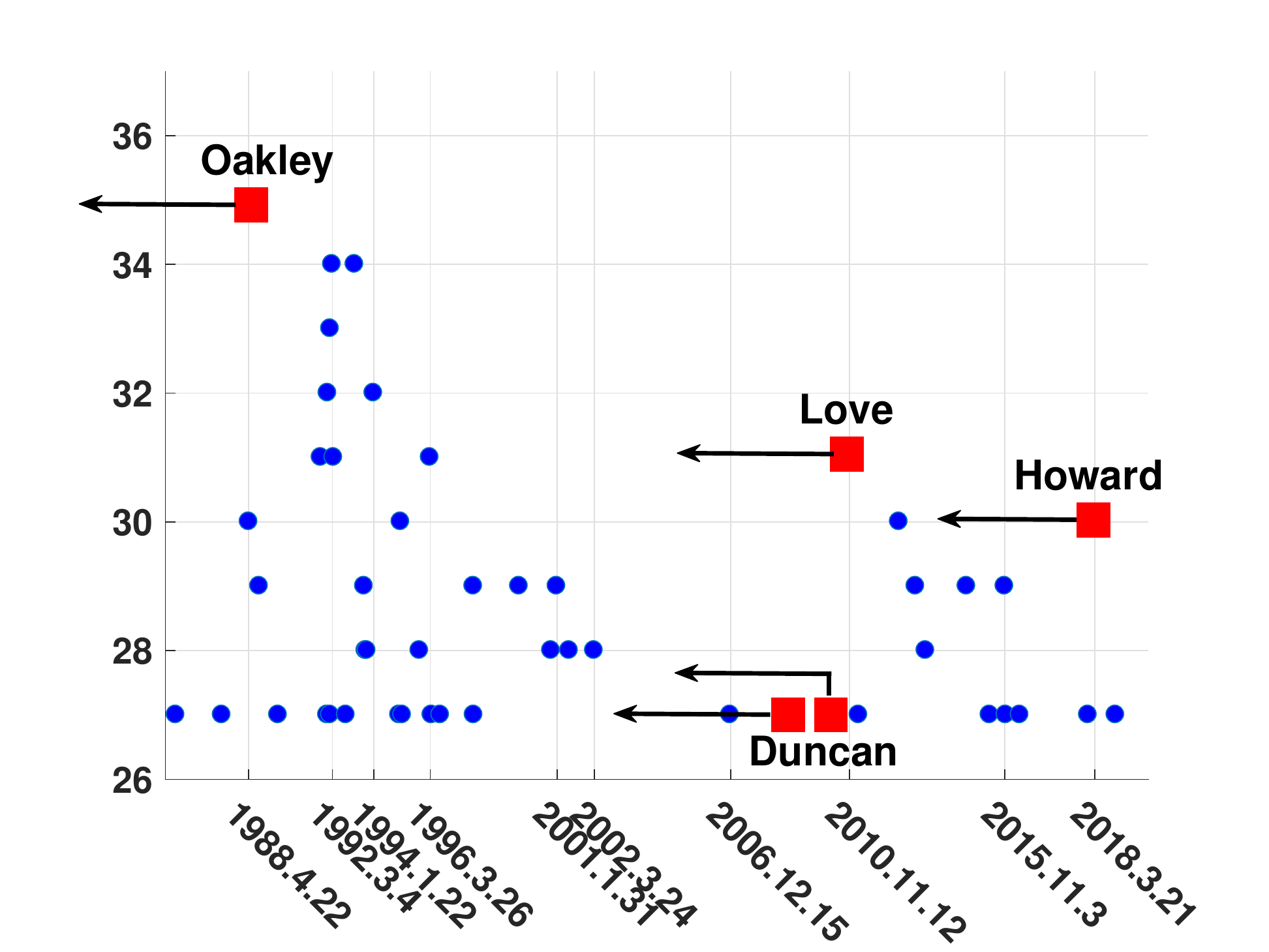}}\\\vspace*{-2.5ex}
    \subfloat[Tumbling Window Top-$k$]{\includegraphics[width=0.25\textwidth]{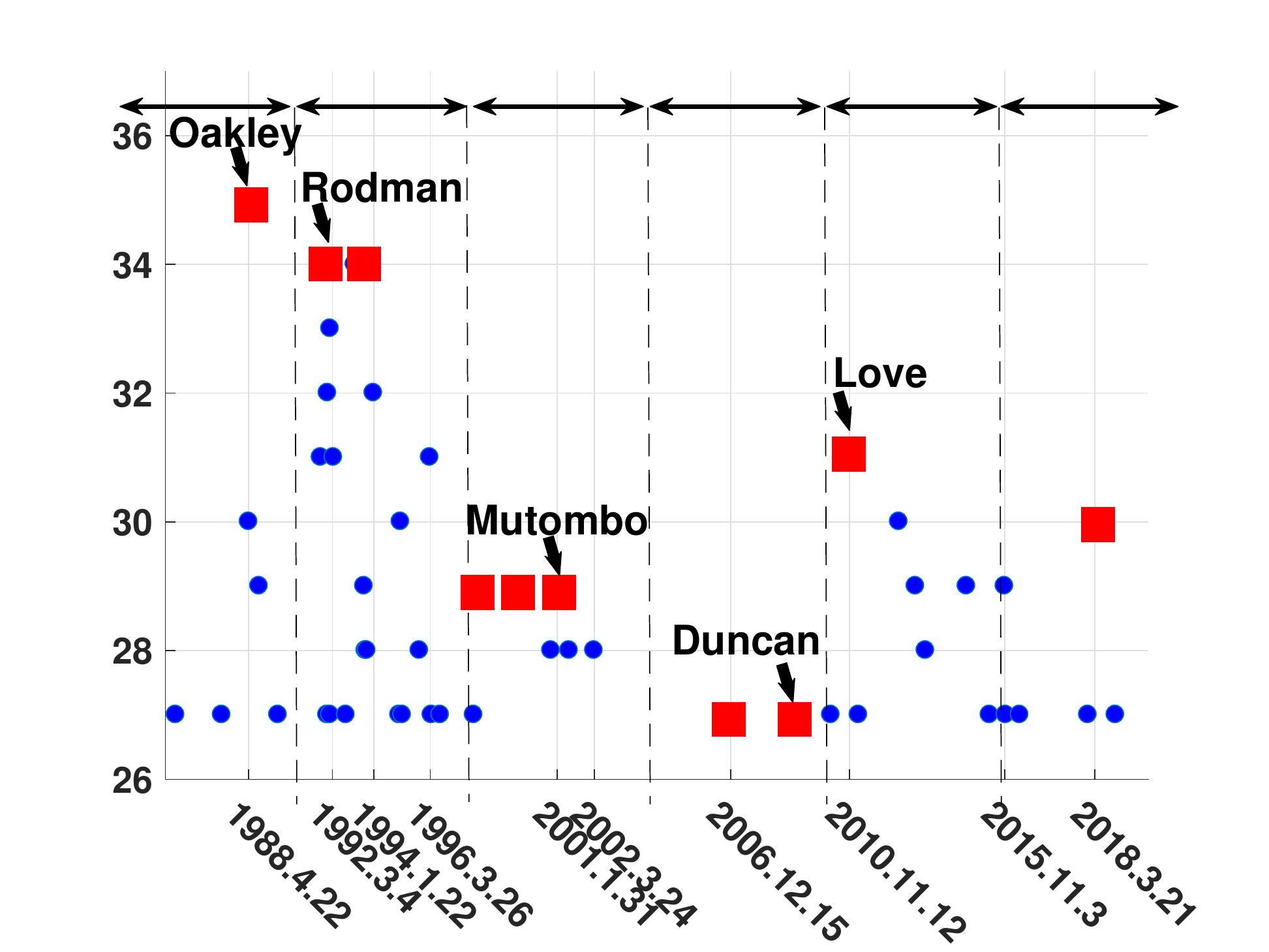}}
    \subfloat[Sliding Window Top-$k$]{\includegraphics[width=0.25\textwidth]{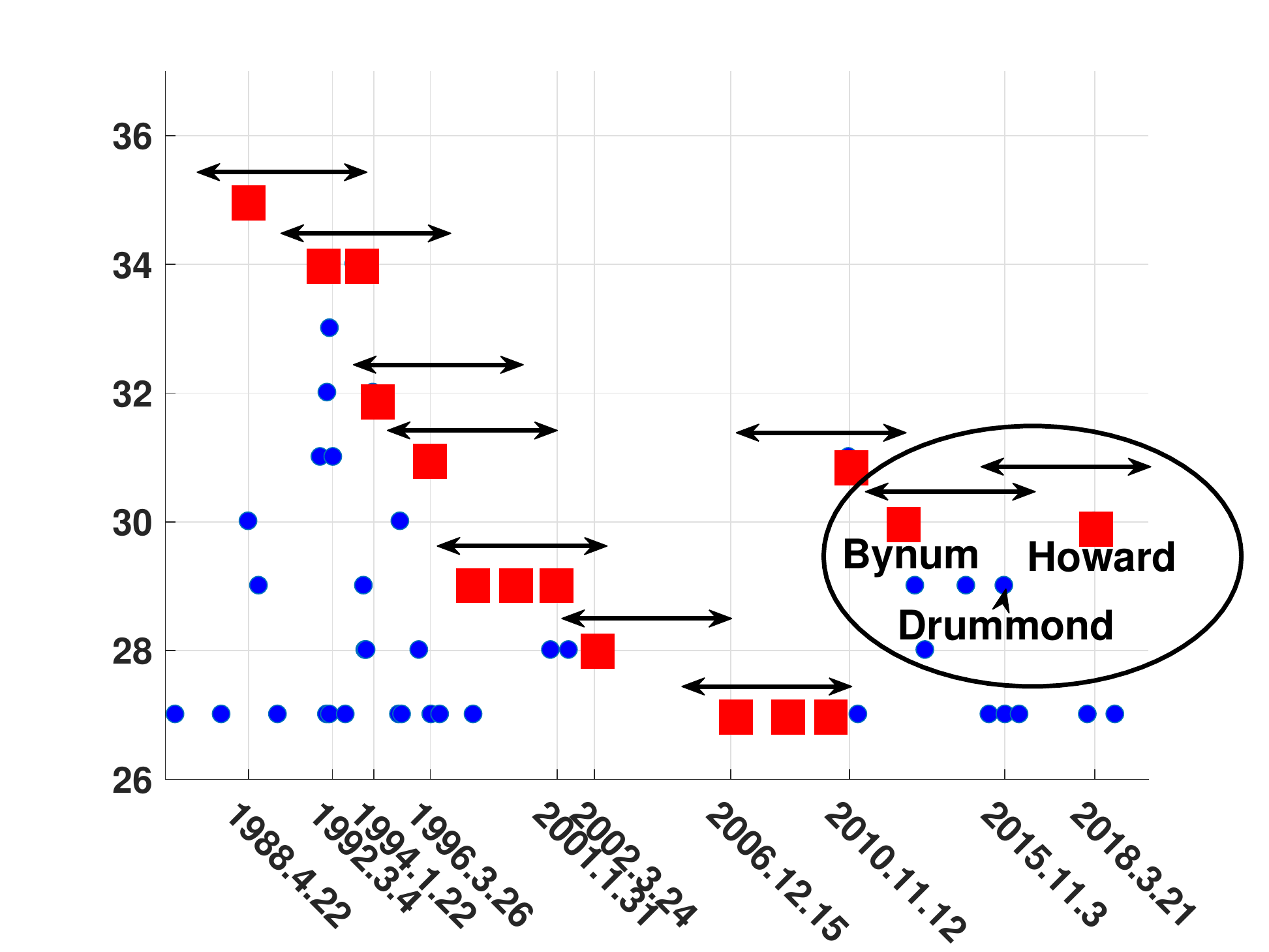}}
    \caption{A case study on finding durable noteworthy rebound performances in NBA history. Red squares highlight results returned by different queries, and line segments represent the durability time window.}
    \label{fig:nba_3pm}
\end{figure}

Note that there are different ways for capturing the notion of durability in queries, including some types that have been studied in the past.
Different application scenarios may call for different semantics.
To understand why our definition of durable top-$k$ queries may be more appropriate than others in some scenarios, we examine the alternatives with a simple concrete example.
\begin{example}\itshape\label{example:1}
Suppose we are interested in finding exceptional rebounds performances (by individual players in individual games) in NBA history---particularly, those that stood out as the top record (or tying for the top record) in a 5-year time span.
Figure~\ref{fig:nba_3pm}.(1) plots all relevant records (i.e., no fewer than 27 rebounds by a single player in a single game) in entire NBA history.
We consider the following three queries to accomplish our task; the latter two have been widely studied in the stream processing and top-$k$ query processing literature.
Note that in this example $k = 1$.
\begin{itemize}[leftmargin=*]
    \item \textbf{Durable top-$k$ (our query)}: This is the query that we propose. 
    For each record, we look back in a 5-year window ending at the timestamp of the record, and check whether the record has the top score among all records within this window.
    Figure~\ref{fig:nba_3pm}.(2) highlights the records (red squares) returned by our query; for each result record, we also show its 5-year durability window as a line segment ending at the record for which it remains on the top.
    \item \textbf{Tumbling-window top-$k$}:
    This query first partitions the timeline into a series of non-overlapping, fixed-sized (5-year) windows, and then returns the top record within each time window.
    The placement of the windows is up to the user and can affect results.
    Results for one particular placement of the windows are shown in Figure~\ref{fig:nba_3pm}.(3).
    \item \textbf{Sliding-window top-$k$}:
    This query slides a 5-year window along the timeline, and returns the top record for each position of the sliding window.
    Figure~\ref{fig:nba_3pm}.(4) highlights a few representative sliding windows, as well as the top records during these windows.
\end{itemize}
All these queries are able to uncover some meaningful durable top records; i.e., for any data record $(X,Y,Z)$ marked as a red square in Figure~\ref{fig:nba_3pm}, we can claim ``player $X$ grabbed $Y$ rebounds in a game on date $Z$, which is the best in \emph{\bfseries some} 5-year span.''
First, the durability aspect adds to the impressiveness of the statement.
Second, the combination of durability and ranking helps reveal interesting records that would otherwise be ignored if we simply filter the records by a high absolute value.
For instance, all three queries find (Duncan, 27, 2009) as a durable top-1 record.
While this record may not seem impressive by number alone, it was indeed the top-1 from 2002 to 2010.
This is an interesting observation, as it reflects a trend (relatively low rebounds of all players) during that era of NBA.

However, there are also notable differences.
\begin{itemize}[leftmargin=*]
\item \textbf{Tumbling-window vs.\ our query}:
The general observation is that the results of tumbling-window are highly sensitive to the choice of window placement.
In Figure~\ref{fig:nba_3pm}.(3), tumbling-window picks (Mutombo, 29, 2001) and the other two performances with 29 rebounds as they were the best ones during 2000-2005, but there were more impressive performances right before them, unfortunately leaving the impression that they stood out only because the windows were cherry-picked.
Furthermore, if we choose to place all windows slightly to the right such that the last window ends with the most recent arrival time, (Rodman, 34, 1992) will be eliminated by (Oakley, 35, 1988), and (Duncan, 27, 2009) will be overlooked since it is shadowed by (Love, 31, 2010).
Overall, because of high sensitivity to window boundaries, tumbling-window runs the risk of omitting important records as they happen to be overshadowed by some other records in the same window, and picking less interesting records as they happen to be the top ones in that specific window.

%
\item \textbf{Sliding-window vs.\ our query}:
Sliding-window is not susceptible to window placement, but it effectively considers all possible window placements, and it returns the union of all top records for each such placement.
This approach leads to possibly many records that are not as meaningful in practice.
In Figure~\ref{fig:nba_3pm}.(4), sliding-window apparently returns overwhelmingly more results compared to our query, which makes it less applicable to mining most noteworthy records.
Even more unnatural is the fact that as we slide the window along the timeline, a record can come in and out of the result; i.e., there is no continuity.
To illustrate, suppose we are interested in durable top-2 records with 5-year windows, and let us focus on Drummond's 29 rebounds performance on 2015.11.3 (highlighted in Figure~\ref{fig:nba_3pm}.(4)).
It is surrounded by two top performance (Howard, 30, 2018) and (Bynum, 30, 2013).
Sliding-window will return this record when the window is positioned at 2014-2019, but \emph{not} when positioned at 2013-2018;
however, the record will be returned again when the window moves to 2012-2017.
Such discontinuity makes the results rather unnatural to interpret.
\end{itemize}
In comparison, our query does not have the issue of sensitivity to window placement or that of difficulty of interpretation, because we assess each record in a 5-year window that leads up to its own timestamp.
Thus, our query result records can be consistently interpreted as having durability ``within the past 5 years'' and clearly communicated to the audience.
The results from the other two queries would be qualified with rather specific durability windows,%
\footnote{
A related question is whether we can post-process the results of the sliding-window query to obtain the results to our query; e.g., filtering those result records in Figure~\ref{fig:nba_3pm}.(4) to get those in Figure~\ref{fig:nba_3pm}.(2).
Unfortunately, such an approach, which we consider as one of the baselines in our experiments, is prohibitively slow on large datasets, as we shall show in later sections.
}
which may be perceived as cherry-picking.
In general, we argue that consistency and simplicity of our query make it more applicable to journalists, marketers, and data enthusiasts alike who seek result that are easily explainable to the public.
\end{example}
In comparison, our query does not have the issue of sensitivity to window placement or that of difficulty of interpretation, because we assess each record in a 5-year window that leads up to its own timestamp.
Thus, our query result records can be consistently interpreted as having durability ``within the past 5 years'' and clearly communicated to the audience.
The results from the other two queries would be qualified with rather specific durability windows,
which may be perceived as cherry-picking.

Although the above example ranks records by a single attribute, its argument can be extended to the general case where records are ranked by a user-specified scoring function that combines multiple attribute values into a single score.

Besides sports, durable top-$k$ queries have applications across many other domains.
For instance, Wikipedia states that ``In late January 2019, an extreme cold wave hit the Midwestern United States, and brought the coldest temperatures in the past 20 years to most locations in the affected region, including some all-time record lows.''
This statement stems from a simple durable top-$k$ query over historical weather data, and allows the Wikipedia article to convey the severity of event effectively.
As an example involving more complex ranking, cybersecurity analysts rely on network traffic log to identify unusual and potentially malicious intrusions.
With a appropriately defined scoring function that combines multiple features of a session, such as duration, volume of data transfer, number of login attempts, and number of servers accessed, a durable top-$k$ query can quickly help identify unusual traffic (relative to others around the same time) for further investigation.
As another example, a financial broker may accompany a recommendation with a statement ``The price-to-earnings ratio (P/E) of this stock last Friday was among the top 5 P/E's within its section for more than 30 days,'' which is also a durable top-$k$ query.
In sum, the efficiency of durable top-$k$ queries makes them suitable for using large volumes of historical efficiently to drive insights or identify leads for further investigation;
the conceptual simplicity of these queries also make them particular attractive for explaining insights and communicating them effectively to the public.

\mparagraph{Contributions}
Our contributions are as follows:
\begin{itemize}[leftmargin=*]
    \setlength{\itemsep}{0em}
    \item We propose to find ``interesting'' records from large instant-stamped temporal datasets using durable top-$k$ queries.
    Compared with other query types related to durability, our query produces results that are more robust (i.e., less sensitive to window placement than tumbling-window) and more meaningful (i.e., easier to interpret than sliding-window).

    \item We propose a suite of solutions based on two approaches that process ``promising'' records in different prioritization orders.
    We provide a comprehensive theoretical analysis on complexities of the problem and of our proposed solutions.
    
    \item Our solutions are general and flexible.
    They do not dictate any specific scoring function $f$, but instead assume a well-defined building block for answering top-$k$ queries using $f$, which can be ``plugged into'' our solutions and analysis.
    We give some concrete example of $f$ and the building block in later sections.
    In particular, $f$ can be further parameterized according to user preference; these parameters, along with $k$, $\tau$ and $\Qinterval$ (the overall temporal range of history of interest), can be specified at query time, making our solutions flexible and suitable for scenarios where users may explore parameter setting at run-time, interactively or automatically.
    
    \item We show that the query time complexity of our algorithms is proportional to $O(\card{S} + k\big\lceil\tfrac{\card{\Qinterval}}{\tau}\big\rceil)$ in the worst case, where $\card{S}$ is the answer size.
    Furthermore, we prove that the expected answer size of a durable top-$k$ query $\card{S}$ is $O(k\big\lceil\tfrac{\card{\Qinterval}}{\tau}\big\rceil)$ under the random permutation model (where the data values can be arbitrarily chosen by an adversary but arrival order is random);
this result implies that the expected query time of our algorithms in practice is linear in the output size. 
    
\end{itemize}

\mparagraph{Paper Overview}
In a nutshell, our proposed algorithms 1) visit promising records in some manner, and 2) check the durability (with respect to a top-$k$ query) for each record we visit.
Techniques for improvement mostly focus on how to efficiently identify candidate records and eventually reduce the total number of durability checks in the second step.
Our proposed algorithms come in two flavors: time-prioritized and score-prioritized, introduced in Section~\ref{sec:time-solution} and Section~\ref{sec:weight-solution}, respectively.
The \textbf{\emph{time-prioritized}} solution traverses and finds candidate records sequentially along the timeline, while the \textbf{\emph{score-prioritized}} solution greedily chooses unvisited candidates with the maximum score (with respect to $f$).
Though in different manners, we show in later sections that these two solutions actually equivalently reduce and bound the size of candidate records (or, the number of durability checks).
More interestingly, in Section~\ref{sec:answer-size}, we further demonstrate that the bound is proportional to the answer size of a durable top-$k$ query, which means our algorithms run faster when the query is more selective, e.g., with smaller $k$ or longer durability $\tau$.
Section~\ref{sec:expr} experimentally evaluates our proposed solutions, including implementations inside a database system.
Section~\ref{sec:related-work} reviews related work and Section~\ref{sec:conclusion} concludes.

\remove{
In this paper, we are more interested in top-$k$ queries on a subset of data specified by a time window $W$ given at query time; i,e., computing $Q(k, W)$ that reports the $k$ records in $P(W)$ with the highest scores with respect to $f$.
With a slight care, the solution for general top-$k$ queries can be adapted to solve this problem by paying a logarithmic factor in index size, query time and construction time.
That is, considering the class of preference scoring functions, we can construct an index of size $O(s(n)\log n)$ in $O(u(n)\log n)$ time so that for given $\wtv, k, W$, $Q_\wtv(k, W)$ can be computed in $O((q(n)+k)\log n)$ time, where $Q_\wtv$ denotes that the ranking is done with respect to $f_\wtv$.
\footnote{There are two technicalities.  First, if two (or more) records $p, q\in P$ have $\score(p)=\score(q)$ for a preference vector $\wtv$, we can break the tie by ranking the earlier arriving record higher. In this case we have to slightly modify our index for answering a top-$k$ query in a window $W$.
All our algorithms work even if there are ties on score; however, for simplicity we assume throughout the paper that given a preference vector $\wtv$ no two records have the same score.
Second, for simplicity, we defined $\Times$ as a discrete domain with $n$ discrete possible values, where $n=\card{P}$, assuming that for each timestamp $\hat{t}\in \Times$ there exists a unique $p\in P$ such that $p.t=\hat{t}$. All our algorithms still work, without increasing their complexities, if $\Times$ is a continuous interval in $\Re$ and $p.t$ takes a real value in $\Times$, for $p\in P$. In this more general setting, it might be the case that two (or more) records $p, q\in P$ have the same arriving time, $p.t=q.t$. Again, this case can be handled by our algorithms with straightforward adjustments.
}
}
\section{Problem Statement and Preliminaries}\label{sec:ps}

\begin{table}
\scriptsize
\caption{Table of notation}
\label{Table:Notation}
\centering
 \begin{tabular}{|c|c|} 
 \hline
 $\Times$ & Time domain\\ \hline
 $p.t$ & Arrival time of $p$\\ \hline
 $f$ & Scoring function\\ \hline
 $k$ & Parameter of Top-$k$ query\\ \hline
 $\pi_{\leq k}([t_1, t_2])$ & Top-$k$ records in time interval $[t_1,t_2]$\\ \hline
 $\Qinterval$ & Query interval\\ \hline
 $\tau$ & Durability duration\\ \hline
 $u$ & Query vector\\ \hline
 $s(n), q(n)$ & Space, query time of top-$k$ index\\ \hline
\end{tabular}
\end{table}

\mparagraph{Problem Statement}
Consider a dataset $P$ with $n$ records, where each record $p \in P$ has $d$ real-valued attributes 
and is represented as a point $(p.x_1, p.x_2, \dots, p.x_d) \in \mathbb{R}^d$.  
For simplicity, we consider a discrete time domain of interest $\Times = \{1,2, \dots, n\}$, and let $p.t \in \Times$ denote the \emph{arrival time} of $p$.
All records in $P$ are organized by increasing order of their arrival time.
Given a non-empty time window $W : [t_1, t_2] \subseteq \Times$,
let $P(W)$ denote the set of records that arrive between $t_1$ and $t_2$; i.e., $P(W) = \{p \in P \mid t_1 \leq p.t \leq t_2\}$. 

Assume a user-specified scoring function maps each record $p$ to a real-valued score, $f : \Re^d \rightarrow \Re$.
Given a time window $W = \interval{t_1}{t_2}$, a \emph{top-$k$ query} $Q(k,W)$ asks for the $k$ records from $P(W)$ with the highest scores with respect to $f$.
Let $\topk{\leq k}{[t_1,t_2]}$ denote the result of $Q(k,W)$; i.e., for $\forall p \in \topk{\leq k}{[t_1,t_2]}$, there are no more than $k-1$ records $q \in P([t_1,t_2])$ with $f(q) > f(p)$.


For simplicity of exposition, we consider durability windows ending at the arrival time of each record (i.e., the ``looking-back'' version), but our solution can be extended to the general case where the windows are anchored consistently relative to the arrival times (including the ``looking-ahead'' version).
We say a record $p$ is $\tau$-durable\footnote{If $\tau$ is obvious from the context, we drop $\tau$ from the definition, i.e., we say that a record is durable.} if $p \in \topk{\leq k}{[p.t - \tau,p.t]}$.
That is, $p$ remains in the top-$k$ for $\tau$ time during $[p.t-\tau,p.t]$.
We are interested in finding records with long durability.
Note that if a record $p$ is $\tau$-durable, then it is also $\tau'$-durable for $\tau'\leq \tau$.
We are interested in finding records with ``long enough'' durability, i.e., durability at least $\tau$.
Given a query interval $\Qinterval$ and a durability threshold $\tau\in [1, \card{\Times}]$, a \emph{durable top-$k$} query, denoted $\DurTop(k,\Qinterval, \tau)$, returns the set of $\tau$-durable records that arrive during $\Qinterval$; 
i.e., $\DurTop(k, \Qinterval, \tau) = \{p \in P(\Qinterval) \mid p \in \topk{\leq k}{[p.t - \tau,p.t]} \}$.
For a record $p\in \DurTop(k, \Qinterval, \tau)$ we can also ask what is the maximum duration that it remains in the top-$k$.
Table~\ref{Table:Notation} summarized our notations.

\mparagraph{Scoring Function and Top-$k$ Query Building Block}\label{sec:pre:query-primitives}
As discussed earlier, our proposed algorithms and complexity analyses are applicable to any user-specified scoring function $f$ as long as there exists a ``building block'' that can answer basic (non-durable) top-$k$ queries under $f$.
This building block can be a ``black box'': the novelty and major contribution of our algorithms come from its ability to reduce and bound the number of invocations of the building block, totally independent of how the building block operates itself.
Of course, the overall algorithm complexity still depends on the efficiency of the building block.
For a function $f$, we consider that an index of size $O(s(n))$ can be constructed in $O(u(n))$ time that answers top-$k$ queries with respect to $f$ in $O(q(n)+k)$ time, where $n$ is the data size and $s(\cdot), u(\cdot), q(\cdot)$ are functions of $n$.

In this paper, we are more interested in top-$k$ queries on a subset of data specified by a time window $W$ given at query time; i,e., computing $Q(k, W)$ that reports the $k$ records in $P(W)$ with the highest scores with respect to $f$.
With a slight care, the top-$k$ query building block can be used to solve this problem by paying a logarithmic factor in index size, query time and construction time.
That is, for a function $f$ we  can construct an index of size $O(s(n)\log n)$ in $O(u(n)\log n)$ time so that for given $k, W$, $Q(k, W)$ can be computed in $O((q(n)+k)\log n)$ time.
If the top-$k$ building block supports updates (insertion/deletion of an item) in $O(\alpha(n))$ time, our range top-$k$ index also supports updates in $O(\alpha(n)\log n)$ time.

Here, we give some concrete examples of $f$ that are widely used in real-life applications, for which efficient top-$k$ query building blocks exist.
Consider the following class of scoring functions parameterized by $\wtv$, which captures user preference:
\begin{itemize}[leftmargin=*]
\setlength\itemsep{0em}
    \item \emph{linear}: $\score(p) =  \sum_{i=1}^{d} \wtv_i \cdot p.x_i$,
    \item \emph{linear combination of monotone scoring functions}: $\score(p) = \sum_{i=1}^{d} \wtv_i \cdot h(p.x_i)$, where $h$ is a monotone function; i.e., $h(\cdot) = \log(\cdot)$,
    \item \emph{cosine}: $\score(p) = \frac{1}{\card{p}\card{\wtv}} \sum_{i=1}^{d} \wtv_i \cdot p.x_i$,
\end{itemize}
where $\wtv$ is a real-valued preference vector and $f_\wtv$ denotes that the scoring function $f$ is parameterized by $\wtv$.
We refer to this class of functions as \emph{preference functions}.
Top-$k$ queries using such class of scoring functions (preferably in the above three forms) have been well studied over the past decades both in computational geometry \cite{afshani2009optimal, chazelle1985power, matousek1992reporting, agarwal2000efficient, agarwal1995dynamic, chan2012three} and databases \cite{chang2000onion, yi2003efficient, hristidis2004algorithms, ilyas2008survey}.
For example, for preference functions above, there is an index with $u(n)=O(n)$, $s(n)=O(n)$, and $q(n)=O(n^{1-1/\floor{d/2}})$, skipping $\polylog (n)$ factors.
Using the results in~\cite{agarwal1995dynamic}, updates can also be supported in $\alpha(n)=O(\polylog (n))$ time.

As mentioned above, users can replace the scoring block with other functions (i.e., non-linear or non-monotone). 
The centerpiece of our algorithm and analysis, which bounds the \emph{number of invocations} of the top-$k$ query building block, remains unchanged.
But in that case, the complexity of the building block will affect the overall complexity bound.
We choose these functions because 1) they are widely used in real-life applications that require ranking and 2) they are both linear and monotone, so preference top-$k$ can be efficiently answered (using the same index).

\mparagraph{Sliding-Windows and Baseline Solution}
\textcolor{black}{
Recall from the discussion in Example~\ref{example:1} (Figures~\ref{fig:nba_3pm}-(2) and \ref{fig:nba_3pm}-(4)) that there is a connection between our problem and the sliding-window version, which has been well studied~\cite{mouratidis2006continuous, jin2008sliding,das2007ad}.
Indeed, one of our baseline solution is adopted from ~\cite{mouratidis2006continuous} with incremental top-$k$ maintenance over sliding windows\footnote{In particular, the idea of Skyband Maintanence Algorithm (SMA) to reduce the number of top-$k$ re-computations from scratches.}.
However, the standard sliding-window technique is more suitable for data streams, where incoming data must be scanned linearly anyway.
Instead, our query analyzes historical data. The linear complexity of sliding windows becomes infeasible especially when dealing with large datasets.
The limitation hence motivates our solutions in later sections.
Experimental results demonstrate our algorithms' significant efficiency gain (up to 2 orders of magnitude) over sliding-window baselines. 
}

\mparagraph{Duration of durable top-$k$ records}
When an algorithm finds a record $p$ in $\DurTop(k, \Qinterval, \tau)$, we can also get the maximum duration (in history) that it remains in the top-$k$. We do it by running a binary search with respect to the arrival times of the records back in history. For each step of the binary search we ask a top-$k$ query to check if $p$ is still in the top-$k$ records. The correctness follows from the observation that if a record is $\tau'$-durable then it is also $\tau$-durable for any $\tau<\tau'$. The binary search has $O(\log n)$ steps and each top-$k$ query takes $O(q(n))$ time. For all records in $|\DurTop(k, \Qinterval, \tau)|$ this procedure takes $O(|\DurTop(k, \Qinterval, \tau)|\cdot q(n)\log n)$ time. Notice that this procedure is independent of the algorithm we use to find the $\tau$-durable records in $I$, so it can be applied in the end of all the algorithms we propose in the next sections (without increasing their total running time).
\section{Time-Prioritized Approach}\label{sec:time-solution}
The time-prioritized approach is straightforward: we visit records in time order and check their durability.
We start with a baseline approach (Section~\ref{sec:time-solution:sw}) and propose an improved version (Section~\ref{sec:time-solution:esw}) using the observation that we can skip many unpromising records in practice.
What is more interesting is how this simple improvement leads to provably substantial reduction in complexity (Section~\ref{sec:time-solution:analysis}).

\subsection{Time-Baseline Algorithm}\label{sec:time-solution:sw}
 
We start with a baseline solution, referred to as \emph{Time-Baseline} or \emph{T-Base}.
T-Base shares the same spirit as the solution proposed in ~\cite{mouratidis2006continuous}, where authors studied the problem on how to continuously monitor top-$k$ queries over the most recent data in a streaming setting.
The main idea is to incrementally maintain the top-$k$ set over continuous sliding windows.
We start with the right endpoint of query interval, and sequentially slide a $\tau$-length window backwards along the timeline.
For each sliding window $[t-\tau, t]$, we need the top-$k$ result to check whether the record (arriving at time $t$) is $\tau$-durable.  
With two adjacent windows $W_1 = [t-\tau, t]$ and $W_2 = [t-\tau-1, t-1]$, top-$k$ results could be updated incrementally, if the expired record (e.g., $P[t]$) is not a top-$k$ on $W_1$.
Otherwise, we need to compute the top-$k$ on window $W_2$ from scratch to guarantee correctness.
The procedure repeats until we visit all records in the query interval $\Qinterval$.

Next, we analyze the query time complexity of T-Base.
There are only two types of records: durable or non-durable.
After visiting each durable record, we need to issue a top-$k$ query.
After visiting each non-durable record, we only need to incrementally update the current top-$k$ set with new incoming record in $O(\log k)$ time.
Assuming a top-$k$ query can be answered in $O\big((q(n) + k)\log n\big)$ time, then T-Base runs in $O\big(\card{S}(q(n) + k)\log n + n\log k)\big)$, where $\card{S}$ is the answer size.
This algorithm takes super-linear time (on the number of records in the query interval). 
Next, we show a solution with sub-linear query time.

\subsection{Time-Hop Algorithm}\label{sec:time-solution:esw}
\begin{figure}
    \centering
    \scalebox{0.8}{\tikzset{every picture/.style={line width=0.75pt}} 

\begin{tikzpicture}[x=0.75pt,y=0.75pt,yscale=-1,xscale=1]

\draw    (79.3,96.2) -- (325.3,98.98) ;
\draw [shift={(327.3,99)}, rotate = 180.65] [color={rgb, 255:red, 0; green, 0; blue, 0 }  ][line width=0.75]    (10.93,-3.29) .. controls (6.95,-1.4) and (3.31,-0.3) .. (0,0) .. controls (3.31,0.3) and (6.95,1.4) .. (10.93,3.29)   ;

\draw  [color={rgb, 255:red, 0; green, 0; blue, 0 }  ,draw opacity=1 ][fill={rgb, 255:red, 0; green, 0; blue, 0 }  ,fill opacity=1 ] (279.27,83.21) .. controls (279.28,80.53) and (281.45,78.37) .. (284.12,78.38) .. controls (286.79,78.39) and (288.95,80.56) .. (288.94,83.23) .. controls (288.94,85.9) and (286.76,88.06) .. (284.09,88.06) .. controls (281.42,88.05) and (279.26,85.88) .. (279.27,83.21) -- cycle ;
\draw   (284.3,26.2) .. controls (284.3,21.53) and (281.97,19.2) .. (277.3,19.2) -- (213.3,19.2) .. controls (206.63,19.2) and (203.3,16.87) .. (203.3,12.2) .. controls (203.3,16.87) and (199.97,19.2) .. (193.3,19.2)(196.3,19.2) -- (134.3,19.2) .. controls (129.63,19.2) and (127.3,21.53) .. (127.3,26.2) ;
\draw  [line width=3.75]  (258.93,92.36) -- (276.33,107.99)(275.74,93.52) -- (259.53,106.84) ;
\draw   (248.3,40.2) .. controls (248.3,35.53) and (245.97,33.2) .. (241.3,33.2) -- (177.3,33.2) .. controls (170.63,33.2) and (167.3,30.87) .. (167.3,26.2) .. controls (167.3,30.87) and (163.97,33.2) .. (157.3,33.2)(160.3,33.2) -- (98.3,33.2) .. controls (93.63,33.2) and (91.3,35.53) .. (91.3,40.2) ;
\draw  [fill={rgb, 255:red, 208; green, 2; blue, 27 }  ,fill opacity=1 ] (242.89,58) -- (253,58) -- (253,68.11) -- (242.89,68.11) -- cycle ;
\draw  [fill={rgb, 255:red, 208; green, 2; blue, 27 }  ,fill opacity=1 ] (194.89,46) -- (205,46) -- (205,56.11) -- (194.89,56.11) -- cycle ;
\draw  [fill={rgb, 255:red, 208; green, 2; blue, 27 }  ,fill opacity=1 ] (147.89,61) -- (158,61) -- (158,71.11) -- (147.89,71.11) -- cycle ;
\draw  [dash pattern={on 0.84pt off 2.51pt}]  (284,24.2) -- (284,100.2) ;

\draw  [dash pattern={on 0.84pt off 2.51pt}]  (248,44.2) -- (248.3,99.2) ;

\draw  [dash pattern={on 0.84pt off 2.51pt}]  (152.95,66.05) -- (153.3,99.2) ;

\draw  [dash pattern={on 0.84pt off 2.51pt}]  (200.3,52.2) -- (200.3,98.2) ;

\draw    (79.3,96.2) -- (79.3,23.2) ;
\draw [shift={(79.3,21.2)}, rotate = 450] [color={rgb, 255:red, 0; green, 0; blue, 0 }  ][line width=0.75]    (10.93,-3.29) .. controls (6.95,-1.4) and (3.31,-0.3) .. (0,0) .. controls (3.31,0.3) and (6.95,1.4) .. (10.93,3.29)   ;

\draw (330,110) node  [align=left] {$\displaystyle t$};
\draw (212,8) node  [align=left] {$\displaystyle \tau $};
\draw (284,118) node  [align=left] {$\displaystyle t_{1}$};
\draw (251,118) node  [align=left] {$\displaystyle t_{2}$};
\draw (201,118) node [align=left] {$\displaystyle t_{3}$};
\draw (155,117) node [align=left] {$\displaystyle t_{4}$};
\draw (175,25) node  [align=left] {$\displaystyle \tau $};
\draw (66,43) node [rotate=-270.65] [align=left] {score};

\end{tikzpicture}}
    \caption{Data skipping in Time-Hop Algorithm.}
    \label{fig:asw}
\end{figure}

It is not hard to see that the durable top-$k$ query can be viewed as an \emph{offline} version of the top-$k$ query in the sliding-window streaming model.
Hence, the baseline algorithm introduced above does not best serve our needs.
Since the entire data is available in advance, the manner of continuous sliding window wastes too much time on those non-durable records. 
After all, a meaningful durable top-$k$ query should be selective.

Before describing the algorithm, we illustrate the main idea using an example for $k=3$, shown in Figure~\ref{fig:asw}.
By running a top-3 query $Q(3, [t_1-\tau, t_1])$,
consider the record $p$ arriving at $t_1$ (black circle) is not $\tau$-durable; i.e., $p \not\in \topk{\leq 3}{[t_1-\tau, t_1]}$.
We know the current top-$3$ set contains records (red squares) that arrive at $t_4, t_3$ and $t_2$.
Then, no records arriving between $t_2$ and $t_1$ would be $\tau$-durable and
we can safely hop from $t_1$ to $t_2$.
This simple and useful observation simplifies the query procedure, and allows larger strides for sliding windows.

Now, we present our algorithm \emph{Time-Hop} (T-Hop) (the pseudocode can be found in Algorithm~\ref{algo:asw}).
For each record we visit with timestamp $t_i$, we run a top-$k$ query in $[t_i-\tau, t_i]$ (Line 4).
If the record is not durable, we slide the window back to the \emph{most recent} arrival time of records, say $t_j$, in the current top-$k$ set (Line 9), skipping the non-durable records between $t_j$ and $t_i$.
Otherwise, if a durable record is found, we slide the window backwards by 1 (Line 7) as usual.
Note that if we adopt the look-ahead version of durability, we just need to reverse the traversal order (and time-hopping) on timeline as well.

\begin{algorithm}[t]\small
 \KwIn{$P$, $k$, $\tau$, and $\Qinterval:[t_1,t_2]$.}
 \KwOut{$\DurTop(k, \Qinterval, \tau)$}
 Initialize answer set: $S \leftarrow \emptyset$, top-$k$ set: $\pi_{\leq k} \leftarrow \emptyset$\;
 $t_{curr} \leftarrow t_2$\; 
 
 \While {$t_{curr} >= t_1$} {
   $\pi_{\leq k} \leftarrow Q(k, [t_{curr} - \tau, t_{curr}])$\;
   \uIf {$P[t_{curr}] \in \pi_{\leq k}$} {
   $S \leftarrow S \cup P[t_{curr}]$\;
   $t_{curr} \leftarrow t_{curr}-1$\;
   }
   \Else {
   $t_{curr} \leftarrow$ most recent arrival time of records in $\pi_{\leq k}$\;
   }
 }
 \KwRet S\;
 \caption{T-Hop $(k, \Qinterval, \tau)$\label{algo:asw}}
\end{algorithm}

\subsection{Complexity Analysis of T-Hop}\label{sec:time-solution:analysis}
For the Time-Hop algorithm, the time complexity purely depends on the number of top-$k$ queries called in the query procedure.
We provide a worst-case guarantee on the number of top-$k$ queries performed, as shown by the lemma below (See Appendix~\ref{appendix:proof1} for full proofs).
\begin{lemma}\label{lemma:asw-bound}
The total number of top-$k$ queries performed by the Time-Hop algorithm is $O\big(\card{S} + k\big\lceil\tfrac{\card{\Qinterval}}{\tau}\big\rceil\big)$.
\end{lemma}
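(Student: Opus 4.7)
The plan is to separate the top-$k$ queries performed by T-Hop into \emph{durable} queries (those where $P[t_{\text{curr}}]\in\pi_{\leq k}$) and \emph{non-durable} queries (those where it is not). Since each durable query contributes exactly one record to $S$, the number of durable queries is precisely $\card{S}$. Thus it suffices to show that the number $N$ of non-durable queries satisfies $N=O(\card{S}+k\lceil\card{\Qinterval}/\tau\rceil)$. The overall strategy is to partition $\Qinterval$ into $L=\lceil\card{\Qinterval}/\tau\rceil$ consecutive chunks of length $\tau$ and argue that each chunk contributes at most $O(k)$ non-durable queries that cannot be charged one-to-one to a nearby durable query; summing across chunks gives the bound.

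The linchpin of the analysis is the following structural observation. When T-Hop issues a non-durable query at $t_{\text{curr}}$, it obtains $\pi(t_{\text{curr}})=\topk{\leq k}{[t_{\text{curr}}-\tau,\,t_{\text{curr}}]}$ whose most recent record sits at some time $t^{*}<t_{\text{curr}}$. Because every record in $\pi(t_{\text{curr}})$ arrives no later than $t^{*}$, the same set is also the top-$k$ of the contracted window $[t_{\text{curr}}-\tau,\,t^{*}]$. Consequently, when the algorithm next computes $\pi(t^{*})=\topk{\leq k}{[t^{*}-\tau,\,t^{*}]}$, any record that can displace $P[t^{*}]$ from the new top-$k$ (and trigger yet another non-durable query) must come from the strictly earlier range $[t^{*}-\tau,\,t_{\text{curr}}-\tau-1]$. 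Iterating this observation shows that a chain of consecutive non-durable queries consumes displacing records drawn from pairwise disjoint, progressively earlier time intervals, while each hop forces at least one fresh record into the maintained top-$k$.

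Using this ``escape'' property, I would argue that any chain of consecutive non-durable queries that remains inside a single chunk has length $O(k)$: each hop in such a chain commits at least one of the $k$ top-$k$ slots to a freshly introduced earlier-timed record, and once the disjoint earlier ranges accumulate to total length $\tau$ the cursor is necessarily pushed past the chunk boundary. Non-durable queries that are \emph{not} part of a consecutive chain are each immediately preceded by a durable query (arriving via the $t_{\text{curr}}\gets t_{\text{curr}}-1$ branch of Algorithm~\ref{algo:asw}), and so can be charged one-to-one against $\card{S}$. Aggregating the $O(k)$ chain-internal non-durable queries over the $L$ chunks yields $N=O(kL+\card{S})$, as required.

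The main technical obstacle is making the ``$O(k)$ per chain'' claim rigorous for general $k$. For $k=1$ the argument is transparent: the displacing record in a non-durable hop is unique, so two consecutive non-durable hops must land at some time $\leq t_{\text{curr}}-\tau-1$, immediately ejecting the cursor from the chunk and giving a bound of $2$ non-durable queries per chain. For $k>1$, however, the top-$k$ evolves by simultaneous partial displacement of multiple records, so one must carefully track how many ``slots'' of the maintained top-$k$ are overwritten by records from the disjoint earlier intervals, and exploit the strict ordering of these intervals to show that the chain cannot linger more than $O(k)$ hops inside the same chunk before the cursor is forced more than $\tau$ units back in time.
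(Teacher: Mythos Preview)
Your high-level decomposition coincides with the paper's: split the non-durable queries (false checks) into those immediately preceded by a durable record (the paper's type-1, charged one-to-one against $\card{S}$) and the rest (type-2), partition $\Qinterval$ into $\lceil\card{\Qinterval}/\tau\rceil$ windows of length $\tau$, and bound the type-2 checks by $O(k)$ per window.

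There is, however, a real gap in your $O(k)$-per-window argument. You argue that each \emph{chain} of consecutive non-durable queries staying inside a chunk has length $O(k)$, and then aggregate ``over the $L$ chunks.'' But a single chunk can host several chains, one beginning after each durable record found there. Your per-chain bound therefore yields only $O(k\cdot(\card{S_\rho}+1))$ type-2 checks in a chunk $\rho$ containing $\card{S_\rho}$ durable records, hence $O(k\card{S}+kL)$ overall---off by a factor of $k$ on the $\card{S}$ term. What is actually needed is that the \emph{total} number of type-2 checks in a chunk, summed across all chains touching it, is $O(k)$; the per-chain argument does not give this because the ``slot commitments'' made during one chain might, a priori, be undone when a durable record shifts the window by one and a new chain begins.

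The paper closes this gap with a single monotone potential. For a fixed $\tau$-window $\rho$ and each visited record $p$, set $Z_p=\pi_{\leq k}([p.t-\tau,p.t])\cap\rho$. The paper shows (i) at every type-2 false check $p$, $Z_p$ is a \emph{strict} subset of the previous $Z$-list, so $\card{Z_p}$ drops by at least one; and, crucially, (ii) across any run of durable records followed by a type-1 check, $\card{Z}$ does not increase. Point (ii) is exactly what your sketch lacks: it certifies that progress made by type-2 checks in one chain persists through the durable and type-1 steps that separate chains. Since $\card{Z}\le k$ at the start and the cursor leaves $\rho$ once $Z=\emptyset$, there are at most $k$ type-2 checks in $\rho$ in total. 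Your ``committed slots'' count is precisely $k-\card{Z_p}$; to make the argument rigorous you must verify that these commitments survive across chain boundaries, which is the content of (ii).
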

\begin{proofSk}
For each record we visit in T-Hop, a top-$k$ query is called for a durability check.
If the record is not $\tau$-durable, we refer it to as a \emph{false check}. Otherwise, we add it to the answer set.
Hence, we only need to bound the total number of false checks.
We decompose the total number of false checks into a set of disjoint $\tau$-length windows, and derive an upper bound of false checks that happen in such a window.

In particular, let $\rho$ be a window of length $\tau$ and let $S_\rho$ be the $\tau$-durable records in $\rho$.
We divide the false checks in $\rho$ into two types. If a false check appears immidiately after a $\tau$-durable record (found by the algorithm) then this is a type-1 false check. Otherwise it is a type-2 false check.
From the definition, the number of type-1 false checks in $\rho$ is $O(S_\rho)$.
Furthermore, we show that after finding $i$ type-2 false checks in $\rho$, a top-$k$ query (that is called for durability check) can only find $k-i$ records in $\rho$. In that way we show that the number of type-2 false checks is $O(k)$.

Given a query interval $\Qinterval$, there are at most $\big\lceil\tfrac{\card{\Qinterval}}{\tau}\big\rceil$ disjoint $\tau$-length sub-intervals.
We conclude that the number of top-$k$ queries is $O\big(\card{S} + k\big\lceil\tfrac{\card{\Qinterval}}{\tau}\big\rceil\big)$.
\end{proofSk}
Overall, with an efficient top-$k$ module, T-Hop answers a durable top-$k$ query $\DurTop(k, \Qinterval, \tau)$ in 
$O\big((\card{S} + k\big\lceil\tfrac{\card{\Qinterval}}{\tau}\big\rceil)(q(n) + k)\log n\big)$
time.
Compared to T-Base, T-Hop runs in sublinear query time (assuming that the ratio $\big\lceil\tfrac{\card{\Qinterval}}{\tau}\big\rceil$ is not arbitrarily large), i.e., the running time does not have a linear dependency on the number of records in $\Qinterval$.
Our experimental results in Section~\ref{sec:expr} suggests that T-Hop is one to two orders of magnitude faster than T-Base in practice. Furthermore, we recall that our index can be implemented with near linear size and polylogarithmic update time for preference queries.

Notice that the number of top-$k$ queries performed by T-Hop depends on $\card{S}$ and $k\big\lceil\tfrac{\card{\Qinterval}}{\tau}\big\rceil$. Ideally, we would like to argue that the number of top-$k$ queries is $O(\card{S})$. In theory, the term $k\big\lceil\tfrac{\card{\Qinterval}}{\tau}\big\rceil$ can be arbitrarily large comparing to $\card{S}$. 
In Section~\ref{sec:answer-size:random} we study the expected size of $S$ in a random permutation model where a set of $n$ scores, chosen by an adversary, are assigned randomly to the records.
In such a case we show that the expected size of $S$ is roughly $O(k\big\lceil\tfrac{\card{\Qinterval}}{\tau}\big\rceil)$, meaning that in practice we expect that the number of top-$k$ queries we execute are asymptotically equal to $\card{S}$.

\vspace{-1em}
\section{Score-Prioritized Approach}\label{sec:weight-solution}
\begin{figure}
    \centering
    \scalebox{0.8}{\tikzset{every picture/.style={line width=0.75pt}} 

\begin{tikzpicture}[x=0.75pt,y=0.75pt,yscale=-1,xscale=1]

\draw    (125.3,97.2) -- (389.3,99.22) ;
\draw [shift={(391.3,99.23)}, rotate = 180.44] [color={rgb, 255:red, 0; green, 0; blue, 0 }  ][line width=0.75]    (10.93,-3.29) .. controls (6.95,-1.4) and (3.31,-0.3) .. (0,0) .. controls (3.31,0.3) and (6.95,1.4) .. (10.93,3.29)   ;

\draw  [color={rgb, 255:red, 0; green, 0; blue, 0 }  ,draw opacity=1 ][fill={rgb, 255:red, 0; green, 0; blue, 0 }  ,fill opacity=1 ] (157.27,29.21) .. controls (157.28,26.53) and (159.45,24.37) .. (162.12,24.38) .. controls (164.79,24.39) and (166.95,26.56) .. (166.94,29.23) .. controls (166.94,31.9) and (164.76,34.06) .. (162.09,34.06) .. controls (159.42,34.05) and (157.26,31.88) .. (157.27,29.21) -- cycle ;
\draw  [color={rgb, 255:red, 0; green, 0; blue, 0 }  ,draw opacity=1 ][fill={rgb, 255:red, 0; green, 0; blue, 0 }  ,fill opacity=1 ] (235.27,39.21) .. controls (235.28,36.53) and (237.45,34.37) .. (240.12,34.38) .. controls (242.79,34.39) and (244.95,36.56) .. (244.94,39.23) .. controls (244.94,41.9) and (242.76,44.06) .. (240.09,44.06) .. controls (237.42,44.05) and (235.26,41.88) .. (235.27,39.21) -- cycle ;
\draw  [color={rgb, 255:red, 0; green, 0; blue, 0 }  ,draw opacity=1 ][fill={rgb, 255:red, 0; green, 0; blue, 0 }  ,fill opacity=1 ] (198.27,60.21) .. controls (198.28,57.53) and (200.45,55.37) .. (203.12,55.38) .. controls (205.79,55.39) and (207.95,57.56) .. (207.94,60.23) .. controls (207.94,62.9) and (205.76,65.06) .. (203.09,65.06) .. controls (200.42,65.05) and (198.26,62.88) .. (198.27,60.21) -- cycle ;
\draw   (295.3,22.2) .. controls (295.31,17.53) and (292.98,15.2) .. (288.31,15.19) -- (242.72,15.12) .. controls (236.05,15.11) and (232.72,12.77) .. (232.73,8.1) .. controls (232.72,12.77) and (229.39,15.1) .. (222.72,15.09)(225.72,15.09) -- (169.01,15) .. controls (164.34,14.99) and (162.01,17.32) .. (162,21.99) ;
\draw   (373.3,35.2) .. controls (373.31,30.53) and (370.98,28.2) .. (366.31,28.19) -- (320.72,28.12) .. controls (314.05,28.11) and (310.72,25.77) .. (310.73,21.1) .. controls (310.72,25.77) and (307.39,28.1) .. (300.72,28.09)(303.72,28.09) -- (247.01,28) .. controls (242.34,27.99) and (240.01,30.32) .. (240,34.99) ;
\draw   (338.3,55.2) .. controls (338.31,50.53) and (335.98,48.2) .. (331.31,48.19) -- (285.72,48.12) .. controls (279.05,48.11) and (275.72,45.77) .. (275.73,41.1) .. controls (275.72,45.77) and (272.39,48.1) .. (265.72,48.09)(268.72,48.09) -- (212.01,48) .. controls (207.34,47.99) and (205.01,50.32) .. (205,54.99) ;
\draw  [dash pattern={on 0.84pt off 2.51pt}]  (162.3,6.2) -- (162.3,97.2) ;

\draw  [dash pattern={on 0.84pt off 2.51pt}]  (202.3,6.2) -- (202.3,97.2) ;

\draw  [dash pattern={on 0.84pt off 2.51pt}]  (240.3,5.2) -- (240.3,101.2) ;

\draw  [dash pattern={on 0.84pt off 2.51pt}]  (294.61,6.2) -- (294.61,98.11) ;

\draw  [dash pattern={on 0.84pt off 2.51pt}]  (339.3,5.2) -- (339.3,97.2) ;

\draw  [dash pattern={on 0.84pt off 2.51pt}]  (373.3,5.2) -- (373.3,99.2) ;

\draw [color={rgb, 255:red, 208; green, 2; blue, 27 }  ,draw opacity=1 ][line width=3]    (242.94,98.23) -- (291.61,98.11) ;

\draw    (126.3,98.2) -- (126.3,25.2) ;
\draw [shift={(126.3,23.2)}, rotate = 450] [color={rgb, 255:red, 0; green, 0; blue, 0 }  ][line width=0.75]    (10.93,-3.29) .. controls (6.95,-1.4) and (3.31,-0.3) .. (0,0) .. controls (3.31,0.3) and (6.95,1.4) .. (10.93,3.29)   ;

\draw (394,84) node  [align=left] {$\displaystyle t$};
\draw (152,16) node  [align=left] {$\displaystyle p_{1}$};
\draw (227,31) node  [align=left] {$\displaystyle p_{2}$};
\draw (188,58) node  [align=left] {$\displaystyle p_{3}$};
\draw (226,5) node  [align=left] {$\displaystyle \tau $};
\draw (317,18) node  [align=left] {$\displaystyle \tau $};
\draw (282,38) node  [align=left] {$\displaystyle \tau $};
\draw (180,116.2) node  [align=left] {1};
\draw (222,116.2) node  [align=left] {2};
\draw (268,116.2) node  [align=left] {3};
\draw (319,116.2) node  [align=left] {2};
\draw (358,116.2) node  [align=left] {1};
\draw (113,45) node [rotate=-270.65] [align=left] {score};

\end{tikzpicture}}
    \caption{Blocking mechanism in score-prioritized approach}
    \label{fig:blocking}
    \vspace{-1em}
\end{figure}
One weakness of time-prioritized approach is that it does not pay much attention to scores and simply visit records sequentially along the timeline (with hops).
Though Lemma~\ref{lemma:asw-bound} shows that T-Hop visits $O(\card{S} + k\big\lceil\tfrac{\card{\Qinterval}}{\tau}\big\rceil)$ records in the worst case, it still potentially visits many low-score and non-durable records and ask more top-$k$ queries.
In contrast, the score-prioritized approach visits candidate records in descending order of their scores because records with high scores have a higher chance of being durable top-$k$ records.
Furthermore, these high-score records
can also serve as a benchmark for future records, enabling a ``blocking mechanism'' to prune candidates. 

Before describing the algorithms, we illustrate the main idea using an example shown in Figure~\ref{fig:blocking}.
Suppose we answer a durable top-3 query with $\tau$ by visiting records in descending order of their scores: $p_1$, $p_2$ and $p_3$, and all three records are durable ones.
$p_1$ has the highest score in the entire query interval, any record that lies in the $\tau$-length time interval $[p_1.t, p_1.t+\tau]$ will be dominated by $p_1$, which we refer to as being ``blocked'' by $p_1$.
Similarly, $p_2$ (the second highest score) and $p_3$ (the third highest score) also block a $\tau$-length interval starting from their arrival times.
The time axis is partitioned into intervals by endpoints of all blocking intervals.
In Figure~\ref{fig:blocking}, the number under each interval shows how many records block this interval. 
Notice the bold red interval, where any record in this interval lies in three blocking intervals after processing $p_1$, $p_2$ and $p_3$.
Since there are already three records with higher score than any record in this interval, it can not have any $\tau$-durable top-3 record, and we can safely remove this time interval from consideration.
As we continue adding blocking intervals, eventually every remaining record in the query interval will be blocked by at least three blocking intervals.
The algorithm can now stop because no more durable top records can be found.
The procedure is straightforwardly applicable to look-ahead version of durability, by simply reversing the direction of blocking intervals.

We describe three algorithms in the following sections. They differ on how the high-score records are found and how the blocking intervals are maintained.

\subsection{Score-Baseline Algorithm}\label{sec:weight-solution:sort}
We start with a baseline method (S-Base) of score-prioritized approach, which sorts records in the query interval in descending order of their scores.
Given $k$, $\tau$ and a query interval $[t_1, t_2]$:
(1) Sort all records in time interval $[t_1-\tau, t_2]$ in descending order of scores.
(2) For each record $p$ in sorted order: If $p.t\in[t_1, t_2]$ and $p$ lies in less than $k$ blocking intervals, add $p$ to answer set;
    Otherwise, continue. 
In any case, add a blocking interval $[p.t, p.t+\tau]$.
    
Since all blocking intervals have the same length $\tau$, we only need to maintain the left endpoints of such intervals (using a balanced binary search tree) to find intersection counts. 
The number of blocking intervals is $O(n)$.
Hence, insertion and query can both be finished in $O(\log n)$ time.
The sorting takes $O(n\log n)$ time so the overall query time complexity of S-Base is $O(n\log n)$.

Next we describe two better algorithms that avoid sorting all records in the query interval.

\subsection{Score-Band Algorithm (Monotone $\pmb{f}$ Only)}\label{sec:weight-solution:sky}
\begin{figure}
    \centering
    \scalebox{.8}{\tikzset{every picture/.style={line width=0.75pt}} 

\begin{tikzpicture}[x=0.75pt,y=0.75pt,yscale=-1,xscale=1]

\draw  (65.3,208.38) -- (247.61,208.38)(65.3,94.2) -- (65.3,209.11) (240.61,203.38) -- (247.61,208.38) -- (240.61,213.38) (60.3,101.2) -- (65.3,94.2) -- (70.3,101.2)  ;
\draw  [color={rgb, 255:red, 0; green, 0; blue, 0 }  ,draw opacity=1 ][fill={rgb, 255:red, 0; green, 0; blue, 0 }  ,fill opacity=1 ] (158.27,193.21) .. controls (158.28,190.53) and (160.45,188.37) .. (163.12,188.38) .. controls (165.79,188.39) and (167.95,190.56) .. (167.94,193.23) .. controls (167.94,195.9) and (165.76,198.06) .. (163.09,198.06) .. controls (160.42,198.05) and (158.26,195.88) .. (158.27,193.21) -- cycle ;
\draw  [color={rgb, 255:red, 0; green, 0; blue, 0 }  ,draw opacity=1 ][fill={rgb, 255:red, 0; green, 0; blue, 0 }  ,fill opacity=1 ] (217.48,155.35) .. controls (217.48,152.68) and (219.66,150.52) .. (222.33,150.53) .. controls (225,150.53) and (227.16,152.7) .. (227.15,155.38) .. controls (227.14,158.05) and (224.97,160.21) .. (222.3,160.2) .. controls (219.63,160.19) and (217.47,158.02) .. (217.48,155.35) -- cycle ;
\draw  [color={rgb, 255:red, 0; green, 0; blue, 0 }  ,draw opacity=1 ][fill={rgb, 255:red, 0; green, 0; blue, 0 }  ,fill opacity=1 ] (136.48,137.35) .. controls (136.48,134.68) and (138.66,132.52) .. (141.33,132.53) .. controls (144,132.53) and (146.16,134.7) .. (146.15,137.38) .. controls (146.14,140.05) and (143.97,142.21) .. (141.3,142.2) .. controls (138.63,142.19) and (136.47,140.02) .. (136.48,137.35) -- cycle ;
\draw  [color={rgb, 255:red, 0; green, 0; blue, 0 }  ,draw opacity=1 ][fill={rgb, 255:red, 0; green, 0; blue, 0 }  ,fill opacity=1 ] (103.27,184.21) .. controls (103.28,181.53) and (105.45,179.37) .. (108.12,179.38) .. controls (110.79,179.39) and (112.95,181.56) .. (112.94,184.23) .. controls (112.94,186.9) and (110.76,189.06) .. (108.09,189.06) .. controls (105.42,189.05) and (103.26,186.88) .. (103.27,184.21) -- cycle ;
\draw  [dash pattern={on 0.84pt off 2.51pt}]  (222.3,160.2) -- (222.09,208.11) ;

\draw  [dash pattern={on 0.84pt off 2.51pt}]  (163.3,198.2) -- (163.09,208.11) ;

\draw  [dash pattern={on 0.84pt off 2.51pt}]  (141.3,142.2) -- (141.12,208.11) ;

\draw  [dash pattern={on 0.84pt off 2.51pt}]  (109.3,187.2) -- (109.12,207.11) ;

\draw  [dash pattern={on 0.84pt off 2.51pt}]  (66.82,137.35) -- (136.48,137.35) ;

\draw  [dash pattern={on 0.84pt off 2.51pt}]  (65.61,155.21) -- (222.3,156.2) ;

\draw  [dash pattern={on 0.84pt off 2.51pt}]  (68.61,185.21) -- (104.27,185.21) ;

\draw  [dash pattern={on 0.84pt off 2.51pt}]  (67.61,194.21) -- (158.27,194.21) ;

\draw [line width=3.75]    (132.39,82.2) -- (132.39,106.2) -- (132.39,123) -- (132.39,165.3) ;

\draw [line width=3.75]    (129.3,167.98) -- (261.3,167.98) ;

\draw  [draw opacity=0][fill={rgb, 255:red, 155; green, 155; blue, 155 }  ,fill opacity=0.67 ] (134.39,82.2) -- (256.3,82.2) -- (256.3,165.3) -- (134.39,165.3) -- cycle ;
\draw [line width=3.75]    (258.3,82.2) -- (258.3,123) -- (258.3,165.3) ;

\draw (281,226) node  [align=left] {Arriving Time};
\draw (60,77) node [rotate=-359.96] [align=left] {Duration};
\draw (105,221) node  [align=left] {$\displaystyle t_{1}$};
\draw (139,221) node  [align=left] {$\displaystyle t_{2}$};
\draw (163,221) node  [align=left] {$\displaystyle t_{3}$};
\draw (224,221) node  [align=left] {$\displaystyle t_{4}$};
\draw (56,181) node  [align=left] {$\displaystyle \tau _{1}$};
\draw (56,135) node  [align=left] {$\displaystyle \tau _{2}$};
\draw (56,194) node  [align=left] {$\displaystyle \tau _{3}$};
\draw (56,153) node  [align=left] {$\displaystyle \tau _{4}$};
\draw (156,120) node  [align=left] {$\displaystyle \tilde{p}_{2}$};
\draw (239,143) node  [align=left] {$\displaystyle \tilde{p}_{4}$};
\draw (179,181) node  [align=left] {$\displaystyle \tilde{p}_{3}$};
\draw (122,169) node  [align=left] {$\displaystyle \tilde{p}_{1}$};

\end{tikzpicture}}
    \caption{Index for $k$-skyband duration.}
    \label{fig:kskyband}
\end{figure}
\begin{algorithm}[t]\small
 \KwIn{$P$, $k$, $\tau$, and $\Qinterval$.}
 \KwOut{$\DurTop(k, \Qinterval, \tau)$}
 $S \leftarrow \emptyset$, $\Gamma \leftarrow \emptyset$\;
 Compute $\mathcal{C} \subset P$ by finding durable $k$-skyband set\;
 Sort $\mathcal{C}$ in descending order of scores\;
 \For {$p \in \mathcal{C}$} {
    \If {\text{$p$ lies in $<k$ blocking intervals in   $\Gamma$}} {
        $\pi_{\leq k} \leftarrow Q(k, [p.t-\tau, p.t])$\;
        \uIf {$p \in \pi_{\leq k}$} {
            $S \leftarrow S \cup \{p\}$\;
        }
        \Else {
            \For{$q \in \pi_{\leq k}$ $\wedge$ $q$ not visited before} {
            $\Gamma \leftarrow \Gamma \cup \{[q.t, q.t+\tau]\}$\;
            }
        }
    }
    $\Gamma \leftarrow \Gamma \cup \{[p.t, p.t+\tau]\}$\;
 }
 \KwRet S\;
 \caption{S-Band $(k, \Qinterval, \tau)$\label{algo:ps}}
\end{algorithm}
If we could quickly find a small set of candidate records $\mathcal{C}$, which is guaranteed to be a superset of the answers; i.e., $S \subseteq \mathcal{C}$, then we could get a faster algorithm by only sorting $\mathcal{C}$.
It is well-known that the $k$ records with the highest score, with respect to \emph{any monotone} scoring functions, belong to the $k$-skyband.\footnote{For $\forall p,q, \in P$, $p$ \emph{dominates} $q$ if $p$ is no worse than $q$ in all dimensions, and $p$ is better than $q$ in at least one dimension.
$k$-skyband contains all the points that are dominated by no more than $k-1$ other points. 
Skyline is a special case of $k$-skyband when $k=1$.}
Hence, if a record $p$ is $\tau$-durable for a top-$k$ query (with respect to a monotone $f$), then $p$ must also be $\tau$-durable for the $k$-skyband; i.e., $p$ is in the $k$-skyband for the time interval $[p.t-\tau, p.t]$.
This observation enables us to construct an offline index about each record's duration of belonging to the $k$-skyband, and efficiently produce a superset $\mathcal{C}$ of answers to durable top-$k$ queries.
Note that the score-band algorithm has its limitation, since the $k$-skyband technique only applies to monotone scoring functions.

\mparagraph{Index} Score-Band algorithm needs additional index for finding candidate set $\mathcal{C}$, which we refer to as \emph{durable $k$-skyband}.
Suppose the value of $k$ is known.
For each record $p$, we compute the longest duration $\tau_p$ that $p$ belongs to the $k$-skyband.
Then we map each record $p$ into the ``arrival time - duration'' plane as a two-dimensional point, $\Tilde{p}=(p.t, \tau_p)$.
We then index all such points in the 2D plane using a priority search tree~\cite{de1997computational} (or kd-tree, R-tree in practice).
To answer $\DurTop(k, \Qinterval, \tau)$, we first ask a range query with the $3$-sided rectangle $\Qinterval \times [\tau, +\infty]$.
The set of points that fall into the search region is the superset to actual answers of durable records.
This index can be constructed in $O(n\log n)$ time, has $O(n)$ space and the query time is $O(\card{\mathcal{C}}+\log n)$ in order to get the set $\mathcal{C}$.
Figure~\ref{fig:kskyband} shows an example.
We have four records $p_1, p_2, p_3, p_4$ arriving at $t_1, t_2, t_3, t_4$, whose duration for $k$-skyband is $\tau_1, \tau_2, \tau_3$ and $\tau_4$.
We map them into $\Tilde{p}_1, \Tilde{p}_2, \Tilde{p}_3$ and $\Tilde{p}_4$ according to their arriving time and $k$-skyband duration.
The $3$-sided rectangle $\Qinterval \times [\tau, +\infty]$ is shown as the shaded region.
In this case, $\mathcal{C} = \{p_2, p_4\}$.

%
In general case, notice that we do not know the value of $k$ upfront, i.e., a query has $k$ as a parameter, so we cannot construct only one such index. There are two ways to handle it. 
If we have the guarantee that $k\leq \kappa_0$ for a small number $\kappa_0$ then we can construct $\kappa_0$ such indexes with total space $O(n\kappa_0)$. Otherwise, if $k$ can be any integer in $[1, n]$, we can construct $O(\log n)$ such indexes (priority search trees), one for each $k=2^0, 2^1, \ldots, 2^{\log n}$, so the space is $O(n\log n)$. Given a durable top-$k$ query we first find the number $\bar{k}$ with $k\leq \bar{k}\leq 2k$, and then we use the corresponding index to get the superset $\mathcal{C}$. In this case, $\mathcal{C}$ contains the records that are $\tau$-durable to the $\bar{k}$-skyband, so $S\subseteq \mathcal{C}$.

\begin{figure}
    \centering
    \scalebox{.8}{\tikzset{every picture/.style={line width=0.75pt}} 

\begin{tikzpicture}[x=0.75pt,y=0.75pt,yscale=-1,xscale=1]

\draw    (141.3,108.2) -- (345.3,108.99) ;
\draw [shift={(347.3,109)}, rotate = 180.22] [color={rgb, 255:red, 0; green, 0; blue, 0 }  ][line width=0.75]    (10.93,-3.29) .. controls (6.95,-1.4) and (3.31,-0.3) .. (0,0) .. controls (3.31,0.3) and (6.95,1.4) .. (10.93,3.29)   ;

\draw  [color={rgb, 255:red, 0; green, 0; blue, 0 }  ,draw opacity=1 ][fill={rgb, 255:red, 0; green, 0; blue, 0 }  ,fill opacity=1 ] (285.27,85.21) .. controls (285.28,82.53) and (287.45,80.37) .. (290.12,80.38) .. controls (292.79,80.39) and (294.95,82.56) .. (294.94,85.23) .. controls (294.94,87.9) and (292.76,90.06) .. (290.09,90.06) .. controls (287.42,90.05) and (285.26,87.88) .. (285.27,85.21) -- cycle ;
\draw  [fill={rgb, 255:red, 208; green, 2; blue, 27 }  ,fill opacity=1 ] (251.89,59) -- (262,59) -- (262,69.11) -- (251.89,69.11) -- cycle ;
\draw  [fill={rgb, 255:red, 208; green, 2; blue, 27 }  ,fill opacity=1 ] (204.89,72) -- (215,72) -- (215,82.11) -- (204.89,82.11) -- cycle ;
\draw  [color={rgb, 255:red, 0; green, 0; blue, 0 }  ,draw opacity=1 ][fill={rgb, 255:red, 0; green, 0; blue, 0 }  ,fill opacity=1 ] (313.27,92.21) .. controls (313.28,89.53) and (315.45,87.37) .. (318.12,87.38) .. controls (320.79,87.39) and (322.95,89.56) .. (322.94,92.23) .. controls (322.94,94.9) and (320.76,97.06) .. (318.09,97.06) .. controls (315.42,97.05) and (313.26,94.88) .. (313.27,92.21) -- cycle ;
\draw  [color={rgb, 255:red, 0; green, 0; blue, 0 }  ,draw opacity=1 ][fill={rgb, 255:red, 0; green, 0; blue, 0 }  ,fill opacity=1 ] (160.27,56.21) .. controls (160.28,53.53) and (162.45,51.37) .. (165.12,51.38) .. controls (167.79,51.39) and (169.95,53.56) .. (169.94,56.23) .. controls (169.94,58.9) and (167.76,61.06) .. (165.09,61.06) .. controls (162.42,61.05) and (160.26,58.88) .. (160.27,56.21) -- cycle ;
\draw   (322.3,48.2) .. controls (322.3,43.53) and (319.97,41.2) .. (315.3,41.2) -- (251.3,41.2) .. controls (244.63,41.2) and (241.3,38.87) .. (241.3,34.2) .. controls (241.3,38.87) and (237.97,41.2) .. (231.3,41.2)(234.3,41.2) -- (172.3,41.2) .. controls (167.63,41.2) and (165.3,43.53) .. (165.3,48.2) ;
\draw    (141.3,108.2) -- (141.3,35.2) ;
\draw [shift={(141.3,33.2)}, rotate = 450] [color={rgb, 255:red, 0; green, 0; blue, 0 }  ][line width=0.75]    (10.93,-3.29) .. controls (6.95,-1.4) and (3.31,-0.3) .. (0,0) .. controls (3.31,0.3) and (6.95,1.4) .. (10.93,3.29)   ;

\draw  [dash pattern={on 0.84pt off 2.51pt}]  (165.3,58.2) -- (165.3,109.2) ;

\draw  [dash pattern={on 0.84pt off 2.51pt}]  (209.95,77.05) -- (209.95,110.05) ;

\draw  [dash pattern={on 0.84pt off 2.51pt}]  (257.3,64.2) -- (257.3,109.6) ;

\draw  [dash pattern={on 0.84pt off 2.51pt}]  (289.94,92.23) -- (289.94,108.09) ;

\draw  [dash pattern={on 0.84pt off 2.51pt}]  (318.11,93.2) -- (318.11,107.22) ;

\draw (350,119) node  [align=left] {$\displaystyle t$};
\draw (302,70) node  [align=left] {$\displaystyle p_{4}$};
\draw (273,50) node  [align=left] {$\displaystyle p_{3}$};
\draw (223,63) node [align=left] {$\displaystyle {\textstyle p_{2}}$};
\draw (181,54) node [align=left] {$\displaystyle p_{1}$};
\draw (334,82) node  [align=left] {$\displaystyle p_{5}$};
\draw (252,30) node  [align=left] {$\displaystyle \tau $};
\draw (128,55) node [rotate=-270.65] [align=left] {score};

\end{tikzpicture}}
    \caption{Durability checks in S-Band and S-Hop.}
    \label{fig:false-check}
\end{figure}

\mparagraph{Query Algorithm} We refer to this score-prioritized approach using durable $k$-skyband candidates as Score-Band algorithm, or S-Band.
Full algorithm is sketched in Algorithm~\ref{algo:ps} and described below.
Given $k, \Qinterval, \tau$, we first retrieve the candidate set $\mathcal{C}$ using the durable $k$-skyband index as shown above.
Then we sort $\mathcal{C}$ and visit records in descending order of their scores.
For each record $p$ we visit, we first check the number of blocking intervals that $p$ lies.
If $p$ lies in less than $k$ blocking intervals, it is a promising candidate and we run a top-$k$ query on time interval $[p.t - \tau, p.t]$ for durability check.
If $p$ is indeed $\tau$-durable, we add $p$ to answer set. 
Otherwise, we need to add a blocking interval for each record returned by the top-$k$ query (if we have not done so yet), since they all have higher scores than $p$.
On the other hand, if $p$ already lies in at least $k$ blocking intervals, we can simply skip it. 
In the end, we add the blocking interval $[p.t, p.t+\tau]$ for $p$.

We can see that S-Band works similarly to S-Base.
The only difference is that for a record that is blocked less than $k$ times, we still have to execute a top-$k$ query to check whether the record is $\tau$-durable (Line 6).
This step of durability check is necessary.
Though some records are guaranteed to be non-durable (i.e., not captured by $\mathcal{C}$ with durable $k$-skyband), they can still block other records (with lower scores) to be durable ones.
Consider a concrete example in Figure~\ref{fig:false-check} where black dots represent candidate records in $\mathcal{C}$ and red squares represent records that are not in $\mathcal{C}$.
S-Band would only visit $p_1, p_4$ and $p_5$.
At the time we visit $p_4$, there is only one blocking interval (introduced by $p_1$).
However, $p_2$ and $p_3$ actually have higher scores than $p_4$.
By running a durability check query on $p_4$, we can discover these missing records and add corresponding blocking intervals (Line 10-11) for better pruning power in future steps.  

\mparagraph{Complexity} The query time complexity of S-Band can be decomposed into three parts: 1) a range search query to find candidate set $\mathcal{C}$; 2) sort $\mathcal{C}$ according to their scores; 3) find durable records from sorted $\mathcal{C}$ sequentially.
Summing up the above, the overall query time complexity of S-Band is $O\big(\card{\mathcal{C}}(q(n) + k)\log n\big)$, assuming that a top-$k$ query can be answered in $O(q(n) + k)$ time.
In the worst case $\card{\mathcal{C}}=O(n)$ since all points can lie in the $k$-skyband. In Section~\ref{sec:answer-size} we show that using the probabilistic model in~\cite{bentley1977average} (where the coordinates of the points are randomly assigned) the expected size of $\mathcal{C}$ is $O(k\big\lceil\tfrac{\card{\Qinterval}}{\tau}\big\rceil\log^{d-1}\tau)$. Due to the blocking mechanism, in practice we expect that the number of top-$k$ queries will be smaller. 
However, notice that we always need to sort all records in $\mathcal{C}$ which might make S-Band much slower due to the size of $\mathcal{C}$ that increases (in expectation) exponentially on the dimension $d$.

\subsection{Score-Hop Algorithm}\label{sec:weight-solution:top1}
The data reduction strategy of S-Band offers adequate benefits for improving the overall running time on datasets in low dimensions ($\leq 5$).
However, the query overhead on searching and sorting candidate records becomes a huge burden on high-dimensional data, as it is well-known that the size of $k$-skyband tends to explode (or equivalently, records in high-dimensional space tends to stay in $k$-skyband for a longer duration) in high-dimensional space.
Furthermore, S-Band requires additional index and only applies to monotone scoring functions.
To overcome the drawbacks of S-Base and S-Band, we propose another approach that does not require sorting and has better worst case guarantee.
The main idea is that there is no need to sort records in advance; we can find the record with the next highest score one by one as we find durable records.
With the help of blocking mechanism, we can skip certain time intervals when we find the next highest score record, despite the fact that there might be some high-score records in such intervals. 
This procedure has an analogy to the Time-Hop algorithm, since we effectively skip certain records while we traverse records in descending order of their scores, as we taking a hop in the score-domain.

\begin{algorithm}[t]\small
 \KwIn{$P$, $k$, $\tau$, and $\Qinterval : [a,b]$.}
 \KwOut{$\DurTop(k, \Qinterval, \tau)$}
 $H \leftarrow \emptyset$, $S \leftarrow \emptyset$, $\Gamma \leftarrow \emptyset$\;
 \For {$[l_i, r_i] :$ disjoint $\tau$-length intervals in $\Qinterval$} {
 $M_i \leftarrow Q(\wtv, k, [l_r, r_i])$\;
 $H.$push($M_i$.pop())\;
 }
 \While {$H \neq \emptyset$} {
 $p \leftarrow  H.$pop(), and let $p\in M_j$\;
 \uIf {$p$ lies in $<k$ blocking intervals in  $\Gamma$} {
 $\pi_{\leq k} \leftarrow Q(\wtv, k, [p.t-\tau, p.t])$\;
 \uIf {$p \in \pi_{\leq k}$} {
    $S \leftarrow S \cup \{p\}$\;
 }
 \Else {
    \For{$q \in \pi_{\leq k}$ $\wedge$ $q$ not visited before} {
            $\Gamma \leftarrow \Gamma \cup \{[q.t, q.t+\tau]\}$\;
            }
 }
 $M_j^- \leftarrow Q(k, [l_j, p.t-1])$\;
 $M_j^+ \leftarrow Q(k, [p.t+1, r_j])$\;
 $H.$push($M_j^-$.top()), $H.$push($M_j^+$.top())\; 
 }
 \ElseIf{$M_j \neq \emptyset$}{
 $H.$push($M_j.$pop())\;
 }
 \If{$p$ not visited before}{
 $\Gamma \leftarrow \Gamma \cup \{[p.t, p.t+\tau]\}$\;
 }
 }
 \KwRet S\;
 \caption{S-Hop $(k, \Qinterval, \tau)$\label{algo:psi}}
\end{algorithm}

\mparagraph{Query Algorithm} We refer to this solution as Score-Hop algorithm, or S-Hop.
The main idea of the algorithm is straightforward.
In each iteration, we find the record with the maximum score among the records that lie in less than $k$ blocking intervals. Let $p$ be such a record. We run a durable top-k query so if $p$ is a $\tau$-durable record we add it in $S$. If $p$ is not a $\tau$-durable record, we add a blocking interval for each record returned by the durable top-k query
(if they have not been added before). In the end, we add the blocking interval $[p.t, p.t+\tau]$ and we continue with the next record with the highest score.
The actual implementation of the algorithm is more subtle, to guarantee a fast query time as described below; pseudo-code is provided in Algorithm~\ref{algo:psi}.
Given a query interval $\Qinterval=[a,b]$, we partition the interval into a set of disjoint $\tau$-length sub-intervals: $[a, a+\tau),[a+\tau, a+2\tau), \ldots, [a+\big\lfloor\tfrac{\card{\Qinterval}}{\tau}\big\rfloor\tau, b]$.
Let $[l_i, r_i]$ be the $i$-th sub-interval, and in each interval we find the $k$ records
\footnote{\label{note}As a practical note, we notice that finding the top-1 record (instead of top-$k$) in each time interval can be more efficient in most real-life datasets.} 
with the highest score, denoted $M_i$.
We construct a max-heap $H$ over all the top-1 records from all sub-intervals.
Besides that, each node in $H$ also keeps the original interval $[l_i, r_i]$ and the set $M_i$ associated with the record.
We repeat the following until $H$ is empty.
We take and pop the top record from $H$. 
Let $p$ be that record originated from $M_j$.
Then $p$ will be processed in the following two cases:
1) If $p$ lies in at least $k$ blocking intervals, we update $H$ by pushing the next top record in $M_j$ (if there is any). 
2) If $p$ lies in less than $k$ blocking intervals, we update $H$ as follows.
Assume that $[l_j, r_j]$ is the corresponding sub-interval of $M_j$ (or $p$).
We first split $[l_j, r_j]$ into two non-empty intervals $[l_j, p.t-1]$ and $[p.t+1, r_j]$.
Then, run a top-$k$ query on $[l_j, p.t-1]$ to get a new top-$k$ set $M_j^-$.
Similarly, get another new set $M_j^+$ from $[p.t+1, r_j]$.
We replace the old set $M_j$ with $M_j^-$ and $M_j^+$, along with its corresponding interval $[l_j, p.t-1]$ and $[p.t+1, r_j]$, respectively.
Finally, we update $H$ by pushing the current top records from $M_j^-$ and $M_j^+$ into the heap.
In the end, we add the blocking interval from record $p$ (if it is the first time we visited $p$).
Figure~\ref{fig:s-hop} illustrates the main procedure of S-Hop on how to find next record with highest score. 
It is worth mentioning that the hopping movement happens at Line 18: 
we effectively skip certain intervals by not updating the max-heap and stop asking top-$k$ queries on its sub-intervals.
\footnote{In practice, we make sure that when we ask $k$ top-$1$ queries in an interval we remove it from the max-heap.}
\begin{figure}
    \centering
    \scalebox{.8}{\tikzset{every picture/.style={line width=0.75pt}} 

\begin{tikzpicture}[x=0.75pt,y=0.75pt,yscale=-1,xscale=1]

\draw   (259.11,36.11) -- (388.61,67.11) -- (130.61,67.11) -- cycle ;
\draw    (110.3,116.2) -- (417.61,116.2) ;
\draw [shift={(419.61,116.2)}, rotate = 180] [color={rgb, 255:red, 0; green, 0; blue, 0 }  ][line width=0.75]    (10.93,-3.29) .. controls (6.95,-1.4) and (3.31,-0.3) .. (0,0) .. controls (3.31,0.3) and (6.95,1.4) .. (10.93,3.29)   ;

\draw [line width=2.25]    (133,107) -- (133,128.11) ;

\draw [line width=2.25]    (215,108) -- (215,129.11) ;

\draw [line width=2.25]    (297,108) -- (297,129.11) ;

\draw [line width=2.25]    (379,108) -- (379,129.11) ;

\draw  [fill={rgb, 255:red, 247; green, 6; blue, 6 }  ,fill opacity=1 ] (254.6,35.6) .. controls (254.6,33.11) and (256.62,31.09) .. (259.11,31.09) .. controls (261.6,31.09) and (263.62,33.11) .. (263.62,35.6) .. controls (263.62,38.09) and (261.6,40.11) .. (259.11,40.11) .. controls (256.62,40.11) and (254.6,38.09) .. (254.6,35.6) -- cycle ;
\draw  [fill={rgb, 255:red, 245; green, 5; blue, 5 }  ,fill opacity=1 ] (350.6,116.6) .. controls (350.6,114.11) and (352.62,112.09) .. (355.11,112.09) .. controls (357.6,112.09) and (359.62,114.11) .. (359.62,116.6) .. controls (359.62,119.09) and (357.6,121.11) .. (355.11,121.11) .. controls (352.62,121.11) and (350.6,119.09) .. (350.6,116.6) -- cycle ;
\draw   (354.61,105.11) .. controls (354.61,100.44) and (352.28,98.11) .. (347.61,98.11) -- (336.39,98.11) .. controls (329.72,98.11) and (326.39,95.78) .. (326.39,91.11) .. controls (326.39,95.78) and (323.06,98.11) .. (316.39,98.11)(319.39,98.11) -- (302.61,98.11) .. controls (297.94,98.11) and (295.61,100.44) .. (295.61,105.11) ;
\draw   (380.61,105.11) .. controls (380.61,101.68) and (378.89,99.97) .. (375.46,99.97) -- (375.46,99.97) .. controls (370.56,99.97) and (368.11,98.25) .. (368.11,94.82) .. controls (368.11,98.25) and (365.66,99.97) .. (360.76,99.97)(362.96,99.97) -- (360.76,99.97) .. controls (357.33,99.97) and (355.61,101.68) .. (355.61,105.11) ;

\draw (420,127) node  [align=left] {$\displaystyle t$};
\draw (153,37) node  [align=left] {Max Heap $\displaystyle H$};
\draw (172,136) node  [align=left] {$\displaystyle M_{i}$};
\draw (340,136) node  [align=left] {$\displaystyle M_{j}$};
\draw (259,103) node  [align=left] {...};
\draw (307,17) node  [align=left] {Heap Top: $\displaystyle p\ \in M_{j}$};
\draw (362,125) node  [align=left] {$\displaystyle p$};
\draw (371,81) node  [align=left] {$\displaystyle M^{+}_{j}$};
\draw (331,80) node  [align=left] {$\displaystyle M^{-\ }_{j}$};
\draw (134,140) node  [align=left] {$\displaystyle l_{i}$};
\draw (217,139) node  [align=left] {$\displaystyle r_{i}$};
\draw (298,140) node  [align=left] {$\displaystyle l_{j}$};
\draw (381,139) node  [align=left] {$\displaystyle r_{j}$};

\end{tikzpicture}}
    \caption{Illustration of Score-Hop algorithm on finding next record with highest score (if $p$ lies in less than $k$ blocking intervals).}
    \label{fig:s-hop}
\end{figure}

Compared to S-Band, S-Hop does not have a strong dependency on the dimension of the data (only the running time of the top-$k$ queries depends on the dimension) and makes better use of the blocking mechanism.
In the end, we only find and process high-score records as we need instead of acquiring a full sorted order of records in advance, which leads to better worst case theoretical guarantees and faster query time.
Experimental results in Section~\ref{sec:expr} demonstrate that S-Hop can be 1 to 2 orders of magnitude faster than S-Band on high-dimensional ($\geq$ 10) datasets.

\mparagraph{Correctness} 
The following lemma proves the correctness of S-Hop.
\begin{lemma}\label{lemma:psi:correctness}
Given $k$, $\Qinterval$ and $\tau$,
the Score-Hop algorithm returns the correct answer for durable top-$k$ query. 
\end{lemma}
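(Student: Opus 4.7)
My plan is to prove soundness and completeness separately. Soundness is immediate: S-Hop adds $p$ to $S$ only on line 10, after explicitly verifying $p \in Q(k, [p.t-\tau, p.t])$, which is exactly the definition of $\tau$-durability, so every record in $S$ is $\tau$-durable.

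For completeness, I first establish a blocking invariant: every record whose interval lies in $\Gamma$ at any moment has score strictly greater than any record not yet popped from $H$. Records inserted on line 22 were popped earlier from the max-heap, so they have higher score; records inserted on line 13 come from the top-$k$ of some earlier-popped $p'$, so they have score strictly above $f(p')$, which itself exceeds every unpopped record's score. Given this invariant, if a $\tau$-durable record $p$ is popped, the blocking check on line 7 must succeed: $p$ lying in $\geq k$ blocking intervals would place $\geq k$ records with scores greater than $f(p)$ in $[p.t-\tau, p.t]$, contradicting $p \in \topk{\leq k}{[p.t-\tau,p.t]}$. Hence every popped $\tau$-durable record is added to $S$.

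It remains to show that every $\tau$-durable record is eventually popped. I would prove this by induction on the sub-intervals created by the algorithm. For each sub-interval $[l,r]$ with associated set $M_j = Q(k,[l,r])$, every $\tau$-durable record $q\in[l,r]$ is either (a) in $M_j$, in which case it is pushed via the repeated $M_j.\mathrm{pop}()$ mechanism on lines 4 and 18 and eventually popped; or (b) not in $M_j$, in which case I claim that some $m_i\in M_j$ is not blocked when popped, triggering a split on lines 14--16 and placing $q$ in a strictly smaller sub-interval where the induction applies. The recursion terminates because split sub-intervals strictly shrink.

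The main obstacle is establishing the claim in case (b). Suppose for contradiction that all $m_1,\ldots,m_k$ are blocked, so $[l,r]$ is never split. Since $|[l,r]|\leq \tau$ and $q.t\in[l,r]$, every $m_i$ with $m_i.t\leq q.t$ lies in $[q.t-\tau, q.t]$ and has $f(m_i)>f(q)$; $\tau$-durability of $q$ then forces at most $k-1$ such $m_i$. Let $l_0$ be the smallest index with $m_{l_0}.t>q.t$. When $m_{l_0}$ is popped, the $\geq k$ blockers in $\Gamma$ have times in $[m_{l_0}.t-\tau, m_{l_0}.t]$ and scores $\geq f(m_{l_0})>f(q)$. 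Again by $q$'s $\tau$-durability, at most $k-1$ of these blockers have time $\leq q.t$, so some blocker $r^*$ satisfies $r^*.t\in(q.t, m_{l_0}.t]$. Then $r^*\in[l,r]$ with $f(r^*)\geq f(m_{l_0})$ forces $r^*\in M_j$, i.e., $r^*=m_i$ for some $i$. But $r^*.t>q.t$ rules out $i<l_0$; $i=l_0$ is impossible because $m_{l_0}$'s interval has not yet been added to $\Gamma$; and $i>l_0$ is impossible because $m_i$ has not yet been popped and hence is not in $\Gamma$. This contradiction completes the plan.
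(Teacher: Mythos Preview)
Your high-level plan---soundness from the explicit top-$k$ check, completeness via a blocking invariant plus an argument that every durable record is eventually popped---matches the paper's structure. The paper's proof likewise argues that a durable record cannot be blocked because ``the algorithm visits records in descending score order,'' and then separately handles the question of whether durable records are ever reached.

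However, your case (b) argument has a genuine gap. The sub-interval $[l,r]$ in your induction may itself be the product of a split, and in that case some $m_i\in M_j$ may already have had their blocking intervals placed in $\Gamma$ before $M_j$ was even created---either because $m_i$ was popped (and blocked) from the parent interval's top-$k$ set, or because $m_i$ entered $\Gamma$ via line~13 as part of some earlier false check's top-$k$. Concretely:
\begin{itemize}
\item Your claim that ``$m_{l_0}$'s interval has not yet been added to $\Gamma$'' fails if $m_{l_0}$ was previously popped from a parent $M$.
\item Your claim that $m_i$ for $i>l_0$ ``has not yet been popped and hence is not in $\Gamma$'' fails for the same reason, and independently because records can enter $\Gamma$ via line~13 without ever being popped.
\item Your assertion that blockers have score $\geq f(m_{l_0})$ uses the invariant with $m_{l_0}$ playing the role of a not-yet-popped record; this is not justified once $m_{l_0}$ may have been popped earlier. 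You really only need $f(r^*)\geq f(m_k)$ to force $r^*\in M_j$, but even that weaker bound does not follow from your invariant in the split case.
\end{itemize}

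The paper avoids this entirely by taking a different route for termination: it proves (as a separate lemma) that whenever an $M_j$ is exhausted, \emph{every timestamp} in $[l_j,r_j]$ already lies in at least $k$ blocking intervals. This density statement is about timestamps rather than specific $m_i$'s, so it is robust to records being revisited or inserted into $\Gamma$ through line~13; combined with the blocking invariant it immediately rules out any durable record in an exhausted sub-interval. Your direct argument about the blockers of a single $m_{l_0}$ is more economical in spirit, but as written it does not survive the interaction between splitting and the two distinct ways intervals can enter $\Gamma$.
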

\begin{proofSk}
Let $S^*$ be the $\tau$-durable records in $\Qinterval$.
We show that $S\subseteq S^*$ and $S^*\subseteq S$.
The algorithm always checks by running a top-$k$ query if a record should be in the solution (line 8 of Algorithm~\ref{algo:psi}) so $S\subseteq S^*$.

Next we prove $S^*\subseteq S$.
The algorithm visits the records in descending (score) order so it is not possible that a record $p\in S^*$ lies in at least $k$ blocking intervals before the algorithm visits $p$.
We also need to prove that the algorithm does not miss any durable record in a sub-interval $[l_j, r_j]$ that corresponds to an empty $M_j$.
If $\card{P([l_j, r_j])}\leq k$ then the result follows.
Otherwise, we argue using induction that each time when the algorithm finds a record $p$ in $M_j$ that is contained in at least $k$ blocking intervals,  any timestamp in the sub-interval $[l_j, p.t]$ lies in at least $k$ blocking intervals.
Hence, if $M_j$ is empty, any timestamp in $[l_j, r_j]$ lies in at least $k$ blocking intervals and no other durable records are in $[l_j, r_j]$.
\end{proofSk}

\subsection{Complexity Analysis of S-Hop}\label{sec:weight-solution:analysis}
The query complexity analysis of S-Hop is non-trivial and needs more care.
There are three main sub-procedures in S-Hop: find next highest score record, top-$k$ queries for durability check and blocking mechanism.
As presented above, the first two components both rely on multiple top-$k$ queries.
We first show a worst-case guarantee on the total number of top-$k$ queries called in the algorithm.
Please refer to Appendix~\ref{appendix:proof2} for full proof.
\begin{lemma}\label{lemma:ps-bound}
The total number of top-$k$ queries performed by the Score-Hop algorithm is $O(\card{S} + k\big\lceil\frac{\card{\Qinterval}}{\tau}\big\rceil)$.
\end{lemma}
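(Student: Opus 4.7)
The plan is to mirror the proof strategy of Lemma~\ref{lemma:asw-bound}. I would first bucket the top-$k$ queries that S-Hop issues: (i) one query per initial sub-interval in Line~3 (there are $\lceil\card{\Qinterval}/\tau\rceil$ of these), and (ii) three queries for every record popped from $H$ with fewer than $k$ blocking intervals, namely one durability check in Line~8 plus two split queries in Lines~14--15. Popping a record that already lies in $\geq k$ blocking intervals incurs no top-$k$ query, since Line~19 merely pushes an already-materialized element of some $M_j$. Hence it suffices to bound by $O(\card{S}+k\lceil\card{\Qinterval}/\tau\rceil)$ the number $N$ of records popped with fewer than $k$ blocking intervals.

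Next, I would split $N$ into the $\card{S}$ durable records that yield successful checks and the $N_{\textrm{false}}$ non-durable records that yield false checks, so the task becomes showing $N_{\textrm{false}}=O(k\lceil\card{\Qinterval}/\tau\rceil)$. Following the decomposition used in Lemma~\ref{lemma:asw-bound}, I would partition $\Qinterval$ into $\lceil\card{\Qinterval}/\tau\rceil$ disjoint $\tau$-length windows $\rho_1,\dots,\rho_m$ and prove that within any window $\rho$ the number of false checks on records arriving in $\rho$ is $O(\card{S\cap\rho}+k)$. Classify false checks in $\rho$ as type-1 (those that immediately follow a visit to a $\tau$-durable record of $\rho$ in the sequence of records of $\rho$ popped by S-Hop in descending score order) or type-2; by construction the type-1 count is $O(\card{S\cap\rho})$.

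The main obstacle will be bounding type-2 false checks in $\rho$ by $O(k)$. The key observation is that $\rho$ has length $\tau$, so for any $p\in\rho$ the window $[p.t-\tau,p.t]$ contains every record of $\rho$ arriving no later than $p$; consequently the top-$k$ returned by Line~8 on a false check at $p\in\rho$ exposes $k$ records whose blocking intervals each cover $p.t$ and every subsequent time in $\rho$. Because S-Hop pops heap elements in descending score order and a record's blocking count only increases, I would argue inductively that each successive type-2 false check in $\rho$ reveals at least one \emph{new} dominator whose blocking interval covers the remainder of $\rho$ (otherwise the record in question would already have been covered by $k$ prior blocking intervals and could not have produced a Line~8 call). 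After $k$ such revelations, every yet-unvisited record of $\rho$ sits in $\geq k$ blocking intervals and will be discarded at Line~7, so no further type-2 false checks can occur in $\rho$.

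Summing type-1 and type-2 contributions across the $m=\lceil\card{\Qinterval}/\tau\rceil$ windows yields $N_{\textrm{false}}=O(\card{S}+k\lceil\card{\Qinterval}/\tau\rceil)$; combining with the $\lceil\card{\Qinterval}/\tau\rceil$ setup queries and the three-queries-per-popped-record accounting from step (i)--(ii) gives the claimed $O(\card{S}+k\lceil\card{\Qinterval}/\tau\rceil)$ bound on the total number of top-$k$ calls. The subtlest point in executing this plan is formalizing the ``new dominator'' argument in the heterogeneous setting where blocking intervals can be contributed by records outside $\rho$ (from other sub-intervals or from durability-check results of earlier pops); I expect that restricting attention to dominators whose arrival time lies in $\rho$ suffices, since those are precisely the ones that produce fresh blocking coverage for the unvisited part of $\rho$.
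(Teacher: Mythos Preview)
Your high-level accounting in (i)--(ii) is correct and the reduction to bounding false checks in each $\tau$-window is the right framework. The gap is in the type-2 argument, which leans too heavily on the structure of T-Hop.

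In T-Hop, records in a window $\rho=[l,r]$ are visited in \emph{time} order, so ``the remainder of $\rho$'' is always a contiguous prefix and the shrinking $Z$-list argument of Lemma~\ref{lemma:asw-bound} exploits exactly that. In S-Hop, records are visited in global \emph{score} order, so the unvisited part of $\rho$ is a scattered set with no analogous monotone structure. Your parenthetical correctly shows that each type-2 false check $p$ reveals at least one new dominator $q$, but the further claim that $q$'s blocking interval ``covers the remainder of $\rho$'' is not justified: $q.t\in[p.t-\tau,p.t]$ only guarantees that $[q.t,q.t+\tau]$ covers $p.t$, not other unvisited records of $\rho$. Restricting to $q\in\rho$ does not help either, since then $[q.t,q.t+\tau]$ covers $[q.t,r]$ but not $[l,q.t)$. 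So $k$ new dominators do not by themselves force density $\geq k$ throughout $\rho$, and the step ``after $k$ such revelations every yet-unvisited record of $\rho$ sits in $\geq k$ blocking intervals'' does not follow.

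The paper's proof takes a genuinely different route. It \emph{assigns} each false check $p$ to the record $p'$ with the largest timestamp in $Q(k,[p.t-\tau,p.t])$, and first proves a structural lemma: at the moment $p$ is processed, either $\mathrm{dens}(p'.t)\geq k$, or $p'\in S$ with $\mathrm{dens}(p'.t)<k$, or $p'=a$. Type-1 false checks are those assigned to a durable $p'$ of density $<k$; since the insertions following the check raise $\mathrm{dens}(p'.t)$ to $\geq k$, each durable record can be a type-1 target at most once, giving $O(\card{S})$. For type-2, the paper fixes an arbitrary timestamp $t\in[l,r]$ and splits the type-2 false checks into $k_1$ to the left of $t$ and $k_2$ to the right. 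Each left false check's \emph{own} blocking interval covers $t$. For right-side false checks the key observation is that $\mathrm{dens}(p'.t)\geq k$ while $\mathrm{dens}(p.t)<k$, so some existing blocking interval must have its right endpoint in $[p'.t,p.t)$; a marking argument then charges each right-side type-2 false check to a distinct such interval, all of which necessarily cover $t$. Hence $\mathrm{dens}(t)\geq k_1+k_2=k$ after $k$ type-2 false checks, for every $t\in\rho$. This per-timestamp charging via the assigned record $p'$ is the missing ingredient in your sketch.
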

\begin{proofSk}
As we had in the proof of Lemma~\ref{lemma:asw-bound} we need to bound the number of false checks. Let $p$ be a false check and let $p'$ be the record with the largest timestamp in $Q(k, [p.t-\tau, p.t])$. We say that $p$ is assigned to $p'$. If $p'.t<a$, where $a$ is the timestamp such that $\Qinterval=[a, b]$, then we assign $p$ to $a$. We first show that at the moment that we find the false check $p$ the corresponding record $p'$ can only have one of the following three properties: i) it lies in at least $k$ blocking intervals, ii) $p'\in S$ and it lies in at most $k-1$ blocking intervals, iii) $p'=a$.
If $p'$ has property ii) then $p$ is a type-1 false check. Otherwise, $p$ is a type-2 false check.

We first bound the number of type-1 false checks. Notice that after a type-1 false check $p$ is assigned to $p'$ then all timestamps in the sub-interval $[p'.t, p.t]$ lie in at least $k$ records. So if another false check $q$ later in the algorithm is assigned to $p'$, again, then $q$ can only be a type-2 false check. Hence, the type-1 false checks are bounded by $O(\card{S})$.
In order to bound the type-2 false checks we assume a window $\rho$ of length $\tau$ in $\Qinterval$. We make the following key observation: At the moment that we find a type-2 false check $p$, it lies in at most $k-1$ blocking intervals while $p'$ lies in at least $k$ blocking intervals, so there should be a blocking interval $[l, r]$, where its right endpoint lies between $p'.t$ and $p.t$, i.e., $p'.t\leq r\leq p.t$.
(Notice that if $p'=a$ is assigned more than once then it also lies in at least $k$ blocking intervals.)
Using this observation along with other properties of the false checks we can show that after finding $k$ type-2 false checks in $\rho$, each timestamp in $\rho$ will lie in at least $k$ blocking intervals. Hence, the algorithm will not run any other top-$k$ query in $\rho$. Since there are $\big\lceil\frac{\card{\Qinterval}}{\tau}\big\rceil$ disjoint $\tau$-length sub-intervals in $\Qinterval$ we can bound the total number of type-2 false check by $O(k\big\lceil\frac{\card{\Qinterval}}{\tau}\big\rceil)$.
Overall, the number of false checks along with the durable records in $\Qinterval$ is $O(\card{S} + k\big\lceil\frac{\card{\Qinterval}}{\tau}\big\rceil)$.
\end{proofSk}
The lemma above also shows that the number of different sets $M_j$ that are created by the algorithm is 
$O(\card{S} + k\big\lceil\tfrac{\card{\Qinterval}}{\tau}\big\rceil)$. For each set we can visit at most $k$ records so in total the algorithm may visit $O(k(\card{S} + k\big\lceil\tfrac{\card{\Qinterval}}{\tau}\big\rceil))$ records \footnote{We note that the algorithm may visit some records, that lie in at least $k$ blocking intervals, more than once. The upper bound $O(k(\card{S} + k\big\lceil\frac{\card{\Qinterval}}{\tau}\big\rceil))$ counts all the times that the algorithm visits a record. We can modify the algorithm so that it does not visit the same record twice but that would make the description of the algorithm more complicated without decreasing the overall asymptotic complexity.}.
Each top(), or pop() procedure takes $O(\log n)$ time so in total we need $O(k(\card{S} + k\big\lceil\tfrac{\card{\Qinterval}}{\tau}\big\rceil)\log n)$ to visit these records.
Furthermore, recall that we need $O(\log n)$ time to check if a record lies in at least $k$ blocking intervals and $O(\log n)$ time to insert a blocking interval (using a binary search tree)
so we also spend $O(k(\card{S} + k\big\lceil\frac{\card{\Qinterval}}{\tau}\big\rceil)\log n)$ time for the blocking mechanism.
Notice that this running time is dominated by the time to answer $O(\card{S} + k\big\lceil\frac{\card{\Qinterval}}{\tau}\big\rceil)$ top-$k$ queries, so S-Hop answers a durable preference top-$k$ query in 
$O\big((\card{S} + k\big\lceil\tfrac{\card{\Qinterval}}{\tau}\big\rceil)(q(n) + k)\log n\big)$
time (with an efficient top-$k$ query procedure in $O(q(n) + k)$).
Similarly to T-Hop our index for S-Hop has near linear space and supports updates in polylogarithmic time for preference queries.

As it turns out, hopping in time-domain (T-Hop) and in score-domain (S-Hop) gives us the same complexity bound.
But in practice, S-Hop is more conservative in asking preference top-$k$ queries compared to T-Hop, due to the candidate pruning brought by blocking mechanism.
This makes S-Hop run faster than T-Hop when the top-$k$ query itself is expensive; i.e., a larger $k$ or on high-dimensional datasets.
\section{Expected Complexity}\label{sec:answer-size}
In the previous sections we presented two types of algorithms (time-prioritized and score-prioritized) to answer durable top-$k$ queries with the same worst-case guarantee on their query time.
In particular we showed that their query times depend on $k\big\lceil\frac{\card{\Qinterval}}{\tau}\big\rceil$ and $\card{S}$.
In this section, we go beyond the worst-case analysis and analyze their performance in a more ``expected'' sense.
Most importantly, we show in Section~\ref{sec:answer-size:random} that the expected size of $\card{S}$ is roughly $k\big\lceil\frac{\card{\Qinterval}}{\tau}\big\rceil$ if the scores of data records are drawn randomly from an \emph{arbitrary} distribution (which can be picked by a powerful adversary with the advance knowledge of the query parameters).
This result essentially establishes that, under this model, our best algorithms are in a sense optimal because their complexity is expected to be linear in the output size.
Secondly, in Section~\ref{sec:answer-size:kSkyband}, we study the expected complexity of Score-Band algorithm by bounding the expected size of $\tau$-durable $k$-skyband candidate set $\mathcal{C}$ using the same probabilistic model used in \cite{bentley1977average}.

\subsection{Expected Answer Size}\label{sec:answer-size:random}
Consider a set of $n$ records $P$ with $p_i.t=i$, for $p_i\in P$.
We analyze the expected size of a query output when the scores of records are assigned in a semi-random manner, where the data values can be arbitrarily chosen and then they are assigned in a random order to the records.
More formally, we consider a \emph{random permutation model} (RPM).
Let $\mathbf{X}=x_1 < x_2 < \ldots < x_n$ be a sequence of $n$ arbitrary non-negative numbers chosen by an adversary, and let $\sigma$ be a permutation of $\{1, \ldots, n\}$. We set $f(p_i)=x_{\sigma(i)}$, i.e., the score of record $p_i$ is $x_{\sigma(i)}$, where $\sigma(i)$ is the image of $i$ under $\sigma$.
As argued in~\cite{agarwal2018range}, the random permutation model is more general than the model in which all scores are drawn from an arbitrary unknown distribution, so our result holds for this model as well. 
The random permutation model has been widely used in a rich variety of domains and considered as a standard for complexity analysis; i.e., online algorithms~\cite{goel2008online, mahdian2011online, mehta2005adwords}, discrete geometry~\cite{agarwal2018union, agarwal2014union, har2015complexity}, and query processing~\cite{agarwal2018range}.
Our main result is the following.

\begin{lemma}\label{lemma:expected-answer-size}
In the random permutation model, given $k, \tau$ and $\Qinterval$, we have $\displaystyle \Expec{\cardin{S}}=k\tfrac{\card{\Qinterval}}{\tau+1}$.
\end{lemma}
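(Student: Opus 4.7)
The plan is to prove this via linearity of expectation, reducing the lemma to a per-record symmetry argument about the uniformly random permutation $\sigma$. By linearity,
\[
\Expec{\cardin{S}} \;=\; \sum_{t \in \Qinterval} \Pr\!\left[p_t \in \topk{\leq k}{[t-\tau,\, t]}\right],
\]
so it suffices to show that each term equals $k/(\tau+1)$. Since the adversarial choice of the score values $x_1<x_2<\cdots<x_n$ is fixed before the randomness of $\sigma$ is drawn, and $\sigma$ is a uniformly random permutation, any event that depends only on the \emph{relative order} of the scores assigned to a fixed subset of records reduces to a combinatorial fact about uniform permutations.

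Fix a timestamp $t$ and consider the window $[t-\tau,t]$, which (assuming $t\geq\tau+1$; the boundary case is handled symmetrically by restricting the window) contains exactly $\tau+1$ records $p_{t-\tau},p_{t-\tau+1},\ldots,p_t$. Under the random permutation model, the multiset of scores $\{f(p_i):i\in[t-\tau,t]\}$ is uniformly distributed among all size-$(\tau+1)$ subsets of $\mathbf{X}$, and conditioned on any such subset, the assignment of scores to the records within the window is itself a uniformly random bijection. Hence the rank of $f(p_t)$ within the window is uniformly distributed on $\{1,2,\ldots,\tau+1\}$, independent of which particular subset of $\mathbf{X}$ was chosen. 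Since scores are distinct, $p_t\in\topk{\leq k}{[t-\tau,t]}$ iff its rank is at most $k$, an event of probability $k/(\tau+1)$.

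Summing this per-record probability over all $t\in\Qinterval$ gives $\Expec{\cardin{S}} = k\cdot\tfrac{\card{\Qinterval}}{\tau+1}$, as claimed. The only subtlety is the handling of timestamps $t$ whose look-back window extends outside the data (i.e., $t<\tau+1$); there the window has only $t$ records and the corresponding probability becomes $\min(k,t)/t$, but these boundary terms are either absorbed into the stated formula by convention (e.g., assuming $\Qinterval$ lies sufficiently inside $\Times$) or contribute only a lower-order additive correction. The main conceptual step—and the only nontrivial observation—is recognizing that, even though the adversary chooses the numerical values, $\sigma$ being a uniform permutation makes the rank of any fixed record inside any fixed window uniform; once this symmetry is isolated, the computation is immediate.
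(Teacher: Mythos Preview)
Your proof is correct and follows the same overall strategy as the paper: linearity of expectation reduces the claim to showing that each record in $\Qinterval$ is $\tau$-durable with probability $k/(\tau+1)$. The one noteworthy difference is in how this per-record probability is established. The paper decomposes the event $\{X_i=1\}$ as a disjoint union over subsets $Q\subset P_i^\tau$ with $\card{Q}=l$ (for $l=0,\ldots,k-1$), conditions on the set $V$ of $\tau+1$ values assigned to the window, counts valid permutations to get $\Prob{A_Q\mid V}=\frac{l!(\tau-l)!}{(\tau+1)!}$, and then sums ${\tau\choose l}$ such terms to obtain $k/(\tau+1)$. You bypass this computation entirely by observing that, once one conditions on the multiset of values in the window, the assignment to positions is a uniform bijection, so the rank of $p_t$ within the window is uniform on $\{1,\ldots,\tau+1\}$. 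Your symmetry argument is more direct and arguably the ``right'' way to see the result; the paper's counting argument is just an explicit unpacking of the same symmetry. Your remark on boundary windows (when $t<\tau+1$) is also more careful than the paper, which implicitly assumes every window has exactly $\tau+1$ records.
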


\begin{proof}
For a record $p_i\in P(\Qinterval)$, let $X_i$ be the random variable, which is $1$ if $\pnt_i$ is a $\tau$-durable record, and $0$ otherwise. 
Thus, $\Expec{\cardin{S}} = \Expec{\sum_{i}X_i}.$
Using the linearity of expectation, $\Expec{\sum_{i}X_i}=\sum_{i}\Expec{X_i}=\sum_{i}\Prob{X_i=1}.$
    
Thus our goal is to compute $\Prob{X_i=1}$ : the probability that there are less than $k$ records in $[p_i.t-\tau, p_i.t)$ with score larger than $f(p_i)$. 
Let $P_i^\tau=\{\pnt_{i-\tau}, \ldots, \pnt_{i-1}\}$.
For a subset $Q\subset P_i^\tau$, let $A_Q$ be the binary random variable, which is $1$ if
all records in $Q$ have score greater than $f(\pnt_i)$ and all records in $\overline{Q}=P_i^\tau\setminus Q$ have score less than $f(\pnt_i)$.
We have 
\begin{equation}\label{eq:p_xi}
\Prob{X_i=1}=\sum_{l=0}^{k-1} \sum_{Q\subset P_i^\tau, \cardin{Q}=l}\Prob{A_Q}.%
\end{equation}

We estimate $\Prob{A_Q}$ as follows.
Let $V\subset \mathbf{X}$ with $\cardin{V}=\tau+1$.
We first bound the conditional probability $\Prob{A_Q\mid V}$ such that the records in $P_i^\tau\cup \{p_i\}$ are assigned scores from $V$.
We consider all possible permutations of $V$ and count only those cases where the records in $Q$ have larger value than $f(\pnt_i)$, and the records in $\overline{Q}$ have values less than the value of $f(\pnt_i)$. 
Notice that the permutations that satisfy this property must assign the first $l$ largest values of $V$ to $Q$, then the $(l+1)$-th largest value to $\pnt_i$ and the rest $\tau-l$ smaller values of $V$ to $\overline{Q}$.
Under such assignment, any permutations of values in $Q$ and $\overline{Q}$ are valid cases.
Hence, the number of valid permutations are $l!(\tau-l)!$, while the number of all possible permutations of $V$ are $(\tau+1)!$. 
We have
\begin{equation}\label{eq:p_q_v}
\Prob{A_Q\mid V}=\frac{l!(\tau-l)!}{(\tau+1)!}=\frac{1}{\tau+1}\frac{1}{{\tau \choose l}}.%
\end{equation}
Since (\ref{eq:p_q_v}) holds for all $V$, $\displaystyle\Prob{A_Q}=\frac{1}{\tau+1}\frac{1}{{\tau \choose l}}$.
Substituting this in (\ref{eq:p_xi}), we obtain
\begin{equation}\label{eq:bound}\resizebox{0.9\hsize}{!}{$%
    \Prob{X_i=1} =\sum_{l=0}^{k-1} {\tau \choose l}\frac{1}{\tau+1}\frac{1}{{\tau \choose l}}=\sum_{l=0}^{k-1}\frac{1}{\tau+1}=\frac{k}{\tau+1}%
$}
\end{equation}
Finally,
\begin{equation}
    \Expec{\card{S}}= \sum_{i}\Prob{X_i=1} =k\frac{\card{\Qinterval}}{\tau+1}.
\end{equation}
\end{proof}
Combining Lemma~\ref{lemma:expected-answer-size} with
the analysis of Sections~\ref{sec:time-solution:analysis} and~\ref{sec:weight-solution:analysis}, we conclude that in a random permutation model the \emph{expected} query time complexity of both Time-Hop and Score-Hop algorithms is
$O(\card{S}(q(n)+k)\log n)$, or equivalently $O\big(k\big\lceil\frac{\card{\Qinterval}}{\tau}\big\rceil(q(n)+k)\log n\big)$, where $O(q(n) + k)$ reflects the time complexity of answering a top-$k$ query.
In Section~\ref{sec:expr}, our experimental results on real and synthetic datasets both confirm this finding.

\subsection{Expected size of durable $k$-skyband}\label{sec:answer-size:kSkyband}
In this subsection we bound the expected size of $\tau$-durable $k$-skyband records, denoted by $\mathcal{C}$, from Section~\ref{sec:weight-solution:sky} in a probabilistic model similar to the previous case.
Recall that the size of $\mathcal{C}$ affects the running time of the S-Band algorithm.

Let $P=\{\pnt_1, \ldots, \pnt_n\}$ with $\pnt_i.t=i$.
We use the same random model as in \cite{bentley1977average} where (the attributes of) records are randomly generated.
The following lemma bounds the expected size of $\mathcal{C}$ (See Appendix~\ref{appendix:proof3}).
\begin{lemma}\label{lemma:expected-Skyband-size}
In the random model as in~\cite{bentley1977average},
given $k, \tau$ and $\Qinterval$, we have $\Expec{\card{\mathcal{C}}}=O(k\frac{\card{\Qinterval}}{\tau}\log^{d-1}\tau)$.
\end{lemma}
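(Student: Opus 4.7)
\begin{proofSk}
The plan is to argue via linearity of expectation, reducing the bound on $\Expec{\card{\mathcal{C}}}$ to a per-record probability computation that follows from the classical analysis of maxima of random points~\cite{bentley1977average}.

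For each record $\pnt_i\in P(\Qinterval)$, let $X_i$ be the indicator that $\pnt_i$ is a $\tau$-durable $k$-skyband record, i.e., that $\pnt_i$ lies in the $k$-skyband of $P([\pnt_i.t-\tau,\pnt_i.t])$. Then $\Expec{\card{\mathcal{C}}}=\sum_{i\in \Qinterval}\Prob{X_i=1}$, mirroring the setup of Lemma~\ref{lemma:expected-answer-size}. The window $[\pnt_i.t-\tau,\pnt_i.t]$ contains exactly $\tau+1$ records whose $d$ attributes, under the Bentley model, are i.i.d.\ draws from a continuous distribution, hence the $\tau+1$ records are exchangeable in the dominance order. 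The task therefore reduces to computing the probability that a designated point among $\tau+1$ i.i.d.\ points in $\Re^d$ is dominated by fewer than $k$ others.

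Next I would invoke (or sketch) the extension of the Bentley--Kung--Thompson bound to $k$-skybands: for $m$ i.i.d.\ points in $\Re^d$, the expected size of the $k$-skyband is $O(k\log^{d-1} m)$. The base case $k=1$ is exactly the classical maxima bound; the general case follows by peeling off the skyline, observing that the $k$-th ``layer'' is the skyline of the points not dominated by the first $k-1$ layers, and applying the dominance-robust version of Bentley's recursion in $d$. By exchangeability, every one of the $m=\tau+1$ points is a $k$-skyband point with the same probability, so this probability equals $\frac{1}{\tau+1}\cdot O(k\log^{d-1}\tau)=O\!\left(\frac{k\log^{d-1}\tau}{\tau}\right)$. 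Plugging this into the sum over the $\card{\Qinterval}$ indices yields
\[
\Expec{\card{\mathcal{C}}}=\card{\Qinterval}\cdot O\!\left(\frac{k\log^{d-1}\tau}{\tau}\right)=O\!\left(k\frac{\card{\Qinterval}}{\tau}\log^{d-1}\tau\right).
\]

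The main obstacle is justifying the $O(k\log^{d-1} m)$ bound for the expected $k$-skyband size cleanly; while this is folklore and follows either by the layer-peeling argument above or by a direct induction on $d$ analogous to Bentley's proof, some care is needed to handle ties and boundary effects and to ensure the constant in front absorbs the logarithmic factors correctly. Once that ingredient is in place, the exchangeability argument and the linearity-of-expectation summation are routine.
\end{proofSk}
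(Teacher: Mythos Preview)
Your overall architecture matches the paper's proof exactly: define indicators $X_i$, use linearity of expectation, invoke exchangeability over the $\tau+1$ points in the window, and reduce to bounding the expected $k$-skyband size of $\tau+1$ i.i.d.\ points by $O(k\log^{d-1}\tau)$.

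The one place to be careful is how you justify that $k$-skyband bound. Your layer-peeling sketch (``the $k$-th layer is the skyline of the points not dominated by the first $k-1$ layers'') is shakier than it sounds: once you remove the first skyline layer, the remaining points are \emph{conditioned} on not being maximal, so they are no longer i.i.d., and you cannot simply reapply the Bentley bound to the next layer. You would need an extra argument to control the expected size of each successive layer. The paper avoids this entirely and takes your second suggested route: it sets $A(m,d)$ to be the expected $k$-skyband size of $m$ i.i.d.\ points in $\Re^d$, derives the Bentley-style recurrence
\[
A(\tau+1,d)=\sum_{J=1}^{\tau+1}\frac{A(J,d-1)}{J}\le A(\tau+1,d-1)\cdot O(\log\tau),
\]
and iterates with base case $A(\tau+1,1)=k$ to obtain $A(\tau+1,d)=O(k\log^{d-1}\tau)$. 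So your ``direct induction on $d$ analogous to Bentley's proof'' is precisely what the paper does; commit to that branch and drop the peeling argument.
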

Combining Lemma~\ref{lemma:expected-Skyband-size}, the analysis of Section~\ref{sec:weight-solution:sky} and an efficient top-$k$ query procedure runs in $O(q(n) + k)$ time, the \emph{expected} query time complexity of Score-Band algorithm is 
$O\big(k\big\lceil\frac{\card{\Qinterval}}{\tau}\big\rceil(q(n)+k)\log n\log^{d-1}\tau\big)$.
It shows that the expected complexity of Score-Band algorithm can be higher than Time-Hop or Score-Hop algorithm by a factor of at most $\log^{d-1} \tau$.
Experimental results in Section~\ref{sec:expr} also confirm this finding as we vary the data dimensionalities.
The curse of dimensionality makes Score-Band algorithm perform worse even compared to other simple baselines.
Again, Time-Hop and Score-Hop are both generally applicable to arbitrary user-specified scoring functions, while Score-Band only works for monotone functions.
\section{Experiments}\label{sec:expr}

\subsection{Experiment Setup}\label{sec:expr:setup}
\begin{table}[]\small
    \centering
    \caption{Dataset summary}
    \begin{tabular}{|c|c|c|}
    \hline
        Dataset & Dimensionality & Size (\# records)  \\\hline
        \textbf{NBA-X} & \small{1,2,3,5} & \small{1M} \\\hline
        \textbf{Network-X} & \small{2,3,5,10,20,30,37} & \small{5M} \\\hline
        \textbf{Syn-X} & 2 & \small{1M,2M,5M,10M,20M,50M} \\\hline
    \end{tabular}
    \label{tab:data}
\end{table}
\mparagraph{Datasets}\label{sec:expr:dataset}
We use two real-life datasets and some synthetic ones, as summarized in Table~\ref{tab:data} and described below:

\textbf{NBA}\footnote{NBA datasets were collected from \url{https://www.basketball-reference.com/}} contains the performance of \emph{each} NBA player in \emph{each} game from 1983 to 2019, with in total $\sim 1$ million individual performance records on 15 numeric attributes.
Records are naturally organized by date and time, and we break ties (e.g., performances of different players in the same game) arbitrarily. 
We choose some subsets of 15 attributes to create datasets with different dimensions collectively referred to as \textbf{NBA-X}:
\textbf{NBA-1} selects only \emph{3-point-made};
\textbf{NBA-2} captures the \emph{points} and \emph{assists};
\textbf{NBA-3} chooses \emph{points}, \emph{assists}, \emph{rebounds};
\textbf{NBA-5} includes five dimensions: \emph{points}, \emph{assists}, \emph{rebounds}, \emph{steals} and \emph{blocks}.
    
\textbf{Network}\footnote{\url{https://kdd.ics.uci.edu/databases/kddcup99/kddcup99.html}} is the dataset from KDD Cup 1999.
This dataset contains $\sim5$ million records with 37 numeric attributes that describe network connections to a machine, including \emph{connection duration}, \emph{packet size}, etc.
The query in this case utilizes a scoring function that weighs a variety of numerical attributes to rank connections in order to identify unusual and potentially malicious ones.
Records have unique timestamps and are ordered by these timestamps.
Since these attributes have different measurement units, we scale the value of each dimension using MinMax normalization.
To study the impact of data dimensionalities on query efficiency, we choose the first 2, 3, 5, 10, 20, 30 and 37 attributes from the full dimensions to create 7 different datasets collectively referred to as \textbf{Network-X}, where $\mathbf{X}$ represents the dimensionality of the dataset.
\begin{figure}[t]\vspace*{-3ex}
    \centering
    \subfloat[IND]{\includegraphics[width=0.24\textwidth]{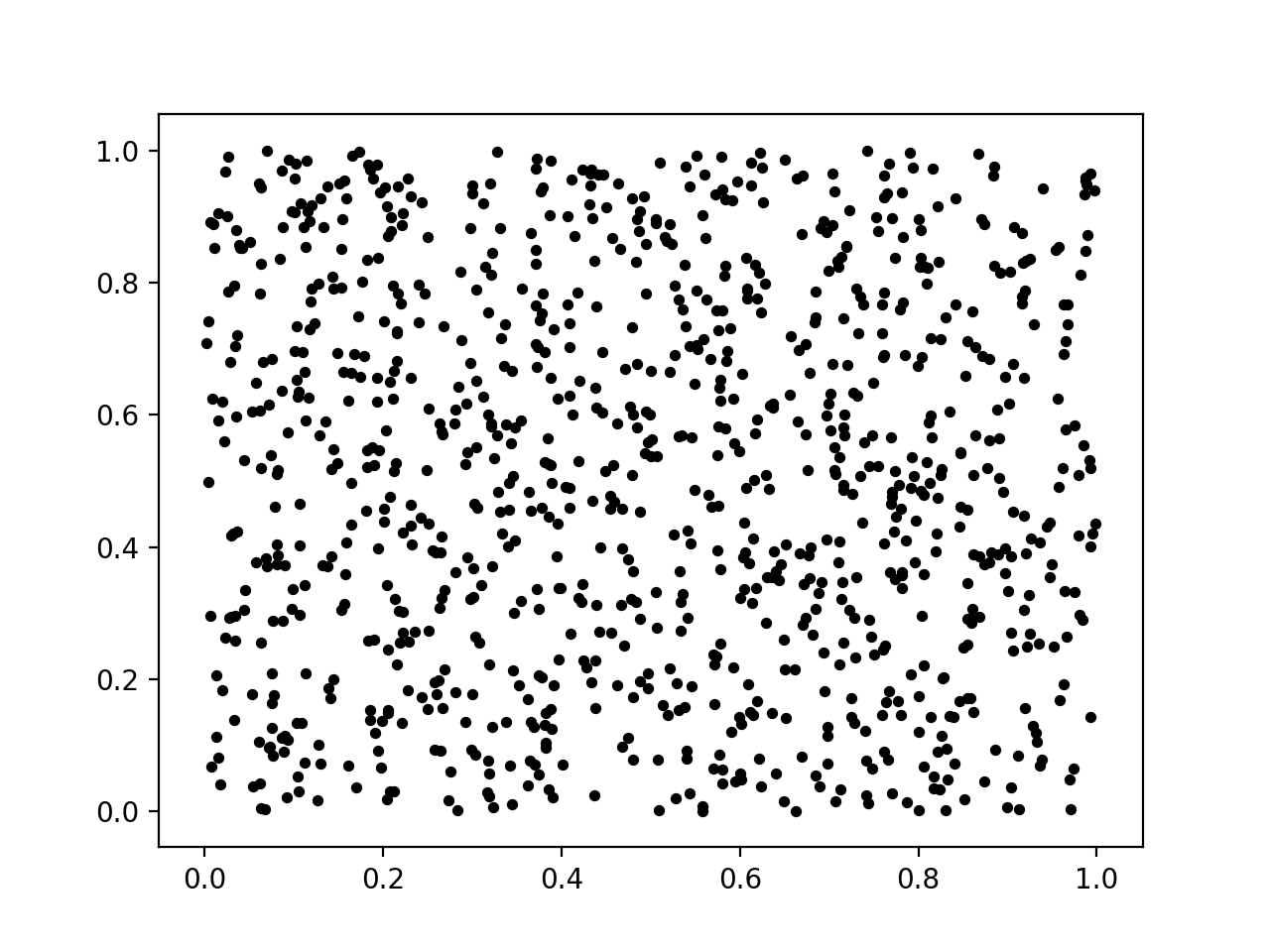}}
    \subfloat[ANTI]{\includegraphics[width=0.24\textwidth]{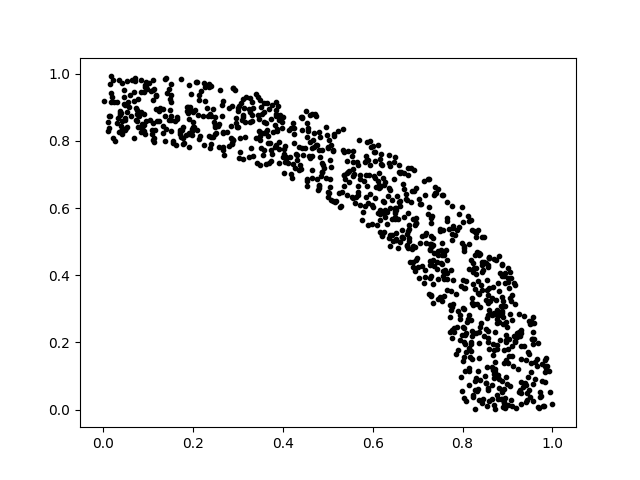}}
    \caption{Value distributions for synthetic dataset}
    \label{fig:syn}
\end{figure}   

\textbf{Syn} is a synthetic two-dimensional dataset that is used for scalability test on proposed solutions.
We generate Syn with independent (IND) and anti-correlated (ANTI) data distributed in a 2D unit square.
For IND data, the attribute values of each tuple are generated independently, following a uniform distribution. 
ANTI data are drawn from the portion inside the positive orthant of an annulus centered at the origin with outer radius 1 and inner radius 0.8, representing an environment that most of the records gather in $k$-skyband.
Figure~\ref{fig:syn} illustrates the sample value distributions of IND and ANTI.
The full size of Syn is 50 million and each data point has an unique arriving time.
We further choose several subsets of Syn with 1, 2, 5, 10 and 20 millions of records. 
The set of synthetic datasets are collectively referred to as \textbf{Syn-X}, where $\mathbf{X}$ represents data size.

\mparagraph{Query Parameters}
\begin{table}[t]\small
    \centering
    \caption{Query Parameters (default value in bold)}
    \begin{tabular}{|c|c|}
    \hline
    Parameter  & Range  \\\hline
    $k$ & 5, \textbf{10}, 15, 20, 25, 30, 35, 40, 45, 50\\\hline
    $\tau$ & 1\%, 5\%, 10\%, 15\%, \textbf{20\%}, 25\%, 30\%, 40\%, 50\%\\\hline
    $\card{\Qinterval}$ & 10\%, 20\%, 30\%, 40\%, \textbf{50\%}, 60\%, 70\%\ 80\%\\\hline
    $d$ & 1, \textbf{2}, 3, 5, 10, 20, 30, 37 \\\hline
    \end{tabular}
    \label{tab:parameter}
\end{table}
Table~\ref{tab:parameter} summaries the query parameters under investigation, along with their ranges and default values.
Among these, the query interval length $\card{\Qinterval}$ and the durability $\tau$ is measured as percentage of dataset size $n$.
When varying query interval length, we always fix the right endpoint of the interval to be the most recent timestamp in dataset and only move the left endpoint.

\begin{figure*}
\begin{minipage}[t]{0.48\textwidth}
\centering
    \subfloat[Performance on NBA-2 as $\tau$ varies.]{\includegraphics[width=0.5\textwidth]{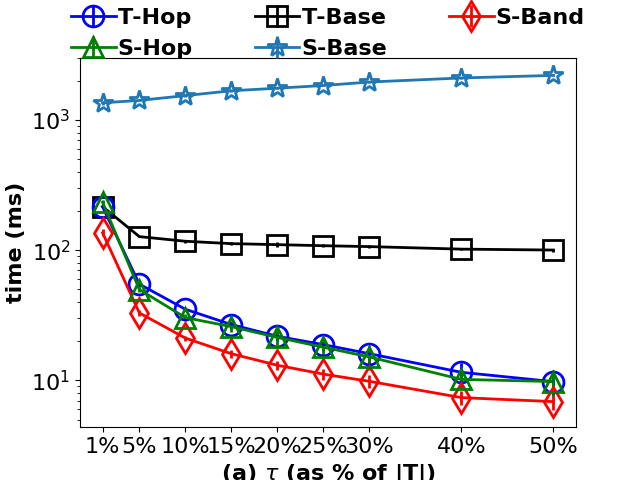}%
    \includegraphics[width=0.5\textwidth]{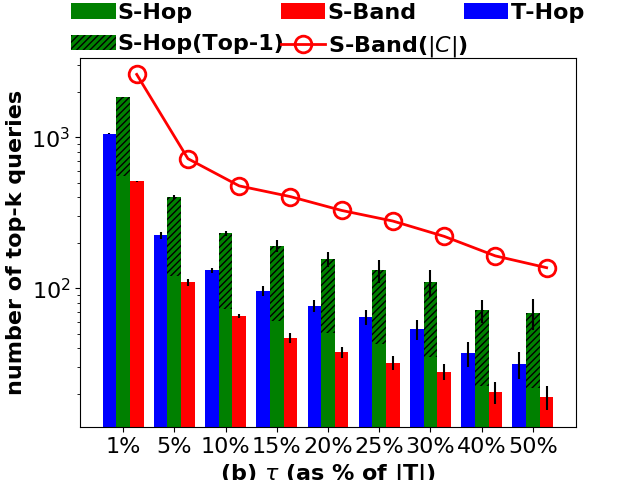}}\\
    \subfloat[Performance on Network-2 as $\tau$ varies.]{\includegraphics[width=0.5\textwidth]{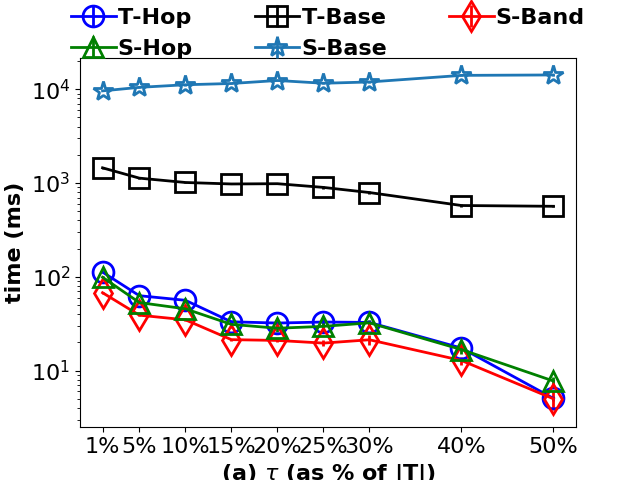}%
    \includegraphics[width=0.5\textwidth]{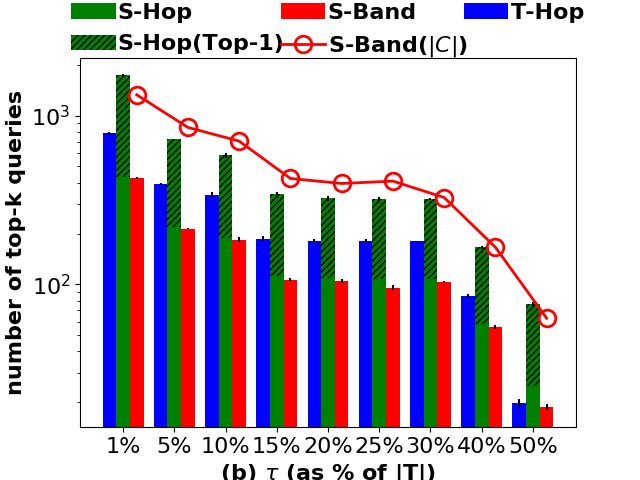}}
    \caption{Performance comparison as $\tau$ varies.}
    \label{fig:tau}
\end{minipage}
\begin{minipage}[t]{0.48\textwidth}
\centering
    \subfloat[Performance on NBA-2 as $k$ varies.]{\includegraphics[width=0.5\textwidth]{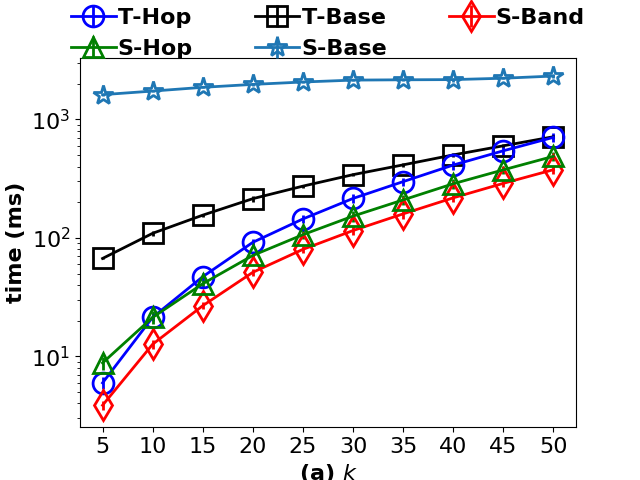}%
    \includegraphics[width=0.5\textwidth]{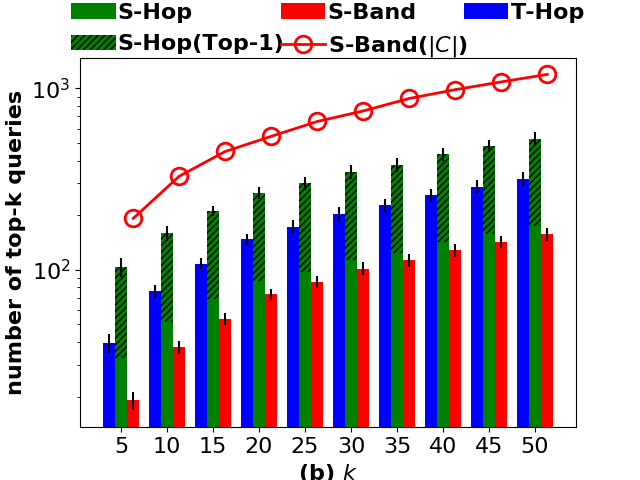}}\\
    \subfloat[Performance on Network-2 as $k$ varies.]{\includegraphics[width=0.5\textwidth]{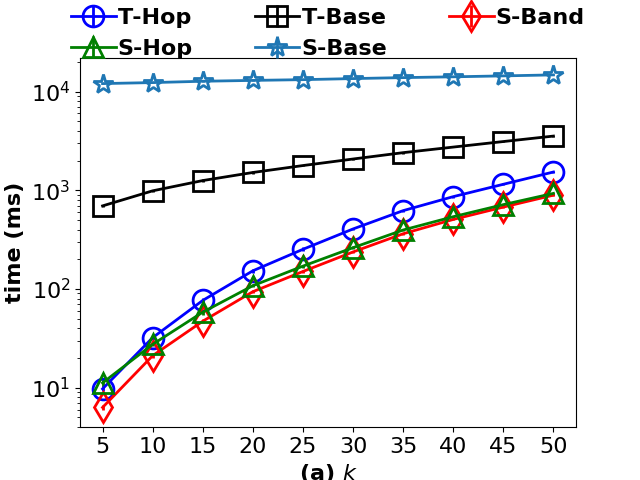}%
    \includegraphics[width=0.5\textwidth]{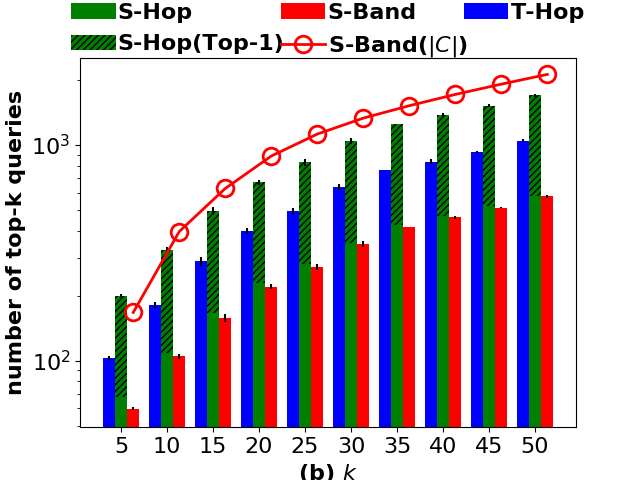}}
    \caption{Performance comparison as $k$ varies.}
    \label{fig:k}
\end{minipage}
\vspace{-1em}
\end{figure*}

\vspace{-1em}
\mparagraph{Implementations \& Evaluation Metric}
To make the discussions concrete and concise, we choose a linear and monotone preference scoring function throughout the experimental section in the simple form: $\score(p) =  \sum_{i=1}^{d} \wtv_i \cdot p.x_i$, where $\wtv$ is a user-specified preference vector and $\wtv_i$ is the (non-negative) weight for $i$-th attribute of a record.
At query time, user need to specify $\wtv$ as one of the input parameters.
Since the focus of this paper is not to develop the best possible index for top-$k$ queries $Q_\wtv(k, W)$, 
our implementation of the top-$k$ building block simply adopts a tree index (on the time domain of $P$), and answer $Q_\wtv(k, W)$ in a straightforward top-down manner with a branch-and-bound method. 
More specifically, each tree node stores the \emph{skyline} of all records that it contains.
The skyline helps us quickly identify the maximum score of each node under \emph{any} preference vector $\wtv$.
Then, to answer $Q_\wtv(k, W)$, it is sufficient to use at most $k$ nodes (that are contained by time window $W$) with the highest scores according to $\wtv$. 
This index offers adequate performance in our experiments,
but it can certainly be replaced by more sophisticated index with
better worst-case guarantees, without affecting the rest of our proposed solution.

Using the building block of top-$k$ queries described above, we further implement \textbf{T-Base} (Section~\ref{sec:time-solution:sw}), \textbf{T-Hop} (Section~\ref{sec:time-solution:esw}), \textbf{S-Base} (Section~\ref{sec:weight-solution:sort}), \textbf{S-Band} (Section~\ref{sec:weight-solution:sky}) and \textbf{S-Hop} (Section~\ref{sec:weight-solution:top1}).
Performance of various methods are evaluated using the following two metrics: number of top-$k$ queries and overall query time (in millisecond).
For each query parameter setting, we run the query 100 times with 100 different randomly generated preference vectors, and report the average with standard deviation.

All methods were implemented in C++, and all experiments were performed on a Linux machine with two Intel Xeon E5-2640 v4 2.4GHz processor with 256GB of memory.

\subsection{Algorithm Evaluations}\label{sec:expr:runtime}
According to the theoretical analysis of our algorithms in previous sections, the query efficiency depends on the length of durability window $\tau$, the value of $k$, the length of query interval $\Qinterval$, the data dimensionality $d$ and the data size $n$.
For fair evaluation and comparison of algorithm efficiency, we designed a set of variable-controlling experiments such that each time we only vary one query parameter of interest and fix the others to default values.

\vspace{-0.3em}
\mparagraph{Comparison of Algorithms when Varying $\tau$}\label{sec:expr:tau}
In Figure~\ref{fig:tau}, we investigate the performance of all durable top-$k$ solutions, as we vary durability $\tau$.
Figure~\ref{fig:tau}-1-(a) shows the query efficiency comparison on NBA-2.
The sorting based solution S-Base is the slowest, as it requires fully sorting all records in the time interval of length $\card{\Qinterval} + \tau$.
T-Base is faster than S-Base and mostly independent of $\tau$.
All the rest solutions, T-Hop, S-Hop and S-Band, become more efficient as we increase $\tau$, or equivalently, when query is more selective.
This finding confirms our analysis in Section~\ref{sec:answer-size} that the query efficiency bounds of Hop-based solutions and S-Band both depend on the answer size, which is $O(k\frac{\card{\Qinterval}}{\tau})$.
T-Hop and S-Hop nearly perform the same, while S-Band can be slightly faster.
When the query is highly selective ($\tau$ is half of the length of entire time domain), they are 1-2 orders of magnitude faster compared to T-Base and S-Base, respectively.
Similar trends can be seen in Figure~\ref{fig:tau}-2-(a), where we test algorithms on a larger dataset Network-2.
The only difference is that baseline solutions (T-Base and S-Base) are more expensive and the efficiency difference between baseline solutions and T-Hop/S-Hop/S-Band is even larger (up to 3 orders of magnitude).

Next, we take a closer look at T-Hop, S-Hop and S-Band in Figure~\ref{fig:tau}-1-(b), which compares the number of top-$k$ queries needed for these three advanced algorithms.
For S-Hop, the total number of top-$k$ queries is decomposed into two parts: top-$k$ queries for durability check (unshaded region of a green bar) and top-$k$ queries for finding the next highest score record (shaded region).
For S-Band, we also plot the size of durable $k$-skyband candidate set $C$ on top the figure as red circled line, reflecting the overhead cost of sorting $C$ for S-Band.
Now it is clear that the main reason why T-Hop/S-Hop/S-Band becomes faster when $\tau$ is large is that fewer top-$k$ queries are needed.
A more selective query with larger $\tau$ also makes the candidate set $C$ of S-Band smaller, demonstrating the effectiveness of using durable $k$-skyband to identify promising candidates.
On the other hand, we can see that S-Hop and S-Band ask fewer top-$k$ queries than T-Hop, demonstrating the pruning power of blocking mechanism in score-prioritized solutions.
This figure also explains why S-Band runs slightly faster than S-Hop and T-Hop on NBA-2 in this case, as S-Band requires the least number of top-$k$ queries and the overhead cost on sorting candidate set $C$ is relatively small on two-dimensional data.
Again, similar trends can be found in Figure~\ref{fig:tau}-2-(b).

\begin{figure}[t]
    \centering
    \subfloat[Performance on NBA-2 as $\card{\Qinterval}$ varies.]{\includegraphics[width=0.24\textwidth]{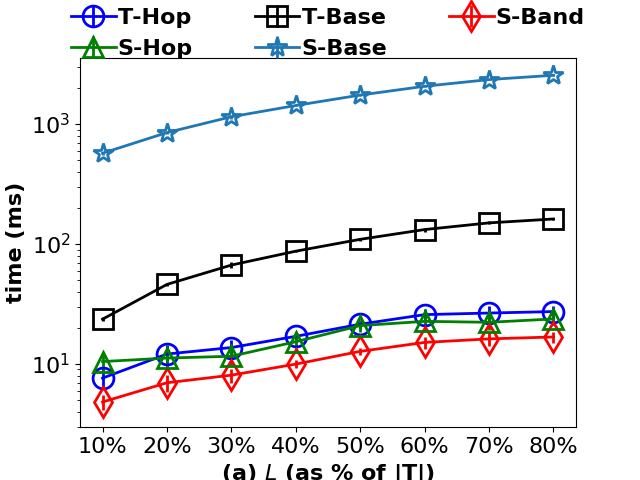}%
    \includegraphics[width=0.24\textwidth]{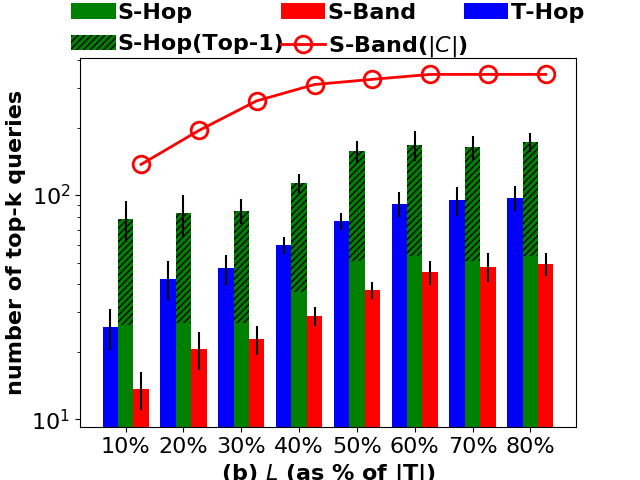}}\\
    \subfloat[Performance on Network-2 as $\card{\Qinterval}$ varies.]{\includegraphics[width=0.24\textwidth]{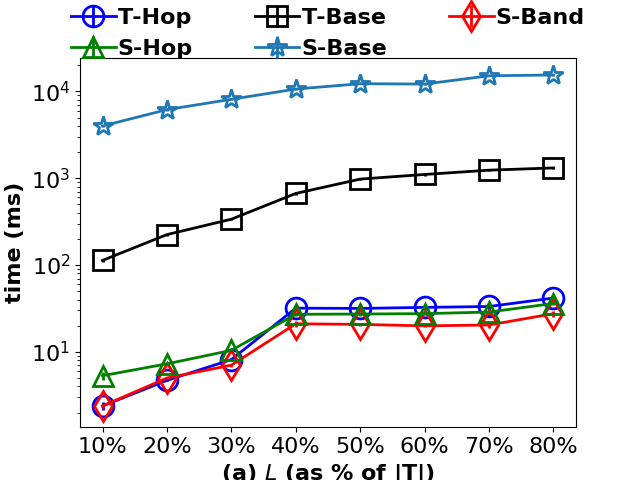}%
    \includegraphics[width=0.24\textwidth]{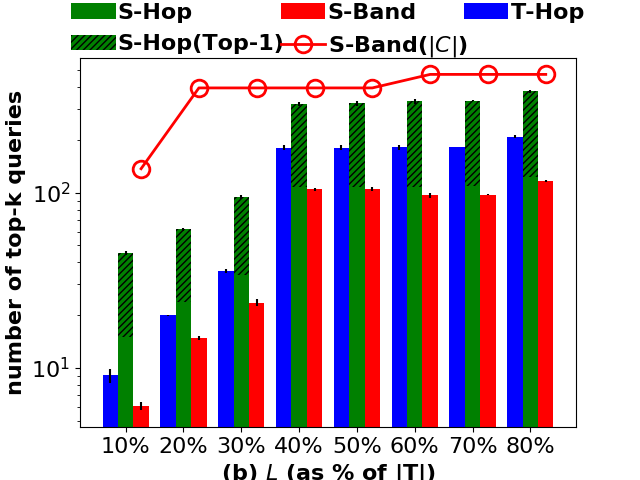}}
    \caption{Performance comparison as $\card{\Qinterval}$ varies.}
    \label{fig:L}
    \vspace{-1em}
\end{figure}

\vspace{-0.3em}
\mparagraph{Comparison of Algorithms when Varying $k$}\label{sec:expr:k}
Next, we study the effect of $k$ on efficiency.
Results are shown in Figure~\ref{fig:k}.
When we increase $k$, not only need we ask more top-$k$ queries (see Figure~\ref{fig:k}-1-(b) and Figure~\ref{fig:k}-2-(b)), but a top-$k$ query itself also becomes more expensive.
Thus in both Figure~\ref{fig:k}-1-(a) and Figure~\ref{fig:k}-2-(a), we can see that all algorithms (except S-Base) are slower when $k$ is larger.
Especially when $k$ reaches 50, top-$k$ computations become the dominant factor on overall efficiency, and the differences among the various algorithms diminish.
Still, S-Band and S-Hop have slight advantages over T-Hop on larger $k$, as they use blocking mechanism to prune candidate records and are more conservative in asking expensive top-$k$ queries.

\vspace{-0.3em}
\mparagraph{Comparison of Algorithms when Varying $\card{\Qinterval}$}\label{sec:expr:L}
In Figure~\ref{fig:L}, we compare the performance of proposed algorithms as we vary the query interval length $\card{\Qinterval}$.
In terms of efficiency, Figure~\ref{fig:L}-1-(a) and Figure~\ref{fig:L}-2-(a) show that T-Hop/S-Hop/S-Band is much faster than baseline solutions T-Base and S-Base, especially on the large dataset Network-2.
On the other hand, we also find that our proposed algorithms scale better with $\card{\Qinterval}$ than with $k$ (recall Figure~\ref{fig:k}).
The reason is that the time complexities of T-Hop/S-Hop and S-Band are quadratic in $k$ but only linear on $\card{\Qinterval}$ (recall Lemma~\ref{lemma:expected-answer-size} and Lemma~\ref{lemma:expected-Skyband-size}).
In terms of number of top-$k$ queries, in Figure~\ref{fig:L}-1-(b) and Figure~\ref{fig:L}-2-(b), it is not surprising to see that all proposed solutions ask more top-$k$ queries as $\card{\Qinterval}$ increases.
The relative performance of various algorithms is consistent with previous experiments where we varied $\tau$ or $k$.   

\vspace{-0.3em}
\mparagraph{Comparison of Algorithms when Varying $d$}\label{sec:expr:d}
\begin{figure}[t]
    \vspace{-1em}
    \centering
    \subfloat[]{\includegraphics[width=0.24\textwidth]{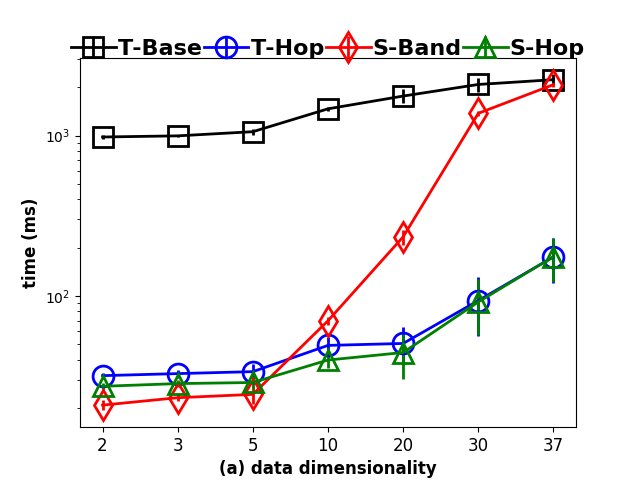}}
    \subfloat[]{\includegraphics[width=0.24\textwidth]{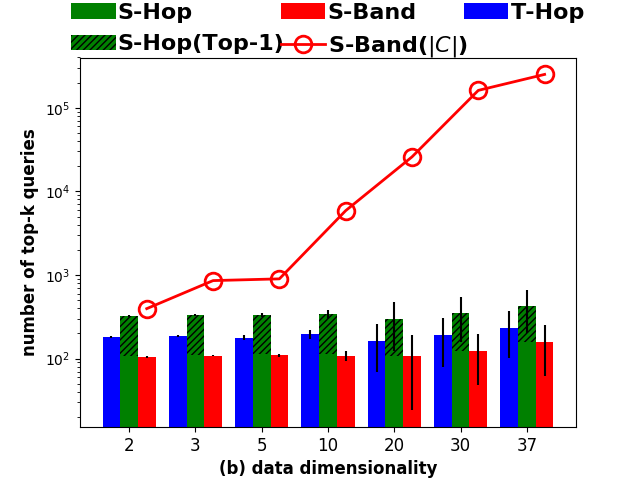}}
    \caption{Performance comparison on Network-X as $d$ varies.}
    \label{fig:d}
    \vspace{-2em}
\end{figure}
In this section, we study the effect of data dimensionality $d$ on algorithm performances.
Since the sorting-based S-Base is clearly inferior to other algorithms, here we only test T-Base, T-Hop, S-Band and S-Hop on Network-X with varying dimensions.
Results are shown in Figure~\ref{fig:d}.
Let us first take a look on Figure~\ref{fig:d}-2.
We can see that the number of top-$k$ queries for all proposed algorithms stays stable as we increase dimensionality.
This finding again confirms our theoretical analysis that the number of top-$k$ queries (or, answer size) depends only on $k\frac{\card{\Qinterval}}{\tau}$ and is independent of dimensionality $d$.
On the other hand, we can see that the size of candidate set $C$ for S-Band rockets in high dimensions, and can be up to 4 orders of magnitude larger than the size of actual promising records.
The sorting overhead on such huge candidate sets is already too big.
Then, let us go back to Figure~\ref{fig:d}-1.
The query time of T-Base, T-Hop and S-Hop slowly increases as we increase dimensionality, because top-$k$ queries on high-dimensions become more expensive, yet they ask roughly the same number of top-$k$ queries regardless of dimensionality.
While S-Band still performs well on low-dimensional data (less than 5 dimensions), in higher dimension S-Band becomes dramatically worse, even taking as much time as T-Base on Network-37.

\vspace{-0.3em}
\mparagraph{Scalability}\label{sec:expr:scale}
\begin{figure}
    \centering
    \subfloat[IND]{\includegraphics[width=0.24\textwidth]{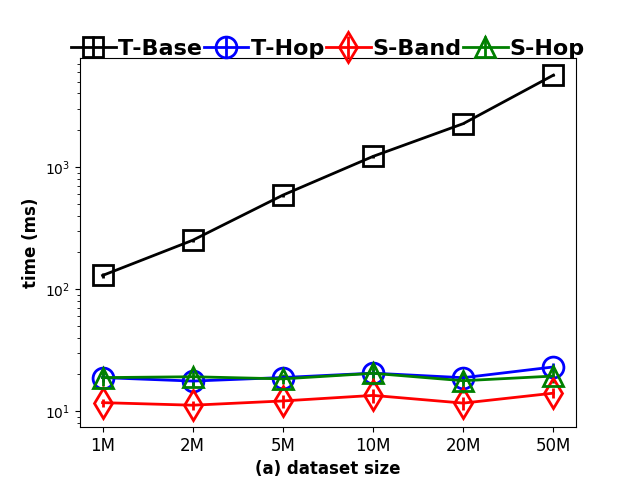}%
    \includegraphics[width=0.24\textwidth]{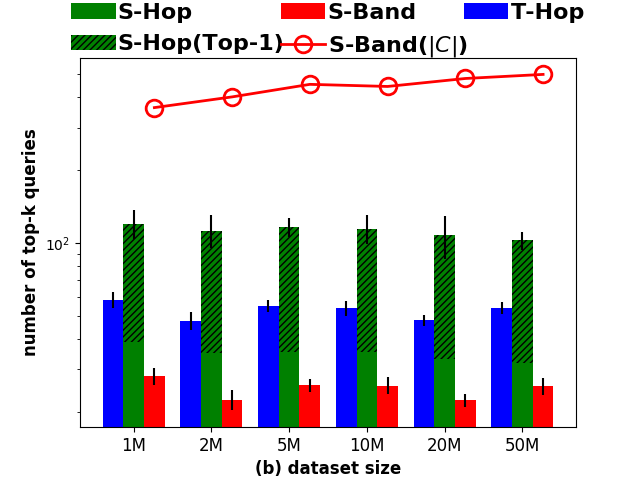}}\\
    \subfloat[ANTI]{\includegraphics[width=0.24\textwidth]{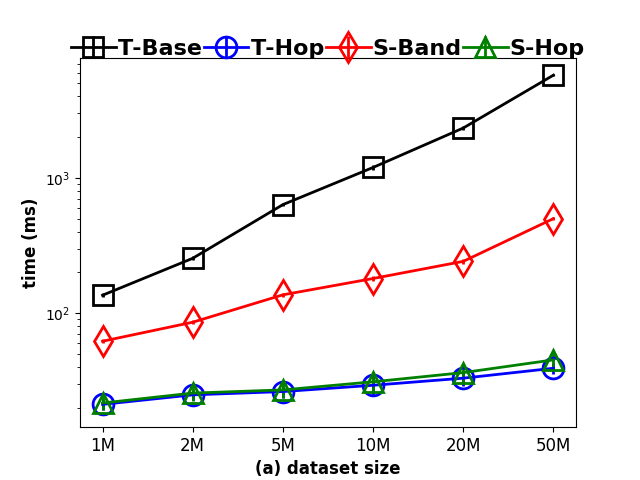}%
    \includegraphics[width=0.24\textwidth]{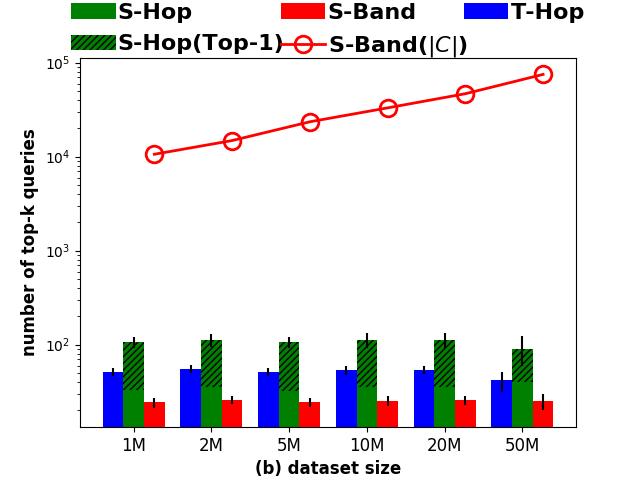}}
    \caption{Scalability test on IND and ANTI Syn-X.}
    \label{fig:scalability}
    \vspace{-1em}
\end{figure}
Finally, we use the two-dimensional synthetic dataset Syn-X to test the scalability of the proposed algorithms as we vary the input size from 1 million to 50 million.
Figure~\ref{fig:scalability} summarizes the results.
As the input size increases, we also increase the query interval length proportionally (so it remains at a fixed percentage of the data size).
As shown in Figure~\ref{fig:scalability}-1, we can see that T-Hop, S-Hop and S-Band scale well on large IND datasets, and S-Band again performs slightly better than T-Hop and S-Hop.
The running time of S-Base increases on larger datasets simply because we are also making the query interval longer.
Figure~\ref{fig:scalability}-1-(b) further illustrates that the total number of top-$k$ queries asked by different algorithms is also independent from the data size.
A larger dataset only makes top-$k$ queries more expensive.
Although the size of candidate set $\card{C}$ increases on larger IND datasets, its growth rate here is much lower than its growth rate when varying dimensionality $d$ in Figure~\ref{fig:d}.
Overall, on IND synthetic data, $\card{C}$ is only about 4-5 times bigger than the actual answer size, which will not incur a big sorting overhead for S-Band.
However, the situation is much different for ANTI Syn-X.
As shown in Figure~\ref{fig:scalability}-2, in terms of query efficiency, T-Hop and S-Hop still scale well, but S-Band now becomes much more expensive because of the data distribution of ANTI.
Most records in ANTI data would gather in $k$-skyband, resulting in $C$ up to 3 orders of magnitude larger than the actual answer size (see Figure~\ref{fig:scalability}-2-(b)), which hurts the performance of S-Band.
The efficiency of S-Band has a strong dependency on the candidate set $C$, or more generally, the data distribution.
In contrast, the performance of T-Hop and S-Hop in this case is nearly independent of both size and distribution of data; it is only linear to the answer size.

\mparagraph{Query Time Distribution over Different Real Datasets}\label{sec:expr:dist}
\begin{figure}[t]
    \centering\vspace*{-2ex}
    \includegraphics[width=0.35\textwidth]{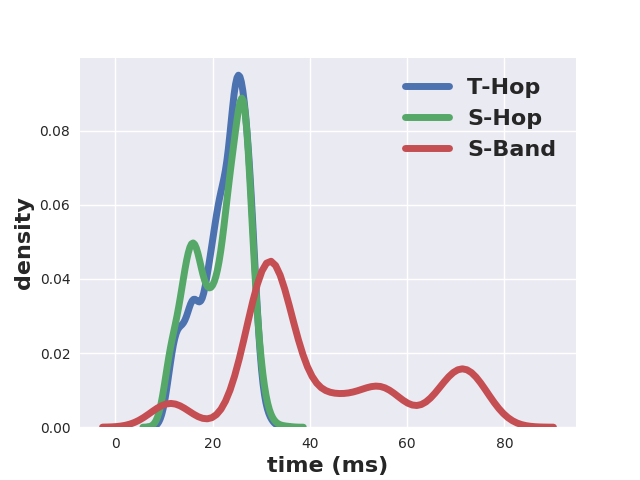}
    \caption{Runtime distribution on 5d NBA data.}
    \label{fig:distribution}
\end{figure}
Figure~\ref{fig:scalability} already clearly illustrates the performance difference of S-Band on IND and ANTI synthetic data, demonstrating the effect of data distributions on S-Band's query efficiency.
Here, we further compare T-Hop, S-Hop and S-Band on real data, and study how data distributions would influence their performance in practice.
We use NBA as the main data source, and select 20 combinations of 5 dimensions randomly chosen out of the 15 attributes, e.g., (points, assits, rebounds, steals, blocks), (points, assits, steals, blocks, 3-pointers-made), etc.
These resulting 20 datasets have the same dimensionality (5) but exhibit different distributions.
We run queries with default settings on each dataset, and plot the running time distribution for all datasets.
Results are shown in Figure~\ref{fig:distribution}.
We can see that S-Band takes longer time on average, and also has a wide span on query time.
This finding again confirms that S-Band is highly sensitive to underlying data distributions.
In contrast, running times of T-Hop and S-Hop are centered in narrower value ranges, showing their robustness to data distributions and further demonstrating their advantages over S-Band on real data.

In sum, we conclude that the Hop-based algorithms, T-Hop and S-Hop, are the best solutions for answering durable preference top-$k$ queries.
They scale well on large datasets as well as to high dimensions, and most importantly, their query time complexity is proportional to the answer size.
This property makes T-Hop and S-Hop run even faster when the query is highly selective; i.e., smaller $k$ or larger $\tau$, which tend to be the more practical and meaningful query settings that people would use in real-life applications.
While S-Band is also a reasonable approach, its performance depends highly on the data characteristics (faring poorly in high dimensions and for certain distributions).
S-Band also requires additional offline indexing for finding durable $k$-skyband candidates. 
Overall, as demonstrated by experiments on both real and synthetic data, efficiency and robustness of Hop-based solutions make them more attractive solutions.
Even on very large and high-dimensional datasets, T-Hop/S-Hop only need less than a second to return durable top records for any given preference, which enables interactive data exploration.

\subsection{DBMS-Based Implementations}
\begin{table}
    \centering
    \caption{Query time (in seconds) comparison on NBA-2 when varying $\tau$. PostgreSQL backend.}
    \begin{tabular}{|c|c|c|c|c|c|}
    \hline
    $\tau$ (as \% of $|T|$) & 10\% & 20\% & 30\% & 40\% & 50\%\\\hline
    T-Hop & 0.46 & 0.28 & 0.18 & 0.12 & 0.1  \\\hline
    T-Base & 2.2 & 1.9 & 1.8 & 1.7 & 1.7  \\\hline
    \end{tabular}
    \label{tab:dbms_by_tau}
\end{table}
\begin{table}
    \centering
    \caption{Query time (in seconds) comparison on NBA-2 when varying $|I|$. PostgreSQL backend.}
    \begin{tabular}{|c|c|c|c|c|c|}
    \hline
    $|I|$ (as \% of $|T|$) & 10\% & 20\% & 30\% & 40\% & 50\%\\\hline
    T-Hop & 0.1 & 0.16 & 0.17 & 0.2 & 0.26 \\\hline
    T-Base & 0.46 & 0.93 & 1.3 & 1.6 & 2  \\\hline
    \end{tabular}
    \label{tab:dbms_by_L}
\end{table}
\begin{table}
    \centering
    \caption{Query time (in seconds) comparison on different datasets. Dataset size (measured by DBMS storage size) is shown in parentheses.
    PostgreSQL backend.
    }
    \begin{tabular}{|c|c|c|c|}
    \hline
    Dataset & NBA-2 (\textbf{0.05 G}) & Syn-IND (\textbf{30 G}) & Syn-ANTI (\textbf{30 G})\\\hline
    T-Hop & 0.28 & 1.9 &  2.3 \\\hline
    T-Base & 1.9 & 773 & 787  \\\hline
    \end{tabular}
    \label{tab:dbms_by_size}
\end{table}
\textcolor{black}{
To demonstrate the generality of proposed solutions and its possibility of integrating into a DBMS, we further test the algorithms utilizing PostgreSQL~\cite{postgresql} as the backend DBMS.
More specially, we load the datasets NBA-2, Syn-500M (IND) and Syn-500M (ANTI) into PostgreSQL tables.
The table schema consist of numeric attributes of the records and an additional column representing arriving time instant.
For algorithm implementations, we code T-Hop and T-Base as stored procedures using PL/Python with PostgreSQL's native support operators.\footnote{The other proposed solution, S-Hop, requires a more delicate query procedure and data structures (recall Algorithm~\ref{algo:psi}). Hence it is more suitable to implement S-Hop as a wrapper function outside the DBMS.}
Besides data tables, we also create corresponding index tables to support efficient top-$k$ records retrieval.
The index table is similar to the tree-based index as we used for previous experiments, providing sufficient data reduction for answering range top-$k$ queries.
Again, the top-$k$ module can be replaced by more sophisticated indexes with better performance, without affecting the rest of our solution.
}

\textcolor{black}{
Tables~\ref{tab:dbms_by_tau} and \ref{tab:dbms_by_L} show the results of testing T-Hop and T-Base on the smaller NBA-2 dataset with the same query setting as before, varying durability $\tau$ and query interval length $|I|$ to compare query efficiencies.
Similar conclusions can be drawn here.
T-Base always pays linear cost (continuous sliding windows) to visit all records in the query interval. 
Thus, the running time is linear to $|I|$ (Table~\ref{tab:dbms_by_L}), and nearly independent of $\tau$ (Table~\ref{tab:dbms_by_tau}).
In comparison, T-Hop's complexity is linear to the answer size, which makes it run faster as query becomes more selective (smaller $|I|$ or larger $\tau$).
Overall, T-Hop is at least 10$\times$ faster than T-Base.
}

\textcolor{black}{
In Table~\ref{tab:dbms_by_size}, we increase the dataset size up to 500M records, which takes around 30 Gigabytes of disk space in PostgreSQL.
Running default queries in such cases, we can see that T-Hop is more than 100$\times$ faster than T-Base, bringing down the query time from nearly 12 minutes to just 2 seconds.
T-Hop also apparently scales well on large datasets, since the complexity is mostly linear to the answer size.
The query time increase solely comes from the more expensive top-$k$ module.
On the contrary, the continuous sliding-window nature of T-Base makes it prohibitively slow when dealing with large amounts of temporal data.
}
\vspace{-0.5em}
\subsection{Summary of Experiments}
In sum, we conclude that the Hop-based algorithms, T-Hop and S-Hop, are the best solutions for answering durable preference top-$k$ queries.
They scale well on large datasets as well as to high dimensions, and most importantly, their query time complexity is proportional to the answer size.
This property makes T-Hop and S-Hop run even faster when the query is highly selective; i.e., smaller $k$ or larger $\tau$, which tend to be the more practical and meaningful query settings that people would use in real-life applications.
While S-Band is also a reasonable approach, its performance depends highly on the data characteristics (faring poorly in high dimensions and for certain distributions).
S-Band also requires additional offline indexing for finding durable $k$-skyband candidates. 
Overall, as demonstrated by experiments on both real and synthetic data, efficiency and robustness of Hop-based solutions make them more attractive solutions.
Even on very large and high-dimensional datasets, T-Hop/S-Hop only need less than a second to return durable top records for any given preference, which enables interactive data exploration.
Finally, T-Hop can be efficiently implemented inside a DBMS; for large datasets (tens of Gigabytes), it brings down the query time to just a couple of seconds, from more than 10 minutes required without our solution.
\section{Related Work}\label{sec:related-work}
The notion of ``durability'' on temporal data has been studied by previous works, but they consider different definitions of durability and/or different data models from ours.
In \cite{jiang2011prominent} and \cite{zhang2014discovering}, authors implicitly considered ``durability'' in the form of prominent streaks in sequence data, and devised efficient algorithms for discovering such streaks.
Given a sequence of values, a prominent streak is
a long consecutive subsequence consisting of only large (small) values.
Their algorithms can also be extended to find general top-k, multi-sequence and multi-dimensional prominent streaks.
Jiang and Pei~\cite{jiang2009online} studied Interval Skyline Queries on time series, which can be viewed as another type of ``durability'' when segments of time series dominate others.

Another line of durability-related work on temporal data is represented by~\cite{lee2009consistent, wang2014durable, gao2018durabletopk} and~\cite{mamoulis2010durable}.
Consider a time-series dataset with a set of objects, where the data values of each object are measured at regular time intervals; i.e., stock markets.
At each time $t$, objects are ranked according to their values at $t$.
The definition of ``durability'' therein is the fraction of time during a given time window when an object ranks $k$ or above.
This line of work mainly focused on how to efficiently aggregate rankings (rank $\leq k$ or not) over time.
\cite{mamoulis2010durable} applied durable top-$k$ searches in document archives, finding documents that are consistently among the most relevant to query keywords throughout a given time interval. 
In that setting, the challenge is how to merge multiple per-keyword relevance scores over time efficiently into a single rank. 

Durable queries also arise in dynamic or temporal graphs, typically represented as sequences of graph snapshots.
For example, in~\cite{semertzidis2016durable} and \cite{semertzidis2018top}, authors considered the problem of finding the (top-$k$) most durable matches of an input graph pattern query; that is, the
matches that exist for the longest period of time. 
The main focus is on the representations and indexes of the sequence of graph snapshots, and how to adapt classic graph algorithms in this setting.


Besides durability, Mouratidis et al.~\cite{mouratidis2006continuous} studied how to continuously monitor top-$k$ results over the most recent data in a streaming setting.
Our baseline solution used in Section~\ref{sec:expr} shares the same spirit as algorithms in~\cite{mouratidis2006continuous} for incrementally maintaining top-$k$ results over consecutive sliding windows.

\section{Conclusion}\label{sec:conclusion}
In this paper, we have initiated a comprehensive study into the problem of finding durable top records in large instant-stamped temporal datasets by running durable top-$k$ queries.
We proposed two types of novel algorithms for efficiently solving this problem, and provided in-depth theoretical analysis on the complexity of the problem itself and of our algorithms.
As demonstrated by experiments on real and synthetic data, our best solutions, Time-Hop and Score-Hop, find interesting durable top records in under a second on large and high-dimensional datasets, and can be up to 2 orders of magnitude faster than existing baselines.

\bibliographystyle{IEEEtran}
\bibliography{bibliography}

\begin{thebibliography}{10}
\providecommand{\url}[1]{#1}
\csname url@samestyle\endcsname
\providecommand{\newblock}{\relax}
\providecommand{\bibinfo}[2]{#2}
\providecommand{\BIBentrySTDinterwordspacing}{\spaceskip=0pt\relax}
\providecommand{\BIBentryALTinterwordstretchfactor}{4}
\providecommand{\BIBentryALTinterwordspacing}{\spaceskip=\fontdimen2\font plus
\BIBentryALTinterwordstretchfactor\fontdimen3\font minus
  \fontdimen4\font\relax}
\providecommand{\BIBforeignlanguage}[2]{{%
\expandafter\ifx\csname l@#1\endcsname\relax
\typeout{** WARNING: IEEEtran.bst: No hyphenation pattern has been}%
\typeout{** loaded for the language `#1'. Using the pattern for}%
\typeout{** the default language instead.}%
\else
\language=\csname l@#1\endcsname
\fi
#2}}
\providecommand{\BIBdecl}{\relax}
\BIBdecl

\bibitem{afshani2009optimal}
P.~Afshani and T.~M. Chan, ``Optimal halfspace range reporting in three
  dimensions,'' in \emph{Proceedings of the twentieth annual ACM-SIAM symposium
  on Discrete algorithms}, 2009.

\bibitem{chazelle1985power}
B.~Chazelle, L.~J. Guibas, and D.-T. Lee, ``The power of geometric duality,''
  \emph{BIT Numerical Mathematics}, vol.~25, 1985.

\bibitem{matousek1992reporting}
J.~Matousek, ``Reporting points in halfspaces,'' \emph{Computational Geometry},
  vol.~2, 1992.

\bibitem{agarwal2000efficient}
P.~K. Agarwal \emph{et~al.}, ``Efficient searching with linear constraints,''
  \emph{J. Comp. and System Sciences}, vol.~61, 2000.

\bibitem{agarwal1995dynamic}
P.~K. Agarwal and J.~Matou{\v{s}}ek, ``Dynamic half-space range reporting and
  its applications,'' \emph{Algorithmica}, vol.~13, 1995.

\bibitem{chan2012three}
T.~M. Chan, ``Three problems about dynamic convex hulls,'' \emph{International
  Journal of Computational Geometry \& Applications}, vol.~22, 2012.

\bibitem{chang2000onion}
Y.-C. Chang \emph{et~al.}, ``The onion technique: indexing for linear
  optimization queries,'' in \emph{SIGMOD}, vol.~29, 2000.

\bibitem{yi2003efficient}
K.~Yi, H.~Yu, J.~Yang, G.~Xia, and Y.~Chen, ``Efficient maintenance of
  materialized top-k views,'' in \emph{ICDE}, 2003.

\bibitem{hristidis2004algorithms}
V.~Hristidis and Y.~Papakonstantinou, ``Algorithms and applications for
  answering ranked queries using ranked views,'' \emph{VLDB J.}, vol.~13, 2004.

\bibitem{ilyas2008survey}
I.~F. Ilyas, G.~Beskales, and M.~A. Soliman, ``A survey of top-k query
  processing techniques in relational database systems,'' \emph{CSUR}, 2008.

\bibitem{mouratidis2006continuous}
K.~Mouratidis, S.~Bakiras, and D.~Papadias, ``Continuous monitoring of top-k
  queries over sliding windows,'' in \emph{SIGMOD}, 2006.

\bibitem{jin2008sliding}
C.~Jin, K.~Yi, L.~Chen, J.~X. Yu, and X.~Lin, ``Sliding-window top-k queries on
  uncertain streams,'' \emph{VLDB}, vol.~1, 2008.

\bibitem{das2007ad}
G.~Das, D.~Gunopulos, N.~Koudas, and N.~Sarkas, ``Ad-hoc top-k query answering
  for data streams,'' in \emph{VLDB}, 2007.

\bibitem{de1997computational}
M.~De~Berg, M.~Van~Kreveld, M.~Overmars, and O.~Schwarzkopf, ``Computational
  geometry,'' in \emph{Computational geometry}, 1997.

\bibitem{bentley1977average}
J.~L. Bentley, H.-T. Kung, M.~Schkolnick, and C.~D. Thompson, ``On the average
  number of maxima in a set of vectors and applications.'' CARNEGIE-MELLON
  UNIV, Tech. Rep., 1977.

\bibitem{agarwal2018range}
P.~K. Agarwal \emph{et~al.}, ``Range-max queries on uncertain data,''
  \emph{Journal of Computer and System Sciences}, vol.~94, 2018.

\bibitem{goel2008online}
G.~Goel and A.~Mehta, ``Online budgeted matching in random input models with
  applications to adwords,'' in \emph{Proc. 19th Annual ACM-SIAM Symp. on
  Discrete algorithms}, 2008.

\bibitem{mahdian2011online}
M.~Mahdian and Q.~Yan, ``Online bipartite matching with random arrivals: an
  approach based on strongly factor-revealing lps,'' in \emph{Proc. 43rd Annual
  ACM Symp. on Theory of computing}, 2011.

\bibitem{mehta2005adwords}
A.~Mehta, A.~Saberi, U.~Vazirani, and V.~Vazirani, ``Adwords and generalized
  on-line matching,'' in \emph{FOCS}, 2005.

\bibitem{agarwal2018union}
P.~K. Agarwal, H.~Kaplan, and M.~Sharir, ``Union of hypercubes and 3d minkowski
  sums with random sizes,'' in \emph{ICALP}, 2018.

\bibitem{agarwal2014union}
P.~K. Agarwal, S.~Har-Peled, H.~Kaplan, and M.~Sharir, ``Union of random
  minkowski sums and network vulnerability analysis,'' \emph{Discrete \&
  Computational Geometry}, vol.~52, 2014.

\bibitem{har2015complexity}
S.~Har-Peled and B.~Raichel, ``On the complexity of randomly weighted
  multiplicative voronoi diagrams,'' \emph{Discrete \& Computational Geometry},
  vol.~53, 2015.

\bibitem{postgresql}
\emph{PostgreSQL}, 2019, \url{https://www.postgresql.org/}.

\bibitem{jiang2011prominent}
X.~Jiang, C.~Li, P.~Luo, M.~Wang, and Y.~Yu, ``Prominent streak discovery in
  sequence data,'' in \emph{SIGKDD}, 2011.

\bibitem{zhang2014discovering}
G.~Zhang, X.~Jiang, P.~Luo, M.~Wang, and C.~Li, ``Discovering general prominent
  streaks in sequence data,'' \emph{TKDD}, vol.~8, 2014.

\bibitem{jiang2009online}
B.~Jiang and J.~Pei, ``Online interval skyline queries on time series,'' in
  \emph{ICDE}, 2009.

\bibitem{lee2009consistent}
M.~L. Lee, W.~Hsu, L.~Li, and W.~H. Tok, ``Consistent top-k queries over
  time,'' in \emph{DASFAA}, 2009.

\bibitem{wang2014durable}
H.~Wang, Y.~Cai, Y.~Yang, S.~Zhang, and N.~Mamoulis, ``Durable queries over
  historical time series,'' \emph{TKDE}, vol.~26, 2014.

\bibitem{gao2018durabletopk}
J.~Gao, P.~K. Agarwal, and J.~Yang, ``Durable top-k queries on temporal data,''
  \emph{VLDB}, vol.~11, 2018.

\bibitem{mamoulis2010durable}
N.~Mamoulis, K.~Berberich, S.~Bedathur \emph{et~al.}, ``Durable top-k search in
  document archives,'' in \emph{SIGMOD}, 2010.

\bibitem{semertzidis2016durable}
K.~Semertzidis and E.~Pitoura, ``Durable graph pattern queries on historical
  graphs,'' in \emph{ICDE}, 2016.

\bibitem{semertzidis2018top}
K.~Semertzidis \emph{et~al.}, ``Top-$k$ durable graph pattern queries on
  temporal graphs,'' \emph{TKDE}, vol.~31, 2018.

\bibitem{borzsony2001skyline}
S.~Borzsony, D.~Kossmann, and K.~Stocker, ``The skyline operator,'' in
  \emph{17th ICDE}.\hskip 1em plus 0.5em minus 0.4em\relax IEEE, 2001, pp.
  421--430.

\end{thebibliography}

\newpage
\appendix
\section{Implementation details of top-$k$ details and query algorithm}

\subsection{Implementation Details}\label{sec:pre:implementation}

For simplicity and usability, we adopt a more straightforward tree-based implementation that better serves our purpose for answering a preference top-$k$ query in a time window. 

\begin{algorithm}[h]
    \KwIn{Dataset $P$}
    \KwOut{A Tree Index $\mathcal{T}$ for Preference Top-$k$ Query}
    \SetKwProg{Fn}{Def}{:}{}
     \Fn {BuildTree$(t_1, t_2)$} {
     \If {$t_1 > t_2$} {
     \KwRet null\;
     }
     \ElseIf {$t_1 == t_2$} {
     create a leaf node $n$\;
     $n.\text{skyline} \leftarrow P[t_1]$\;
     $n.\text{interval} \leftarrow [t_1,t_2]$\;
     \KwRet $n$\;
     }
     \Else{
     create a node $n$\;
     $t_m \leftarrow t_1 + (t_1+t_2)/2$\;
     $n.\text{left\_child} \leftarrow$ BuildTree$(t_1, t_m)$\;
     $n.\text{right\_child} \leftarrow$ BuildTree$(t_m+1, t_2)$\;
     $n.\text{interval} \leftarrow [t_1,t_2]$\;
     $n.\text{skyline} \leftarrow \kskyband{1}{\kskyband{1}{P([t_1, t_m])} \cup \kskyband{1}{P([t_m+1, t_2])}}$\; 
     \KwRet $n$\;
     }
     }
     \caption{Tree Index Construction\label{algo:topk-index}}
\end{algorithm}

\begin{algorithm}[h]
    \KwIn{$P$, $\mathcal{T}$,$\wtv$, $k$, and $\Qinterval$}
    \KwOut{$\topk{\leq k}{\wtv}{\Qinterval}$}
    \SetKwProg{Fn}{Def}{:}{}
     \Fn {PreferenceTopK$(\Qinterval, \wtv, k)$} {
     candidates $\leftarrow \emptyset$\;
     $\mathcal{Q}$ (priority queue in descending order of key) $\leftarrow \emptyset$\;
     $N \leftarrow$ a set of canonical nodes from $\mathcal{T}$ that covers $\Qinterval$\;
     \For {$n_i \in N$} {
        Compute interval max score $v_i$ using $n_i$.skyline\;
        $\mathcal{Q}$.push$(v_i, n_i)$\;
     }
     \While {$\card{\text{candidate}} < k$ and $!\mathcal{Q}$.empty()} {
        $v, n \leftarrow \mathcal{Q}$.top(), $\mathcal{Q}$.pop()\;
        \If {$\card{n\text{.interval}} > $ LENGTH\_THRESHOLD} {
            $n_l \leftarrow n$.left\_child, $n_r \leftarrow n.$right\_child\;
            Compute $v_l, v_r$ from $n_l, n_r$ using skylines\;
            $\mathcal{Q}$.push$(v_l, n_l)$, $\mathcal{Q}$.push$(v_r, n_r)$\;
        }
        \Else {
            candidate.push($n$)\;
        }
     }
     Compute $\topk{\leq k}{\wtv}{\Qinterval}$ using candidates\;
     \KwRet $\topk{\leq k}{\wtv}{\Qinterval}$\;
     }
 \caption{Preference Top-$k$ Query $Q(\wtv, k , \Qinterval)$ \label{algo:topk-query}}
\end{algorithm}

Consider a query time $W$ decomposed into $n$ non-empty disjoint time intervals $W = \bigcup_{i=1}^{n}\Qinterval_i$. 
Assume for each interval $\Qinterval_i$ we know the highest score (with respect to $\wtv$) among $P(\Qinterval_i)$, referred to as \emph{interval max score}.
It is sufficient to use at most $k$ out of $n$ intervals \footnote{Using all records in $P$ that arrive during these $k$ time intervals to compute the top-$k$ results.} to answer a preference top-$k$ query $Q(\wtv, k, W)$ if the chosen intervals have the $k$ highest interval max scores.
Based on this idea, our implementation takes advantages of two important properties of \emph{skyline}\cite{borzsony2001skyline} to improve the efficiency of index construction and query procedure.

As shown in Algorithm~\ref{algo:topk-index}, the tree index is built upon the dimension of arriving time of all points in $P$ in a bottom-up manner.
Each leaf node corresponds to a single timestamp (Line 6) and each internal node represents a time window (Line 14).
Each tree node also contains a skyline of points arriving during its window.
Skylines in all internal nodes can be efficiently computed from bottom to up (Line 15).

Algorithm~\ref{algo:topk-query} specifies the query procedure using the tree index.
Starting from the canonical intervals (nodes) of query window $\Qinterval$ (Line 4), we recursively split long intervals \footnote{The pre-determined value of LENGTH\_THRESHOLD controls the granularity of the chosen $k$ intervals for preference top-$k$ computations. By default, we set LENGTH\_THRESHOLD=128.} into smaller ones (Line 10-13), and use a priority queue to remember at most $k$ intervals that have the highest interval max scores (Line 15).
Finally, a preference top-$k$ result is computed using at most $k$ such intervals and all corresponding records in $P$ (no more than $k * $LENGTH\_THRESHOLD in total).
We can efficiently compute the interval max score for any interval $\Qinterval$ (Line 6 and 12). 

\subsection{Missing Proofs of Section~\ref{sec:time-solution}}\label{appendix:proof1}
\remove{
\begin{figure}
    \centering
    \tikzset{every picture/.style={line width=0.75pt}} 

\begin{tikzpicture}[x=0.75pt,y=0.75pt,yscale=-1,xscale=1]

\draw    (127,154) -- (354.61,154) ;
\draw [shift={(356.61,154)}, rotate = 180] [color={rgb, 255:red, 0; green, 0; blue, 0 }  ][line width=0.75]    (10.93,-3.29) .. controls (6.95,-1.4) and (3.31,-0.3) .. (0,0) .. controls (3.31,0.3) and (6.95,1.4) .. (10.93,3.29)   ;

\draw [line width=3.75]    (168,154) -- (328.3,154) ;

\draw  [color={rgb, 255:red, 0; green, 0; blue, 0 }  ,draw opacity=1 ][fill={rgb, 255:red, 0; green, 0; blue, 0 }  ,fill opacity=1 ] (307.11,153.77) .. controls (307.12,150.23) and (310,147.37) .. (313.54,147.38) .. controls (317.09,147.39) and (319.95,150.27) .. (319.94,153.81) .. controls (319.93,157.35) and (317.05,160.21) .. (313.51,160.2) .. controls (309.97,160.19) and (307.1,157.31) .. (307.11,153.77) -- cycle ;
\draw  [fill={rgb, 255:red, 208; green, 2; blue, 27 }  ,fill opacity=1 ] (283.89,149) -- (294,149) -- (294,159.11) -- (283.89,159.11) -- cycle ;
\draw  [fill={rgb, 255:red, 208; green, 2; blue, 27 }  ,fill opacity=1 ] (235.89,149) -- (246,149) -- (246,159.11) -- (235.89,159.11) -- cycle ;
\draw  [fill={rgb, 255:red, 208; green, 2; blue, 27 }  ,fill opacity=1 ] (188.89,149) -- (199,149) -- (199,159.11) -- (188.89,159.11) -- cycle ;
\draw   (155.3,174.2) .. controls (155.3,178.87) and (157.63,181.2) .. (162.3,181.2) -- (227.34,181.2) .. controls (234.01,181.2) and (237.34,183.53) .. (237.34,188.2) .. controls (237.34,183.53) and (240.67,181.2) .. (247.34,181.2)(244.34,181.2) -- (307.3,181.2) .. controls (311.97,181.2) and (314.3,178.87) .. (314.3,174.2) ;
\draw  [fill={rgb, 255:red, 7; green, 117; blue, 246 }  ,fill opacity=1 ] (142.15,147) -- (148.3,157.2) -- (136,157.2) -- cycle ;
\draw   (288.3,138.2) .. controls (288.3,133.53) and (285.97,131.2) .. (281.3,131.2) -- (217.3,131.2) .. controls (210.63,131.2) and (207.3,128.87) .. (207.3,124.2) .. controls (207.3,128.87) and (203.97,131.2) .. (197.3,131.2)(200.3,131.2) -- (138.3,131.2) .. controls (133.63,131.2) and (131.3,133.53) .. (131.3,138.2) ;
\draw [line width=2.25]  [dash pattern={on 2.53pt off 3.02pt}]  (168,77.2) -- (168,157) ;

\draw [line width=2.25]  [dash pattern={on 2.53pt off 3.02pt}]  (328.3,78.2) -- (328.3,154) ;

\draw    (170,95) -- (326.3,95) ;
\draw [shift={(328.3,95)}, rotate = 180] [fill={rgb, 255:red, 0; green, 0; blue, 0 }  ][line width=0.75]  [draw opacity=0] (8.93,-4.29) -- (0,0) -- (8.93,4.29) -- cycle    ;
\draw [shift={(168,95)}, rotate = 0] [fill={rgb, 255:red, 0; green, 0; blue, 0 }  ][line width=0.75]  [draw opacity=0] (8.93,-4.29) -- (0,0) -- (8.93,4.29) -- cycle    ;

\draw (354,164) node  [align=left] {$\displaystyle t$};
\draw (247,192) node  [align=left] {$\displaystyle \tau $};
\draw (216,120) node  [align=left] {$\displaystyle \tau $};
\draw (244,86) node  [align=left] {$\displaystyle \tau $};
\draw (316,166) node  [align=left] {$\displaystyle p_{1}$};
\draw (291,165) node  [align=left] {$\displaystyle p_{2}$};
\draw (242,165) node  [align=left] {$\displaystyle p_{3}$};
\draw (195,165) node  [align=left] {$\displaystyle p_{4}$};
\draw (144,164) node  [align=left] {$\displaystyle p_{5}$};

\end{tikzpicture}
    \caption{On a fixed $\tau$-length window, a false check could reduce the number of top-$k$ records that come from this window by 1. }
    \label{fig:asw-proof}
\end{figure}
}

\begin{proof}[Proof of Lemma~\ref{lemma:asw-bound}]
Let $\Qinterval=[a,b]$ and $\rho=[b-\tau, b]$.
Let $S_\rho=S\cap \rho$, i.e., the set of durable records with timestamp in $\rho$. We show that the number of false checks in $\rho$ is $O(\cardin{S_\rho}+k)$. Without loss of generality, assume that for any pair of records $p_i, p_j$ with $i<j$, $p_i.t<p_j.t$.

We consider two types of false checks in $\rho$.
If the algorithm finds a false check immediately after a durable record then this is a type-1 false check. Otherwise it is a type-2 false check.
From the definition, the number of type-1 false checks is bounded by $O(\cardin{S_\rho})$. Next we show that the number of type-2 false checks in $\rho$ is bounded by $O(k)$. If the number of records in $\rho$ is less than $k$ then the result follows, so we assume that $\cardin{P[b-\tau, b]}> k$.

Recall that if the algorithm visits a record $p$ it computes the top-$k$ elements in $[p.t-\tau, p.t]$. Let $U_p$ be the list of the top-$k$ items in $[p.t-\tau, p.t]$.
Let $Z_p=U_p\cap \rho$, be the list of these top-$k$ elements that lie in $\rho$. Generally we refer to $Z_p$ as a $Z$ list. At the beginning of the algorithm assume that we find the top-$k$ elements in a window of length $\tau$ from the rightmost item in $\rho$, so we have a list $Z$ with $\cardin{Z}\leq k$. We show the following two observations. i) Each time that the algorithm finds a type-2 false check the new $Z$ list of top-$k$ records in $\rho$ has cardinality at least one less than the previous list. ii) The cardinalities of the $Z$ lists as we run the algorithm in $\rho$ are never increasing.
If we show (i), (ii) we could argue that after the algorithm finds $k$ type-2 false checks in $\rho$, the $Z$ list will be empty and the algorithm will visits a record out of $\rho$.

Without loss of generality, assume that the rightmost record in $\rho$ was a type-2 false check. Let $Z_r$ be the current list as defined above. The algorithm visits the record with the largest timestamp in $Z_r$, say $p$, which is a type-2 false check. Let $Z_p=U_p\cap \rho$ be the new list. We compare the new list $Z_p$ with the old list $Z_r$. Notice that every record $q\notin Z_r$ with time $q.t\in[b-\tau,p.t]$ has $\score(q)<\score(p)$ (1), otherwise $Z_r$ would not be in the correct top-$k$ list. Furthermore, $p$ is a false check because there are at least $k$ records in $[p.t-\tau, p.t)$ with score larger than the score of $p$, (2). From (1), (2) it follows that $Z_p\subset Z_r$. Hence, the cardinality of the new $Z$ list is less than the cardinality of the previous $Z$ list.
In addition, notice that there are at least $k-\cardin{Z_p}$ records in $[p.t-\tau, b-\tau]$ with scores greater than the score of $p$, and generally greater than the score of any record in $P[b-\tau, p.t]\setminus Z_p$, (3).

In order to complete the proof we need to show what is the new $Z$ list when the algorithm visits a series of durable records.
Assume that $Z_p$ is the current list (or the initial one) and the algorithm visits $Z_p$'s record with the larger timestamp. Assume that the algorithm finds a series of durable records, where $j$ of them belong in $Z_p$.
Notice that $j\geq 1$. Let $q$ be the type-1 false check that the algorithm visits (after the series of durable records) and let $Z_q$ be the new list. We need to show that $\cardin{Z_q}\leq \cardin{Z_p}$. We assume that $q\notin Z_p$ (if $q\in Z_p$ then notice that $Z_q\subset Z_p$ so the result follows). Recall from (3) that there are at least $k-\cardin{Z_p}$ records with timestamp $[p.t-\tau>q.t-\tau,b-\tau]$ and with score greater than the score of $q$. We call these records $A$. Moreover, there are $\cardin{Z_p}-j$ records in $Z_p$ with timestamp in $[b-\tau, q.t)$ and with score greater than the score of $q$. We call these records $B$.
We have $\cardin{Z_q}\leq \cardin{B}+(k-\cardin{A}-\cardin{B})=k-\cardin{A}\leq \cardin{Z_p}$.
Hence, we conclude that there are $O(k)$ type-2 false checks and the total number of false checks in $\rho$ is $O(\cardin{S_\rho}+k)$.

There are $\ceiling{\frac{\card{\Qinterval}}{\tau}}$ intervals of length $\tau$ in $\Qinterval$ so the total number of false checks is $O(\cardin{S}+k\ceiling{\frac{\card{\Qinterval}}{\tau}})$.
\remove{
The key to the proof is to bound the number of false checks on a fixed $\tau$-length time window.
Fix an arbitrary $\tau$-length interval $[a,b] \subseteq \Qinterval$, where $b-a=\tau$.
When we slide a window with length $\tau$ to do durability check from time $b$ backwards to time $a$, the number of top-$k$ records we get from each sliding window that arrive during $[a,b]$, denoted $C_{a:b}$ , is at most $k$.
More importantly, it is not hard to find that $C_{a:b}$ is non-increasing for all sliding windows from $[a,b]$.

Recall in Algorithm~\ref{algo:asw}, if we find the current record is durable, we slide the window back by 1; Otherwise, we slide the window to the latest timestamp of top-$k$ records of the current window.
Thus, on the other hand, for each sliding window, its corresponding top-$k$ records also serve as candidate skipping positions that we slide the current window to.
We argue that only false checks could reduce $C_{a:b}$, which equivalently represents the available skipping positions in $[a,b]$.
When $C_{a:b}$ becomes 0, no more false checks would happen in $[a,b]$.
Let us consider a concrete example as shown in Figure~\ref{fig:asw-proof}, where we are interested in finding durable top-3 records.
Assume $p_1$ is the first non-durable record we visit, and a false check is preformed to get actual top-3 records as $\{p_2, p_3, p_4\}$.
Next, we should move the window directly from $p_1$ to $p_2$.
Then, there can only be one of the two cases.
1) If $p_2$ is durable, either the current top-$k$ is still $\{p_2, p_3, p_4\}$ or a new record $p_5$ (out of the fixed $\tau$-length window) replaces $p_3$ or $p_4$ to be a new top-3.
The former keeps $C_{a:b}$ at 3, while the latter decreases $C_{a:b}$ to 2.
If the latter happens and we continue the process, it is obvious that $p_5$ will remain in top-3 for any subsequent sliding windows in $[a,b]$. 
Otherwise, it contradicts the fact of $p_5$ being a top-3 in the first place.
2) If $p_2$ is not durable, we come back to the scenario as we find $p_1$ is not durable and repeat the similar analysis on $p_2$.
Overall, to make $C_{a:b} = 0$, in the worst case, we need $\max\{\card{S_{a:b}}, k\}$ false checks, where $S_{a:b}$ is the number of durable records in $[a,b]$.
Finally, there are at most $\big\lceil\frac{\card{\Qinterval}}{\tau}\big\rceil$ disjoint $\tau$-length window.
Combining all parts together,
The total number of top-$k$ queries needed in ASW is $O(\card{S} + k\big\lceil\frac{\card{\Qinterval}}{\tau}\big\rceil)$}
\end{proof}


\newcommand{\dens}{dens}
\subsection{Missing Proofs of Section~\ref{sec:weight-solution}}
\label{appendix:proof2}

We first introduce some useful notation.
Let $\dens(t)$ be the density of a timestamp $t$, i.e., the number of blocking intervals that contain $t$. Notice that $\dens(t)$ is changing as we execute the algorithm.
If a record $p_i$ is blocked by at least $k$ records, i.e., $\dens(p_i.t)\geq k$, at line 7 of Algorithm~\ref{algo:psi} then we call it an \emph{auxiliary record}. Overall, we have that a record can be a durable record, a false check (we run a top-$k$ query but the record does not belong in the solution), or an auxiliary record.

We first start with a lemma that will be useful later.
\begin{lemma}\label{apndx:lem1}
Let $M_i$ be a set that is empty after the algorithm considering a (auxiliary) record from $M_i$ with density at least $k$, and let $[l_i, r_i]$ be its corresponding sub-interval. Then one of the two cases hold: The density of each timestamp in $[l_i, r_i]$ is at least $k$ or the algorithm has visited all records in $P([l_i, r_i])$.
\end{lemma}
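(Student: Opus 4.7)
The proof naturally splits on $\card{P([l_i, r_i])}$. If $\card{P([l_i, r_i])} \leq k$, then $M_i$ was initialized as the whole of $P([l_i, r_i])$ (the ``top-$k$'' of a set of size at most $k$ is the set itself), so $M_i$ becoming empty forces every such record to have been popped from $H$ and visited, giving the second alternative immediately. The substantive case is $\card{P([l_i, r_i])} > k$, where $\card{M_i} = k$ and, since any non-auxiliary pop out of $M_i$ would have triggered the split at lines~14--16 of Algorithm~\ref{algo:psi} and destroyed $M_i$ long before it could empty, all $k$ records in $M_i$ must have been considered as auxiliary. My goal in this case is to establish the first alternative: $\dens(t) \geq k$ for every $t \in [l_i, r_i]$ at the moment $M_i$ empties.

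My strategy is to invoke the inductive claim already sketched in the proof of Lemma~\ref{lemma:psi:correctness}: whenever the algorithm pops an auxiliary $p \in M_j$ from an active $M_j$, the density of every timestamp in $[l_j, p.t]$ is already $\geq k$ at that moment (and stays $\geq k$ afterwards, since $\Gamma$ only grows). Applying this claim to $p_R \in M_i$, the auxiliary record in $M_i$ with the largest timestamp, yields $\dens(t) \geq k$ on $[l_i, p_R.t]$. For the remaining tail $(p_R.t, r_i]$, I rely on $M_i$'s own blocking intervals: each of the $k$ records $p \in M_i$ had $[p.t, p.t+\tau]$ added to $\Gamma$ at line~20 when it was processed, and for any $t \in (p_R.t, r_i]$ both $p.t \leq p_R.t < t$ and $t \leq r_i \leq l_i + \tau \leq p.t + \tau$ hold (the last step using $|[l_i, r_i]| \leq \tau$), so all $k$ of these intervals cover $t$ and $\dens(t) \geq k$. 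Combining the two regions gives the density bound on all of $[l_i, r_i]$.

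The main obstacle is establishing the inductive claim itself, which I would prove by induction on the sequence of auxiliary records across the whole execution. At the inductive step for $p \in M_j$, I decompose the $\geq k$ witnesses of $\dens(p.t) \geq k$ by the position of $r.t$: those with $r.t < l_j$ automatically cover all of $[l_j, p.t]$, because the sub-interval has length at most $\tau$ and so $r.t + \tau \geq p.t \geq t$ and $r.t < l_j \leq t$ for every $t \in [l_j, p.t]$. The delicate part is the witnesses with $r.t \in [l_j, p.t]$: such a record $r$ may have entered $\Gamma$ either through line~20 (as a processed record) or through line~13 (as a top-$k$ companion of a non-durable record), and I must use the invariant that a non-auxiliary pop at $r.t$ would have excluded $r.t$ from the current $[l_j, r_j]$---forcing $r$ itself to have been auxiliary in $M_j$ or in some ancestor sub-interval containing $[l_j, r_j]$---and then appeal to the inductive hypothesis for that earlier step to transport $\dens(t) \geq k$ over a prefix containing $[l_j, r.t]$. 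Carefully bookkeeping these two channels of witnesses, together with the descending-score discipline of pops from $H$ and the length-$\tau$ bound on sub-intervals, is the crux of the argument.
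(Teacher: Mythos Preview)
Your proposal is correct and takes essentially the same approach as the paper: the same case split on $\card{P([l_i,r_i])}$, the same inductive claim on auxiliary pops (density $\geq k$ on $[l_j, p.t]$), its application to the largest-timestamp auxiliary $p_R$, and the tail argument via $M_i$'s own $k$ blocking intervals all match the paper exactly. Your witness-position decomposition for the inductive step is a minor technical variant of the paper's case analysis on the history of $p_h$, but both rest on the same structural fact that a non-auxiliary pop removes its timestamp from every surviving sub-interval.
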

\begin{proof}
If $\card{P([l_i, r_i])}\leq k$ then the algorithm visits all records in $P([l_i, r_i])$, since we always consider the top-$k$ records in $[l_i, r_i]$. If $\card{P([l_i, r_i])}> k$ then we show that when $M_i$ is empty every timestamp in $[l_i, r_i]$ has density at least $k$.

We prove the following argument by induction: When the algorithm visits a new auxiliary record $p_j$ in a set $M_j$ then any timestamp in $[l_j, t_j]$ has density at least $k$. Let $p_1$ be the first auxiliary record that the algorithm finds and let $M_{i_1}$ be the set that it belongs to. Since $p_1$ is an auxiliary record we have that $\dens(p_1.t)\geq k$ at the moment we visit $p_1$. Furthermore, notice that the algorithm did not consider any other record in $[l_{i_1}, p_1.t]$ in a previous iteration so we can argue that the density of every record in $[l_{i_1}, p_1.t]$ is at least $k$. In addition, notice that it is not possible to find any durable record or any false check in $[l_{i_1}, p_1.t]$ in the future. As a result, if we visit $p_1$ again in the future it will be an auxiliary record in a set with left endpoint the same $l_{i_1}$ timestamp. Let $p_{h-1}$ be an auxiliary record that the algorithm visits in set $M_{i_{h-1}}$ and let assume that any record in $[l_{i_{h-1}}, p_{h-1}.t]$ has density at least $k$. Let $p_h$ be the next auxiliary record that the algorithm visits and let assume that it belongs in a set $M_{i_h}$. First assume that the algorithm has visited $p_h$ in a previous iteration. Let $M_f$ be the set that contained $p_h$ when the algorithm first visited $p_h$. At the moment when the algorithm first visited $p_h$, we had that $\dens(p_h.t)\geq k$ and from the induction hypothesis we have that every timestamp in $[l_f,p_h.t]$ had density at least $k$. Hence, there was no other durable record or false check in $[l_f, p_h.t]$ in the future. That means that $l_f=l_{i_h}$ and so it holds that every record in $[l_{i_h},p_h.t]$ has density at least $k$.
Next, assume that this is the first time that we visit the auxiliary record $p_h$.
If this is the first auxiliary record in $M_{i_h}$ we have that the density of every record in $[l_{i_h}, p_h.t]$ has density at least $k$ because $\dens(p_h.t)\geq k$ and there is no subinterval that starts in $[l_{i_h}, p_h.t]$. Then, we study the case where $p_h$ is not the first auxiliary record that the algorithm finds in set $M_{i_h}$. Let $p_u$ be the auxiliary record in $M_{i_h}$ with the largest timestamp just before the algorithm found $p_h$.
From induction hypothesis we know that the density of every record in $[l_{i_h}, p_u.t]$ is at least $k$. If $p_h.t\leq p_u.t$ then $[l_{i_h}, p_h.t]\subseteq [l_{i_h}, p_u.t]$ so any record in $[l_{i_h}, p_h.t]$ has density at least $k$. The last case to consider is when $p_h.t>p_u.t$. Since $\dens(p_h.t)\geq k$, and since there is no sub-inerval that starts in $(l_u,p_h.t)$ we have that every record in $[l_u, p_h.t]$ has density at least $k$. We conclude that the density of every timestamp in $[l_{i_h}, p_h.t]$ is at least $k$.

Now we are ready to prove our lemma. If $\card{P([l_i, r_i])}> k$ and $M_i$ is empty it means that the algorithm has already considered $k$ auxiliary records in $[l_i, r_i]$. Let $p_u$ be the auxiliary record in $M_i$ with the largest timestamp. From the induction we have that the density of every record in $[l_i, p_u.t]$ is at least $k$. Furthermore, the algorithm has visited $k$ auxiliary records and hence it has added at least $k$ blocking intervals with left endpoint in $[l_i, p_u.t]$. All the intervals we add have length $\tau$ and $r_i-l_i\leq \tau$ so all timestamps in the interval $[p_u.t, r_i]$ have density at least $k$. We conclude that the density of each record in $[l_i, r_i]$ is at least $k$.
\end{proof}

\begin{proof}[Proof of Lemma~\ref{lemma:psi:correctness}]
Let $S^*$ be the durable records in $\Qinterval$.
We show that $S\subseteq S^*$ and $S^*\subseteq S$ showing that $S=S^*$.
The algorithm always checks by running a top-$k$ query if a record should be in the solution (line 8 of Algorithm~\ref{algo:psi}) so $S\subseteq S^*$.

Next we show the other direction.
The algorithm visits the records in descending (on score) order so it is not possible that a record $p\in S^*$ is blocked by at least $k$ records before the algorithm visits $p$.
Before we argue that $S^*\subseteq S$ we also need to make sure that the algorithm does not miss any durable record in a sub-interval $[l_j, r_j]$  that corresponds to an empty set $M_j$.
In Lemma~\ref{apndx:lem1} we showed that all timestamps in $[l_j, r_j]$ have density at least $k$ so there is no additional durable record in this sub-interval. Hence $S^*\subseteq S$, and overall we conclude that $S=S^*$.
\end{proof}

Let $p_i$ be a false check that the algorithm just found, and let $P_i'$ be the top-$k$ records in $[p_i.t-\tau, p_i.t)$, as we had in the algorithm. Let $p_i'$ be the record in $P_i'$ with the largest timestamp. We say that $p_i$ is \emph{assigned} to $p_i'$. If $p_i.t'<a$, where $a$ is the timestamp such that $\Qinterval=[a, b]$, then $p_i$ is assigned to $a$.
The next lemma follows from the definition.
\begin{lemma}\label{apndx:lem2}
Assume that the algorithm just found the false check $p_i$. After adding all the blocking intervals from $P_i'$ we have that the density of every timestamp in $[p_i'.t, p_i.t]$ is at least $k$.
\end{lemma}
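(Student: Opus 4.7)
The plan is to prove the lemma directly from the definitions of $P_i'$ and $p_i'$, essentially by unfolding what it means for $p_i$ to be a false check and then observing that every blocking interval introduced by a record of $P_i'$ must cover the entire target interval $[p_i'.t, p_i.t]$. Since $p_i$ is a false check, the top-$k$ query $Q(k, [p_i.t-\tau, p_i.t])$ returned $k$ records (the set $P_i'$) all with score strictly greater than $\score(p_i)$, and in particular $|P_i'|=k$ and $q.t \in [p_i.t-\tau, p_i.t)$ for every $q\in P_i'$.

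First I would fix an arbitrary $q\in P_i'$ and analyze its blocking interval $[q.t,\, q.t+\tau]$. By the definition of $p_i'$ as the element of $P_i'$ with the largest timestamp, we have $q.t \leq p_i'.t$, which handles the left endpoint. For the right endpoint, using $q.t \geq p_i.t - \tau$ we get $q.t+\tau \geq p_i.t$. Combining the two bounds, the blocking interval from $q$ satisfies $[p_i'.t, p_i.t]\subseteq [q.t,\, q.t+\tau]$. Thus every one of the $k$ records in $P_i'$ contributes (or has already contributed) a blocking interval that completely contains $[p_i'.t, p_i.t]$.

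Next I would combine these $k$ coverings to conclude the density bound. After the algorithm adds all missing blocking intervals corresponding to $P_i'$ (some of which may have been added in earlier iterations, which is fine---the statement concerns the state \emph{after} this step), the total set $\Gamma$ of blocking intervals contains a blocking interval for every $q\in P_i'$, and each of these covers $[p_i'.t, p_i.t]$ in full. Hence every timestamp $t^*\in [p_i'.t, p_i.t]$ lies in at least $k$ blocking intervals, i.e.\ $\dens(t^*)\geq k$, which is exactly the claim.

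There is essentially no obstacle: the argument is a one-line computation once one unpacks the definitions. The only minor subtlety worth flagging in the written proof is that we should explicitly note that $|P_i'|=k$ (which follows from $p_i$ being a false check, so there are at least $k$ strictly higher-scoring records in the window) and that we count blocking intervals from $P_i'$ regardless of when they were inserted into $\Gamma$, so the conclusion holds after the current insertion step completes.
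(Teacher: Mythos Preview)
Your proposal is correct and matches the paper's approach: the paper does not give a detailed proof of this lemma at all, stating only that it ``follows from the definition,'' and your argument is exactly the unpacking of definitions one would expect---each $q\in P_i'$ has $q.t\in[p_i.t-\tau,\,p_i'.t]$, so its blocking interval $[q.t,q.t+\tau]$ contains $[p_i'.t,p_i.t]$, and there are $k$ such intervals.
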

We show the next lemma which is useful to bound the number of false checks.
\begin{lemma}\label{apndx:lem3}
Let $p_i$ be a false check and $p_i'$ be the record that it is assigned to. Before adding the $k$ blocking intervals from all records in $P_i'$ (as defined above) we have that either $\dens(p_i'.t)\geq k$, or $p_i'\in S$ and $\dens(p_i'.t)<k$, or $p_i'=a$.
\end{lemma}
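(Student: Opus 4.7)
The plan is to case-split on the processing history of $p_i'$ under Algorithm~\ref{algo:psi}. If $p_i'.t < a$ then by the assignment convention we set $p_i' = a$ and condition (iii) holds. Otherwise $p_i'.t \in \Qinterval$, and since every record in $P_i'$ has strictly larger score than $p_i$ while S-Hop extracts from the max-heap $H$ in decreasing-score order, $p_i'$ was either already popped from $H$ in an earlier iteration, or was never inserted into $H$ at all. I would dispatch these two cases separately.

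In the ``already popped'' subcase I would branch on the arm of line~7 that fired when the algorithm processed $p_i'$. If $\dens(p_i'.t) \geq k$ at that moment, then $p_i'$ was an auxiliary record and the monotone growth of $\Gamma$ gives condition (i) now. If instead $\dens(p_i'.t) < k$ at that moment, the algorithm ran $Q(k,[p_i'.t - \tau, p_i'.t])$; when $p_i'$ turned out durable we obtain $p_i' \in S$, yielding condition (i) or (ii) depending on the present density, while when $p_i'$ was itself a false check the returned top-$k$ set sat entirely inside $[p_i'.t - \tau, p_i'.t)$, so each of the $k$ length-$\tau$ blocking intervals anchored at those timestamps contains $p_i'.t$. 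By the first-encounter convention in lines~12--13 and 21--22, each such record already has its blocking interval in $\Gamma$ by now, so $\dens(p_i'.t) \geq k$ and condition (i) follows.

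In the ``never inserted'' subcase I would argue that the nested chain of sub-intervals produced around $p_i'.t$ by successive splits must terminate in some $M_{j_m}$ becoming empty strictly before the current iteration. Indeed, had $p_i'$ ever sat inside some active $M_{j'}$, it would eventually percolate to the top of that sub-interval's pending list and thence onto $H$, where its strictly-larger-than-$p_i$ score would force it to be popped before $p_i$, a contradiction. Applying Lemma~\ref{apndx:lem1} at the moment $M_{j_m}$ was emptied, and observing that the alternative ``all records of $P([l_{j_m}, r_{j_m}])$ are visited'' is incompatible with $p_i'$ being unvisited, we conclude that every timestamp in $[l_{j_m}, r_{j_m}]$ has density at least $k$; in particular $\dens(p_i'.t) \geq k$, which is condition (i).

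The principal obstacle is the ``never inserted'' subcase: formally pinning down why the sub-interval containing $p_i'.t$ must have been fully abandoned by the time we process $p_i$, which relies on score-monotonicity of extractions from $H$ combined with Lemma~\ref{apndx:lem1}. A secondary subtlety lies in the false-check portion of ``already popped'': one must observe that all $k$ blocking intervals derived from the top-$k$ set at $p_i'$ are present in $\Gamma$ regardless of whether each constituent record was reached first via a heap pop (line~22) or first via a top-$k$ output (line~13), which is exactly what the first-encounter convention guarantees.
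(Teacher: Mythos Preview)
Your proposal is correct and follows essentially the same case structure as the paper's proof: both distinguish whether $p_i'$ was previously processed (popped from $H$) or not, subdivide the processed case into durable/false-check/auxiliary, and in the unprocessed case invoke Lemma~\ref{apndx:lem1} for emptied sub-intervals while deriving a score contradiction for still-active ones. The paper packages the argument as a proof by contradiction and cites Lemma~\ref{apndx:lem2} for the false-check branch, whereas you re-derive that density bound inline; these are cosmetic differences.

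There is one small gap in your ``never inserted'' subcase. Your percolation argument rules out only the possibility that $p_i'$ itself once sat in some active $M_{j'}$, but it does not dispatch the scenario where the current sub-interval $[l_{j_m},r_{j_m}]$ containing $p_i'.t$ still has a non-empty $M_{j_m}$ in which $p_i'$ never appeared (because $p_i'$ was not among the top-$k$ of that sub-interval). That case is handled by the same score contradiction: the representative of $M_{j_m}$ currently in $H$ has score at least $\score(p_i')>\score(p_i)$, so $p_i$ could not be the heap top. The paper's corresponding case~(b) is equally terse on this point, so the fix is immediate and does not change your overall approach.
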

\begin{proof}
If $p_i'.t<a$ then from the definition $p_i'$ is $a$. (Notice that if we find more than one false checks that are assigned to $a$ then $\dens(a)>k$, so this case can be considered the same as $\dens(p_i'.t)>k$.)

Next, we assume that $p_i'.t\geq a$.
We prove the lemma by contradiction.
Let $p_i'$ be a record that does not belong in $S$ and $\dens(p_i'.t)<k$.
Notice that $\score(p_i')>\score(p_i)$.
Since $p_i'$ is not in $S$ it can be either: a false check, an auxiliary record, or a record that the algorithm has not visited before.
If $p_i'$ is a false check then from Lemma~\ref{apndx:lem2} we have that $\dens(p_i'.t)\geq k$ at the moment that we found $p_i'$ for first time, which is a contradiction.
If $p_i'$ is an auxiliary record then from Lemma~\ref{apndx:lem1} we have that $\dens(p_i'.t)\geq k$, which is a contradiction.
If $p_i'$ is a record that the algorithm has not considered before then there are two cases:
a) $p_i'$ belongs in an interval $[l_j, r_j]$ of a set $M_j$ that we have removed from $M$ because we have already visited its top-$k$ records.
From Lemma~\ref{apndx:lem1} we know that $\dens(p_i'.t)\geq k$, which is a contradiction.
b) $p_i'$ belongs in an interval $[l_j, r_j]$ of a set $M_j$ that there still exists in $H$. Since $\score(p_i')>\score(p_i)$ it means that $p_i$ is not the record with the highest score among the sub-intervals that are not removed from $M$, which is a contradiction.

In any case we proved that either $p_i'$ has density at least $k$, or $p_i'$ has density less than $k$ and $p_i'\in S$, or $p_i'=a$.
\end{proof}

\begin{proof}[Proof of Lemma~\ref{lemma:ps-bound}]
If a false check $p_i$ is assigned to a durable record with density less than $k$ then we call it type-1 false check.
Otherwise, it is a type-2 false check.

Let $p_i$ be a type-1 false check so we have that $p_i'\in S$ and $\dens(p_i'.t)<k$. After adding all the $k$ segments from $P_i'$ we have that $\dens(p_i'.t)\geq k$. The next time that $p_i'$ will be assigned by another false check the density of $p_i'$ will be at least $k$ so it will be a type-2 false check. Hence, it is straightforward to bound the number of type-1 false checks, which is at most $O(\card{S})$.

Next we focus on type-2 false checks.
Let $[l, r]$ be one of the initial disjoint $\tau$-length windows from line 2 of Algorithm~\ref{algo:psi}.
We show that after finding $k$ type-2 false checks in $[l, r]$ the density of all timestamps in $[l, r]$ is at least $k$. If that is the case then the algorithm will not find any other false check in $[l, r]$.

Let $t$ be any timestamp in $[l, r]$. We show that $\dens(t)\geq k$ after finding $k$ type-2 false checks in $[l, r]$.
If one of the false checks in $[l, r]$ lies on $t$ then we already have that $\dens(t)\geq k$.
Let assume that the algorithm finds $k_1$ type-2 false checks in $[l,t)$ and $k_2$ type-2 false checks in $(t, r]$, where $k_1+k_2=k$.
If $k_1\geq k$ then $\dens(t)\geq k$, so the interesting case is when $k_1<k$ and $k_2\geq 1$.
Let $\chi$ be the total number of blocking intervals that the algorithm has added having their right-endpoint in $[t, r]$ after finding all the $k$ type-2 false checks in $[l, r]$, and let $X$ be the set of those intervals.
We have that $\dens(t)\geq k_1+\chi$. We show that $k_1+\chi\geq k$ or equivalently $k_2\leq \chi$.

Let $p_i$ be a type-2 false check that the algorithm just found in $(t, r]$. Let $p_i'$ be the record that $p_i$ is assigned to, as we defined above. If $p_i'.t\leq t$ then we immediately have that $\dens(t)\geq k$ after adding the at most $k$ new segments from the set $P_i'$ (Lemma~\ref{apndx:lem2}), so this case is not interesting.
(Notice that if $p_i'.t<a$, before $p_i'$ is set to be $a$, then this is always the case since $t\geq a$).

Now, we assume that for each $p_i$ which is a type-2 false check in $(t,r]$, it holds that $p_i'.t\in (t, p_i.t)$.
The main idea to prove that $k_2\leq \chi$ is the following: Each time that the algorithm finds a type-2 false check in $(t,r]$ we find an unmarked interval in $X$ and we mark it.
In particular, we show that there always be such an unmarked segment in $X$ with its right endpoint in $[p_i'.t, p_i.t)$.
Since $p_i$ is a type-2 false check we have that $\dens(p_i'.t)\geq k$ and $\dens(p_i.t)<k$,
at the moment that the algorithm visits $p_i$ (before adding the at most $k$ segments from $P_i'$).
Let $Z_1$ be the current blocking intervals with right endpoint in $[p_i'.t,p_i.t)$ and $z_1=\card{Z_1}$. Let $Z_2$ be the current blocking intervals with left endpoint in $(p_i'.t, p_i.t]$, and $z_2=\card{Z_2}$. Let $B$ be the current blocking intervals with left endpoint in $[p_i.t-\tau, p_i.t]$.
We have that $\dens(p_i.t)<k\Leftrightarrow \card{B}<k$, (1). We also have $\dens(p_i'.t)\geq k\Leftrightarrow \card{B}-z_2+z_1\geq k$, (2).
From (1), (2), we have that $z_1>z_2\Leftrightarrow z_1\geq z_2+1$.
By definition, notice that the false checks with time instance $\leq p_i'.t$ cannot mark a segment in $Z_1$.
Furthermore, a previous false check with timestamp at the right of $p_i.t$ cannot mark a segment in $Z_1$: Let $p_j$ be a false check that the algorithm found in a previous iteration in $(p_i.t, r]$ and let $p_j'$ be the record that it is assigned to. If $p_j'.t>p_i.t$ then the marking process does not mark any segment in $Z_1$. Otherwise, if $p_j'.t\leq p_i.t$ then the density of all records in $[p_j'.t,p_i.t]\cup [p_i.t, p_j'.t]$ would be at least $k$ after the algorithm adds the segments from $P_j'$, which is a contradiction because $\dens(p_i.t)<k$ when we visit $p_i$.
Hence only false checks in $(p_i'.t, p_i.t]$ can mark segments in $Z_1$.
Recall that $Z_2$ are the current segments with left endpoints in $(p_i'.t, p_i.t]$.
Even if all segments in $Z_2$ were created by type-2 false checks and even if all of them mark segments from $Z_1$, we showed that $z_1\geq z_2+1$, so we can always find a new unmarked segment in $Z_1$. Notice that any segment in $Z_1$ has its right endpoint in $[p_i'.t,p_i.t)$ and since all the segments have length $\tau$, they contain $t$ and hence they belong in $X$. Each time that we find a type-2 false check in $(t, r]$ we mark a new segment in $X$, so $k_2\leq \chi$ and we conclude that $\dens(t)\geq k$.

Recall that $t$ can be any record in $[l, r]$, so we showed that after finding $k$ type-2 false checks in $[l, r]$ the density of every timestamp in $[l,r]$ is at least $k$. As a result, the algorithm will not find any other false check in $[l, r]$.
There are at most $\big\lceil\frac{\card{\Qinterval}}{\tau}\big\rceil$ disjoint $\tau$-length windows in $\Qinterval$ so the number of type-2 false checks is bounded by $O(k\big\lceil\frac{\card{\Qinterval}}{\tau}\big\rceil)$
The overall number of false checks along with the durable records is $O(\card{S}+k\big\lceil\frac{\card{\Qinterval}}{\tau}\big\rceil)$.
\end{proof}

\subsection{Missing Proofs of Section~\ref{sec:answer-size}}\label{appendix:proof3}

\begin{proof}[Proof of Lemma~\ref{lemma:expected-Skyband-size}]
We show the result extending the main ideas from \cite{bentley1977average}.
Let $P(\Qinterval)=\{\pnt_{j+1},\ldots, \pnt_{j+L}\}$. For $\pnt_{i}\in P(\Qinterval)$, let $X_i$ be a random variable which is $1$ if $\pnt_i\in \mathcal{C}$, and $0$ otherwise.
From linearity of expectation we have that $\Expec{\cardin{C}}=\Expec{\sum_{i=j+1}^{j+\card{\Qinterval}}X_i}=\sum_{i=j+1}^{j+\card{\Qinterval}}\Expec{X_i}=\sum_{i=j+1}^{j+\card{\Qinterval}}\Prob{X_i=1}$.
We focus on computing $\Prob{X_i=1}$.
Let $P_i=P([\pnt_{i-\tau}.t, \pnt_i.t])=\{\pnt_{i-\tau}, \ldots, \pnt_{i-1}, \pnt_i\}$.
By independence we have that the probability of each point in
$P_i$ to be in the $k$-skyband of $P_i$ is the same,
so we can compute $\Prob{X_i=1}$ by first finding the expected size of the $k$-skyband in $P_i$ and then divide it by the number of points, $\tau+1$.

Let $B_i$ be the $k$-skyband of the $\tau+1$ points $P_i$.
Let $V_j\subset N$ for $1\leq j\leq d$, with $\cardin{V_j}=\tau+1$ such that $V_j$ contains the values that are assigned to the $j$-th coordinate of the points in $P_i$.
We compute $\Expec{\cardin{B_i}\mid V_1, \ldots, V_d}$.
Let $A(\tau+1,d)$ be the expected size of the $k$-skyband of a set $\bar{P}$ with $\tau+1$ points in $\Re^d$ in the $d$-dimensional random permutation model. Notice that $A(\tau+1,d)=\Expec{\cardin{B_i}\mid V_1, \ldots, V_d}$.
We compute $A(\tau+1,d)$ as follows. From linearity of expectation we can compute the  probability that a point in $\bar{P}$ belongs in the $k$-skyband and take the sum of them,
$A(\tau+1,d)=\sum_{\bar{p}\in \bar{P}}\Prob{\bar{p}\in k\text{-skyband of } \bar{P}}$.
Assume that a point $\bar{p}\in \bar{P}$ has the $g$-th largest first coordinate among the points in $\bar{P}$. Notice that this can happen with probability $\frac{1}{\tau+1}$.
Since the first coordinate of the $g$-th point ($\bar{p}$) is greater than the first coordinates of $g-1$ points it cannot be dominated by any of those. Therefore, the $g$-th point belongs in the $k$-skyband if and only if its remaining $d-1$ coordinates belong in the $k$-skyband among the points in $\bar{P}$ with the $g$-th through the $(\tau+1)$-th largest first coordinate.
The probability that the $g$-th point is in the $k$-skyband is, by independence, the expected number of the $k$-skyband in the remaining points and coordinates, which is $A(\tau+1-g+1,d-1)$, divided by the total number of the remaining points in the set which are $\tau+1-g+1$. Notice that $A(k',y)=k'$ for $k'\leq k$ and any $y$.
Hence, we have $A(\tau+1,d)=\sum_{j=1}^{\tau+1}\sum_{g=1}^{\tau+1}\frac{1}{\tau+1}\frac{A(\tau+1-g+1, d-1)}{\tau+1-g+1}=\frac{1}{\tau+1}\sum_{j=1}^{\tau+1}\sum_{J=1}^{\tau+1}\frac{A(J, d-1)}{J}=\sum_{J=1}^{\tau+1}\frac{A(J, d-1)}{J}$.
Notice that $A(x,y)$ is monotonically increasing in $x$, so if $x_1\leq x_2$, then $A(x_1,y)\leq A(x_2,y)$ for any $y$.
Furthermore, we note that $A(\tau+1,1)=k$ since in one dimension the top-$k$ points belong in the $k$-skyband.
We have,
$A(\tau+1,d)=\sum_{J=1}^{\tau+1}\frac{A(J, d-1)}{J}\leq A(\tau+1,d-1)\sum_{J=1}^{\tau+1}\frac{1}{J}\leq A(\tau+1,d-1)O(\log \tau)$. Iterating this recurrence on $d$ until $A(\tau+1,1)=k$ gives the upper bound $A(\tau+1,d)=O(k\log^{d-1}\tau)$.

We conclude that $\Expec{\cardin{B_i}\mid V_1,\ldots, V_d}=O(k\log^{d-1}\tau)$.
Notice that $\Prob{V_1,\ldots, V_d}=\frac{1}{{n\choose \tau+1}^d}$ and all possible sets of $V_1,\ldots, V_d$ are ${n\choose \tau+1}^d$ so we have that $\Expec{\cardin{B_i}}=O(k\log^{d-1}\tau)$, and $\Prob{X_i=1}\sim \frac{O(k\log^{d-1}\tau)}{\tau+1}$.
Overall we conclude that $\Expec{\cardin{\mathcal{C}}}=\sum_{i=j+1}^{j+\card{\Qinterval}}\Prob{X_i=1}=O(\frac{k\card{\Qinterval}}{\tau}\log^{d-1}\tau)$.
\end{proof}

\end{document}